\newtheorem{theorem}{Theorem}[section]
\newtheorem{lemma}[theorem]{Lemma}
\newtheorem{claim}{Claim}
\numberwithin{theorem}{section}
\newtheorem{remark}[theorem]{Remark}
\newcommand{\cupdot}{\mathbin{\mathaccent\cdot\cup}}
\newcommand{\pedge}{{$+$edge}\xspace}
\newcommand{\cor}{\textsc{Correlation Clustering}\xspace}
\newcommand{\ppmnotbad}{1.946}
\newcommand{\ppmbad}{2}
\newcommand{\mmmrat}{1}
\newcommand{\ppprat}{1.5}
\newcommand{\pmmrat}{1.5}
\newcommand{\ppprats}{1.9}
\newcommand{\pmmrats}{1.65}
\newcommand{\degrat}{1}
\newenvironment{cproof}
{\begin{proof}
 [Proof.]
 \vspace{-1.5\parsep}
}
{ \end{proof}}
\newcommand{\cc}{\textsc{Correlation Clustering}\xspace}
\def\eps{\varepsilon}
\def\calT{{\cal T}}
\def\calD{{\cal D}}
\def\calR{{\cal R}}
\def\calB{{\cal B}}
\def\calC{{\cal C}}
\def\1{\mathbb{I}}
\newcommand{\E}{{\mathbb{E}}}
\newcommand{\R}{{\mathbb{R}}}
\newcommand{\dx}{\frac{x^2}{\delta}}
\newcommand{\dy}{\frac{y^2}{\delta}}
\newcommand{\dz}{\frac{z^2}{\delta}}
\newcommand{\threshold}{0.1}
\newcommand{\ratio}{1.994}
\newcommand{\finaleta}{1/12}
\newcommand{\finalgamma}{0.054}
\newcommand{\cost}{\mathsf{cost}}
\newcommand{\costr}{\mathsf{cost}^r}
\newcommand{\costi}{\mathsf{cost}^i}
\newcommand{\costs}{\mathsf{cost}^s}
\newcommand{\costf}{\mathsf{cost}^f}
\newcommand{\lp}{\mathsf{lp}}
\newcommand{\lpr}{\mathsf{lp}^r}
\newcommand{\lpi}{\mathsf{lp}^i}
\newcommand{\lps}{\mathsf{lp}^s}
\newcommand{\lpf}{\mathsf{lp}^f}
\begin{document} 
\title{Correlation Clustering with Sherali-Adams}

 \author{
Vincent Cohen-Addad\thanks{Google Research.}
\and Euiwoong Lee\thanks{University of Michigan.}
\and Alantha Newman\thanks{Laboratoire G-SCOP (CNRS, Grenoble-INP).
    Supported in part by French ANR Project DAGDigDec (ANR-21-CE48-0012).}
}

\date{}

\maketitle

\begin{abstract}

Given a complete graph $G = (V, E)$ where each edge is labeled $+$ or
$-$, the \cc problem asks to partition $V$ into clusters to minimize
the number of $+$edges between different clusters plus the number of
$-$edges within the same cluster.  \cc has been used to model a large
number of clustering problems in practice, making it one of the most
widely studied clustering formulations. The approximability of \cc has
been actively investigated~\cite{BBC04, CGW05, ACN08}, culminating in
a $2.06$-approximation algorithm~\cite{CMSY15}, based on rounding the
standard LP relaxation.  Since the integrality gap for this
formulation is 2, it has remained a major open question to determine
if the approximation factor of 2 can be reached, or even breached.

In this paper, we answer this question affirmatively by showing that
there exists a $(\ratio + \eps)$-approximation algorithm based on
$O(1/\eps^2$) rounds of the Sherali-Adams hierarchy. In order to round
a solution to the Sherali-Adams relaxation, we adapt the {\em
  correlated rounding} originally developed for CSPs~\cite{BRS11,
  GS11, RT12}.  With this tool, we reach an approximation ratio of
$2+\eps$ for \cor.  To breach this ratio, we go beyond the traditional
triangle-based analysis by employing a {\em global charging scheme}
that amortizes the total cost of the rounding across different
triangles.
\end{abstract}

\section{Introduction}
Clustering is a central problem in unsupervised machine learning and
data mining. Given a dataset and information regarding the similarity
of pairs of elements, a ``good'' clustering is a partition of the
elements into groups such that similar elements belong to the same
group, while dissimilar elements belong to different groups.  Since
its introduction by Bansal, Blum, and Chawla~\cite{BBC04}, \cor has
been one of the most widely studied formulations for clustering. Given
a graph $G = (V, E)$ where each edge is either labeled $+$ or $-$, the
goal is to find a clustering (partition) $(V_1, \dots, V_k)$ of $V$
that minimizes the number of unsatisfied edges, namely the $+$edges
between different clusters and the $-$edges within the same cluster.
Thanks to the simplicity and modularity of the formulation, \cor has
found a spectacular number of applications, e.g., finding clustering
ensembles \cite{bonchi2013overlapping}, duplicate detection
\cite{arasu2009large}, community mining \cite{chen2012clustering},
disambiguation tasks \cite{kalashnikov2008web}, automated labelling
\cite{agrawal2009generating, chakrabarti2008graph} and many more.

When $G$ is a general graph, there is an $O(\log n$)-approximation
algorithm~\cite{CGW05, DEFI06} via the equivalence to
\textsc{Undirected Multicut}, which is hard to approximate within any
constant factor assuming the Unique Games Conjecture
(UGC)~\cite{CKKRS06}. For the maximization version where the goal is
to maximize the number of $+$edges within the same cluster plus the
number of $-$edges between different clusters, Charikar, Guruswami,
and Wirth~\cite{CGW05} and Swamy~\cite{Swamy04} gave
$0.766$-approximation algorithms based on rounding semidefinite
programs.

A lot of effort has focused on understanding the approximability of
the original version introduced by \cite{BBC04}: the unweighted case
on a complete graph.  (For the rest of the paper, \cor denotes this
version.)  In this case, \cite{BBC04} gave a PTAS for the maximization
version and an $O(1)$-approximation for the minimization version.
Charikar, Guruswami, and Wirth gave a 4-approximation based on
rounding the standard linear programming (LP) relaxation and proved
APX-hardness~\cite{CGW05}.  Ailon, Charikar, and Newman gave a
combinatorial 3-approximation algorithm based on choosing random
pivots and a 2.5-approximation by combining this pivot based approach
with the standard LP relaxation~\cite{ACN08}.
  
The current best approximation ratio in this classic setting is
$2.06-\eps$ for some fixed $\eps > 0$ by Chawla, Makarychev, Schramm,
and Yaroslavtsev~\cite{CMSY15}, which extended the pivot rounding
framework of \cite{ACN08} with advanced functions that convert LP
values to rounding probabilities.  The standard LP relaxation for \cor
has an integrality gap of $2$~\cite{CGW05}.  The LP-based
approximation algorithms (i.e., the 4-approximation of \cite{CGW05},
the 2.5-approximation algorithm of~\cite{ACN08}, and the
2.06-approximation algorithm of~\cite{CMSY15}) each prove upper bounds
on the integrality gap of this LP.  Furthermore,~\cite{CMSY15} shows
that their rounding framework cannot yield an approximation ratio
better than $2.025$.

Thus, currently, even reaching the approximation threshold of 2 is an
interesting open problem.  In this paper, we overcome the
aforementioned barriers and give a $(\ratio + \eps)$-approximation
algorithm for \cor for any $\eps > 0$ using the Sherali-Adams
hierarchy.

\begin{theorem}
  \label{thm:main}
  For $\eps > 0$, there exists a $(\ratio + \eps)$-approximation algorithm for \cc  running in time $n^{O(1/\eps^2)}$.
  Moreoever, the integrality gap of the $O(1/\eps^2)$-round Sherali-Adams relaxation is at most $(\ratio + \eps)$.
\end{theorem}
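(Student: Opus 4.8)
The plan is to solve a constant-round Sherali-Adams relaxation, round it with a pivoting algorithm powered by \emph{correlated rounding} to first obtain a $(2+\eps)$-approximation through a triangle-by-triangle accounting, and then to push the ratio below $2$, down to $\ratio$, by replacing that accounting with a global charging argument. Concretely, I would start from the $r$-round Sherali-Adams lift of the natural $\{0,1\}$-program, in which for every vertex set $S$ with $|S| \le r$ there is a distribution $\mu_S$ over clusterings of $S$, these are consistent under marginalization, the level-$1$ marginals $x_{uv} := \Pr_\mu[u,v \text{ separated}]$ satisfy the triangle inequality, and the objective $\lp$ is the sum of $x_{uv}$ over $+$edges plus $1-x_{uv}$ over $-$edges; set $r = \Theta(1/\eps^2)$. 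The algorithm is a \KWC-style pivot procedure: while vertices remain, pick a uniformly random unclustered pivot $w$, condition $\mu$ on the realized clustering of a constant-size ``seed'' containing $w$, open a cluster around $w$ by placing each surviving vertex $u$ into it with a probability that is a fixed function of the conditional marginal of the pair $\{u,w\}$, then delete the cluster and recurse.

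\textbf{From correlated rounding to $2+\eps$.} The engine of the first part is the correlated-rounding lemma adapted from the CSP literature~\cite{BRS11,GS11,RT12}: conditioning on a uniformly random seed of $O(1/\eps^2)$ vertices drives the \emph{average}, over pairs $\{u,v\}$, of the total-variation distance between the joint conditional law of $(u,v)$ and the product of its two conditional marginals down to $\eps$. Granting this, the expected cost of the rounding decomposes, exactly as in the analysis of \KWC, over the \emph{bad triangles} (those with two $+$edges and one $-$edge): the expected cost paid on the edges of a bad triangle $T$ is at most $2$ times the relaxation's cost on the three pairs of $T$, plus an additive error controlled by the pairwise correlations inside $T$. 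Summing over all triangles, the correlation errors aggregate to $O(\eps)\cdot\lp$, which gives the $(2+\eps)$-approximation.

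\textbf{Breaching $2$: global charging.} The triangle-local bound is pinned at $2$ by the integrality-gap instance of~\cite{CGW05}. To reach $\ratio$ I would stop charging each unit of rounding cost entirely to the single triangle responsible for it and instead distribute the charge fractionally over a neighbourhood of triangles and edges. The structural point is a local rigidity: when the Sherali-Adams values around a bad triangle make its local ratio close to $2$, the triangle inequalities together with the correlated-rounding guarantee force the adjacent triangles into configurations whose local ratio is bounded well below $2$, so the deficit there can subsidize the surplus here. Turning this into a clean bound is a factor-revealing computation: for each local configuration one introduces variables for the actual rounding probabilities, imposes the Sherali-Adams constraints and the correlation bounds as linear constraints, and verifies that the worst case of the resulting optimization, after the amortization, equals $\ratio$.

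Combining the two parts yields expected cost at most $(\ratio + O(\eps))\cdot\lp \le (\ratio + O(\eps))\cdot\mathrm{OPT}$, and rescaling $\eps$ gives the stated ratio; since the rounding is applied to an arbitrary feasible Sherali-Adams point and $\lp$ lower-bounds $\mathrm{OPT}$, the same estimate bounds the integrality gap, while the running time is dominated by solving the $O(1/\eps^2)$-round relaxation, namely $n^{O(1/\eps^2)}$. I expect the global charging scheme of the third step to be the crux — choosing exactly which triangles and edges absorb the redistributed charge, proving the redistribution is simultaneously feasible everywhere, and driving the factor-revealing analysis down to the explicit constant $\ratio$ — whereas the $2+\eps$ bound should follow from a careful but fairly standard marriage of the \KWC analysis with existing correlated-rounding technology.
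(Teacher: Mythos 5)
Your skeleton coincides with the paper's: an $O(1/\eps^2)$-round Sherali-Adams relaxation, a pivot-based rounding that uses seed-conditioning (Raghavendra--Tan style) to get approximately correct pairwise behaviour, a triangle-by-triangle accounting giving $2+O(\eps)$, and then a global charging scheme to reach $\ratio$. The gap is at the crux. You assert a ``local rigidity'' — that a bad triangle forces nearby triangles to have ratio well below $2$ — and attribute it to ``the triangle inequalities together with the correlated-rounding guarantee.'' That mechanism cannot work: the integrality-gap instance for the standard LP satisfies every triangle inequality, every one of its $++-$ triangles has distances $(1/2,1/2,1)$ (i.e.\ is bad), and it has no better triangles to subsidize them, so no redistribution argument based on pairwise constraints plus rounding guarantees can beat $2$. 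What the paper actually uses is the consistency of the Sherali-Adams \emph{local distributions on four vertices}: if $(p,u,v)$ and $(p,v,w)$ are both bad triangles centered at $p$, then the 4-point distribution forces $y_{uw}\ge 1/2-5\eta$, so $(p,u,w)$ is a \emph{chargeable} triangle with ratio at most $2-\gamma$ and $\lp$ mass bounded below. On top of that one needs a per-center counting lemma (the bad triangles at a center form a graph with no such ``triangles''; a BFS/Hall-type matching argument shows the number of bad triangles centered at $p$ is at most the number of chargeable ones centered at $p$), plus quantitative bounds $\zeta_l,\zeta_u$ on the $\lp$ mass of chargeable and bad triangles, to turn the count into a feasible charge $\tau$ and the explicit constant $1.994$. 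None of this is a routine factor-revealing optimization; the 4-point claim and the counting lemma are the missing ideas, and your plan as stated would stall exactly there.

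Two secondary points. First, your algorithm rounds \emph{every} surviving vertex by a function of its conditional pair marginal after seed-conditioning; the paper deliberately restricts correlated rounding to \emph{medium} $+$edges and rounds $-$edges and short/long $+$edges independently with specific functions ($1-\sqrt{x}$, $1-x^2/\delta$, $1-x$), because correlating $-$edges gives an unbounded ratio on $---$ triangles, and because the $O(1/\sqrt{r})$ additive error per correlated pair must be absorbed by triangles whose $\lps$ value is bounded below (at least $2\delta^2$), which fails if short or long edges participate. Second, with fractional values the cost does not decompose only over $++-$ triangles: all triangle types and degenerate pairs contribute, and the per-type bounds (e.g.\ $1.5$ for $+++$, $1$ for $---$) are exactly what leaves slack for the charging. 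These are fixable, but they are part of why even the $2+\eps$ step is not purely a standard marriage of \KWC with correlated rounding.
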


While we will present our algorithm as a randomized algorithm, it can
be derandomized using the standard method of conditional
expectation. See Section~\ref{sec:derandomization} for details. In
order to achieve the result, we introduce the following two techniques
for \cor.  Our result also implies a marginally better constant factor
approximation for the problem of fitting a tree metric or an
ultrametric through the framework
of~\cite{DBLP:conf/focs/Cohen-Addad0KPT21,DBLP:journals/siamcomp/AilonC11}.

\begin{itemize}
\item To improve beyond the integrality gap of the standard LP, we
  naturally use the {\em Sherali-Adams hierarchy} tailored for \cor,
  defined in Section~\ref{sec:sa}. Previous algorithms~\cite{ACN08,
    CMSY15} proceed by sampling a random pivot $p \in V$ in each
  iteration and independently deciding whether $v \in V \setminus \{ p
  \}$ belongs to $p$'s cluster or not. In order to use the power of
  Sherali-Adams, we adapt the {\em correlated rounding} that has been
  used for \textsc{Constraint Satisfaction Problems}
  (CSPs)~\cite{BRS11, GS11, RT12}. One of the main advantages is that
  for $u, v \in V \setminus \{ p \}$, one can ensure that $\Pr[u\mbox{
      and }v\mbox{ belong to }p\mbox{'s cluster}]$ is approximately
  equal to the value predicted by the Sherali-Adams solution in an
  amortized sense. See Section~\ref{sec:algo} for the description of
  the algorithm.

\item Previous analyses~\cite{ACN08, CMSY15} employ the elegant {\em triangle-based} analysis that bounds the ratio 
\begin{equation}
\frac{\cost_u(v, w) + \cost_v(w, u) + \cost_w(u, v)}{\lp_u(v, w) + \lp_v(w, u) + \lp_w(u, v)},
\label{eq:basicratio}
\end{equation}
for each triangle $(u, v, w)$, where $\cost_u(v, w)$ is the
probability that $(v, w)$ is violated when $u$ is pivot, and $\lp_u(v,
w)$ is the probability that $v$ or $w$ belongs to $u$'s cluster
(removing $(v, w)$ from the instance) times the LP contribution of
$(v, w)$. (See Section~\ref{subsec:setup} for this basic setup.) The
previous analyses bound the above ratio for every triangle
individually. With our rounding algorithm, the
ratio~\eqref{eq:basicratio} is already at most $2$ for every triangle,
and there is only one type of a {\em bad triangle} that has a ratio
close to $2$ (i.e., $++-$ triangles with LP values close to $0.5, 0.5,
1$ respectively, which are bad triangles for the previous rounding
algorithms as well). We prove that the number of such bad triangles is
not large compared to {\em chargeable triangles} that have
significantly smaller ratios but still with large denominators. This
allows a global charging scheme where we show that the total ratio
(the sum of numerators over all triangles / the sum of denominators
over all triangles) is strictly less than
$2$. Sections~\ref{sec:charging} and~\ref{subsec:finish} show how we
use this scheme to finish the analysis.

\end{itemize}

\subsection{Further Related Work}

The pivot-based algorithm of Ailon et al.~\cite{ACN08} has been
revisited in terms of derandomization~\cite{van2009deterministic},
parallelism~\cite{chierichetti2014correlation}, for classification
with asymmetric error costs~\cite{jafarov2020correlation}, and for
clustering with categorical rather than binary relationships given
between elements~\cite{anava2015improved,bonchi2013overlapping}, to
name a few settings in which it has been applied and adapted.
A related objective function which maximizes the difference between
the satisfied and unsatisfied edges has been
studied~\cite{charikar2004maximizing,alon2006quadratic}.
The
\cor problem has also been studied in an online
setting~\cite{mathieu2010online} and with respect to local
guarantees~\cite{puleo2016correlation,charikar2017local,kalhan2019correlation,jafarov2021local}.
Recent progress has lead to constant factor approximation
algorithms for the problem in the massively-parallel computation model~\cite{cohen2021correlation,abs-2205-03710}, in the streaming setting~\cite{DBLP:conf/innovations/Assadi022}, online
setting~\cite{Cohen-AddadLMP22}, and with differential privacy
guarantees~\cite{bun2021differentially,DBLP:journals/corr/abs-2203-01440,
  Daogao2022}.

Besides complete graphs, other special classes of graphs have been
considered, including complete $k$-partite
graphs~\cite{ailon2012improved, CMSY15} and the weighted case where
the weights of $-$edges satisfy the triangle
inequality~\cite{gionis2007clustering}.  The result for complete
$k$-bipartite graphs match the integrality gap of the standard
LP. When the number of clusters $k$ is bounded, Giotis and
Guruswami~\cite{giotis2006correlation} and Karpinski and
Schudy~\cite{karpinski2009linear} showed that a PTAS exists.

In terms of using the Sherali-Adams hierarchy to design approximation
algorithms, there have been numerous negative
results~\cite{charikar2009integrality,georgiou2009optimal,karlin2011integrality}
as well as some applications for designing
algorithms~\cite{yoshida2014approximation,aprile2020simple,o2019sherali,hopkins2020subexponential}.

\section{Preliminaries}

An instance of \cc is a complete graph $G = (V, E)$, where $E = E^+
\cup E^-$ and $E^+\cap E^- = \emptyset$, and the goal is to compute a
partition $\{C_1,\ldots,C_k\}$ of $V$ minimizing the number of the
$+$edges $(u,v)$ where $u \in C_i$ and $v \in C_j$, $i \neq j$, plus
the number of the $-$edges $(u,v)$ where $u,v \in C_i$.  The following
standard LP relaxation has been used by most of the previous
work~\cite{CGW05, ACN08, CMSY15}.

\begin{align*}
  \min \sum_{ij \in E^+}x_{ij} &+ \sum_{ij \in E^-} (1-x_{ij})\\
   x_{ij} & \leq x_{ik} + x_{jk} \quad \forall i,j,k \in V\\
  x_{ij} & \geq 0 \quad \forall i,j \in V.
\end{align*}

It has an integrality gap of $2$~\cite{CGW05}; consider a graph with
vertices $\{ 0, 1, \dots, k \}$ where the edge $(0, i)$ is $+$ for
each $i \in [k]$ and the rest are $-$. Letting $x_{0i} = 1/2$ for each
$i \in [k]$ and $x_{ij} = 1$ for each $1 \leq i < j \leq k$ ensures
that the LP value is $k/2$, but the optimal integral value is $k - 1$.

\subsection{Strengthened LP relaxation}
\label{sec:sa}

In order to overcome the integrality gap for the standard LP
relaxation, we consider the following $r$-rounds of Sherali-Adams
relaxation.  For a collection of nonempty disjoint sets $S_1, \dots,
S_{\ell} \subseteq V$ such that $\sum_{i=1}^{\ell} |S_i| \leq r$, we
have a variable $y_{S_1 | S_2 | \dots | S_{\ell}}$ indicating the
probability that the optimal partition induced by $S_1 \cup \dots \cup
S_{\ell}$ is exactly $(S_1, \dots, S_{\ell})$.  Note that the order of
$S_1, \dots, S_{\ell}$ does not matter. For example, for two vertices
$u$ and $v$, $y_{u|v}$ is supposed to indicate the probability that
$u$ and $v$ are in different clusters in the optimal solution and
$y_{uv}$ indicates the probability that they are in the same cluster
so that $y_{u|v} + y_{uv} = 1$. (Similarly, for three distinct
vertices $u, v, w$, $y_{uvw} + y_{u|vw} + y_{v|uw} + y_{w|uv} +
y_{u|v|w} = 1$.) We have the following constraints ensuring the
consistency of the variables.  We use $S_i \cupdot S_j$ to indicate
disjoint unions. Notice that $x_{uv} = 1- y_{uv} = y_{u|v}$.

\begin{align}
  \min \quad & \sum_{ij \in E^+} (1-y_{ij})  + \sum_{ij \in E^-} y_{ij} \\
\mbox{s.t.} \quad & y_{T_1|\ldots|T_k}   = \sum_{\substack{S_1,\ldots, S_{\ell}: \\ S = S_1 \cupdot \ldots \cupdot S_{\ell} \\ \text{and } T_i = S_i \cap T~ \forall i \in [k]}} y_{S_1|S_2|\ldots|S_{\ell}} \quad && \forall 
T \subseteq  S \subseteq V, |S| \leq r, \text{ and } T= T_1 \cupdot \ldots \cupdot T_k\\
  &y_{\emptyset} = 1  \\
  &y  \geq 0. \label{nonneg_y}
\end{align}
Note that the constraint \eqref{nonneg_y} requires the $y$-variables across all possible subscripts to be nonnegative.

\section{Algorithm}
\label{sec:algo}
Let $r$ be a positive integer denoting the number of rounds, and $\delta := \threshold$ throughout the paper. 
We consider the solution $y$ obtained from $r$-round of Sherali-Adams. 
Let $x_{uv} := y_{u|v} = 1 - y_{uv}$ be the {\em distance} between $u$ and $v$. 
Call an edge $(u, v)$ {\em short} if $x_{uv} \leq \delta$, {\em long} if $x_{uv} \geq 1 - \delta$, and {\em medium} otherwise.

\subsection{Rounding Algorithm}
At a high-level, our rounding algorithm follows the general framework of~\cite{ACN08} and~\cite{CMSY15}. 
The algorithm proceeds in iterations, and in each iteration with the remaining instance $G = (V, E)$, the algorithm chooses a pivot $p$ uniformly at random from $V$, samples a random set $S \ni p$, creates $S$ as a new cluster, and proceeds with the remaining instance $G \setminus S$. 

The most crucial step of the algorithm is to sample $S$ in each
iteration.  Given a pivot $p \in V,$ for each vertex $v \in V
\setminus {p},$ the \textsc{LP-KwikCluster} algorithm~\cite{ACN08}
independently puts $v$ into $S$ with probability $(1 - x_{pv})$.  The
refined algorithm of~\cite{CMSY15} also does independent rounding,
but puts $v$ into $S$ with probability $(1 - f^{s}(x_{pv}))$ where $s
\in \{ +, - \}$ is the sign of $(p, v)$. (It sets $f^{-}(x) = x$ and
$f^{+}(x) = 0$ if $x < 0.19$, $(\frac{x-0.19}{0.5095-0.19})^2$ if $x \in [0.19,
  0.5095]$, and $1$ if $x \geq 0.5095$.)  In order to use the power of
Sherali-Adams, given a pivot $p$, we round medium $+$edges in a
correlated manner while also employing nontrivial $f^s(\cdot)$
functions for other edges. The full algorithm is described as
Algorithm~\ref{algo}.

Before we present the full algorithm, we briefly discuss some intuition behind our rounding.
  In \cite{ACN08} and \cite{CMSY15}, the analysis boiled
  down to analyzing the ratio defined in~\eqref{eq:basicratio} on each type of triangle.
  We call a triangle $+++$ if it has three $+$edges, and $++-$, $+--$,
  $---$ triangles are defined similarly.  For the LP rounding
  algorithm of \cite{ACN08}, each triangle has a ratio of at most 2,
  except the $++-$ triangle, which has a ratio of $2.5$.  In
  \cite{CMSY15}, there is a trade-off by lowering the ratio for the
  $++-$ triangles but increasing the ratio for the $+++$ triangles, to
  the point where each ratio is around $2.06$.

  A key observation is
  that we can use the {\em correlated rounding} (Line~\ref{line:correlated} of the algorithm) to lower the ratio in the case where the $++-$ triangle had ratio $2.5$
  (without increasing the ratio on other types of triangles such as $+++$).  This is because in the correlated rounding,
  we obtain a sort of {\em negative correlation}, which is not present
  in the independent rounding.  Specifically, for a $++-$ triangle
  with distances $(.5,.5,1)$ (where the $-$edge has distance 1 and
  vertex $p$ is incident to the two $+$edges), if we do the
  independent rounding with $p$ as a pivot, then there is a $1/4$
  probability that both of the other two vertices, call them $u$ and
  $v$, will be included in $p$'s cluster and therefore a $1/4$ probability of the bad event that the $-$edge will be an intracluster edge.  
  However, 
  with correlated rounding,
  if $p$ is the pivot, then the events of including $u$ and $v$ in
  $p$'s cluster are negatively correlated and exactly one of them is
  included. 
  Thus, the $-$edge, which contributes 0 to the objective
  function of the LP, is never contained in a cluster when $p$ is the
  pivot, which results in a lower ratio for this triangle.
  (In reality, it happens in an approximate and amortized sense, which slightly complicates the analysis.)

  One more comment is that we cannot simply use correlated rounding on both
  $+$ and $-$edges incident to the chosen pivot, because this turns
  out to have an unbounded ratio on $---$ triangles.  
  There are additional technical reasons that prevent us from using correlated rounding to short or long $+$edges (see Remark~\ref{remark:short}), 
  so our algorithm only uses correlated rounding on medium $+$edges incident to the pivot.

\begin{algorithm}[h]
  \caption{Rounding procedure \texttt{Round} for Correlation Clustering with parameter $\delta$.    \label{algo}}
    \label{algo:rounding}
    \textbf{Input:} Set of vertices $V$, with edges $E^+$ and $E^-$, a fractional solution $y$ from the $r$-round Sherali-Adams relaxation\;
    Pick a pivot $p \in V$ uniformly at random\;
    $S \leftarrow \{p\}$\;
    \ForEach{vertex $v \in V-\{p\}$}{
      \If{$(v,p) \in E^-$}{
        Add $v$ to $S$ independently with probability  $1 - \sqrt{x_{pv}}$\;
      }
      \If{$(v,p) \in E^+$ and $(v,p)$ is short}{\label{line:shortone}
        Add $v$ to $S$ independently with probability $1 - x_{pv}^2 / \delta$\; \label{line:shorttwo}
      }
      \If{$(v,p) \in E^+$ and $(v,p)$ is long}{\label{line:longone}
        Add $v$ to $S$ independently with probability $1 - x_{pv}$\; \label{line:longtwo}
      }
    }
    Define $I_p \gets \{ v \in V \setminus \{p\} : (v, p)\mbox{ is medium}+\mbox{edge} \}$\; \label{line:Ip}
    Sample $S' \subseteq I_p$ as prescribed by \Cref{assump:rounding}
    (i.e.: such that: (1) For each $v \in I_p$, $\Pr[v \in S'] = y_{pv}$; and (2)
    $\E_{u, v \in I_p}[|\Pr[u, v \in S'] - y_{puv}|] \leq \eps_r$, where $\eps_r = O(1/\sqrt{r})$)\; \label{line:correlated}
    $S \gets S \cup S'$\;
    \textbf{Output:} Cluster $S$ and the clusters obtained by calling \texttt{Round} on $V - S$ (with the
    fractional solution $y$ and $+$ and $-$edges induced by $V- S$)\;
\end{algorithm}

The following lemma shows that the correlated rounding procedure in Line~\ref{line:correlated} can be implemented using the techniques to round convex hierarchies for CSPs~\cite{RT12, GS11, BRS11}. It is proved in Section~\ref{sec:rt}.

\begin{lemma}
  In Line~\ref{line:correlated}, one can sample $S' \subseteq I_p$ in time $n^{O(r)}$ such that 
  \begin{itemize}
  \item For each $v \in I_p$, $\Pr[v \in S'] = y_{pv}$. 
  \item $\E_{u, v \in I_p}[|\Pr[u, v \in S'] - y_{puv}|] \leq \eps_r$, where $\eps_r = O(1/\sqrt{r})$.
  \end{itemize}
  \label{assump:rounding}
\end{lemma}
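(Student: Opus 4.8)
The plan is to adapt the \emph{correlated rounding} of~\cite{BRS11, GS11, RT12} to the binary labeling that records, for each $v\in I_p$, whether $v$ ends up in $p$'s cluster; write $b_v=1$ for ``$v$ lies in the same cluster as $p$'' and $b_v=0$ otherwise. For a set $W\subseteq I_p$ with $|W|+1\le r$, the Sherali-Adams variables induce a genuine local distribution $\mu_W$ over assignments $\beta\colon W\to\{0,1\}$: concretely $\mu_W(\beta)$ is the $y$-probability that the vertices $\{p\}\cup\beta^{-1}(1)$ all lie in one part and no vertex of $\beta^{-1}(0)$ joins that part, i.e.
\[
  \mu_W(\beta)\;=\;\sum_{\pi}\, y_{\,(\{p\}\cup\beta^{-1}(1))\,\mid\,\pi}\,,
\]
where $\pi$ ranges over the partitions of $\beta^{-1}(0)$ into nonempty parts (every subscript here has total size $|W|+1\le r$, so the variable exists). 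The consistency constraints of the relaxation guarantee that $\{\mu_W\}_{|W|\le r-1}$ is a consistent family, that conditioning $\mu_W$ on the value of $\beta$ on a subset $W'\subseteq W$ is well defined, and that the conditional marginals average back to the unconditional ones.

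I would then round as follows: pick a size $t$ uniformly from $\{0,1,\dots,T\}$ with $T:=r-3=\Theta(r)$, then a uniformly random $W\subseteq I_p$ with $|W|=t$, draw $\beta\sim\mu_W$, and condition the relaxation on the event $\{\,b_w=\beta(w)\ \forall w\in W\,\}$; let $\hat y_{pv}$ and $\hat y_{puv}$ denote the resulting conditional marginals $\Pr[b_v=1]$ and $\Pr[b_u=b_v=1]$. Form $S'$ by placing each $w\in W$ into $S'$ iff $\beta(w)=1$, and each $v\in I_p\setminus W$ into $S'$ independently with probability $\hat y_{pv}$. Because conditional marginals average correctly, $\Pr[v\in S']=\E_{W,\beta}[\hat y_{pv}]=y_{pv}$, which is the first bullet, exactly. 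For the second bullet, if $u$ or $v$ lies in $W$ then the conditioning already ``ties'' that vertex to the rest and $\Pr[u,v\in S']=\E_{W,\beta}[\hat y_{puv}]=y_{puv}$ exactly; the error comes only from pairs $u,v\in I_p\setminus W$, where $\Pr[u,v\in S'\mid W,\beta]=\hat y_{pu}\,\hat y_{pv}$, so
\[
  \bigl|\Pr[u,v\in S']-y_{puv}\bigr|
  \;\le\;\E_{W,\beta}\bigl[\,\bigl|\hat y_{pu}\,\hat y_{pv}-\hat y_{puv}\bigr|\,\bigr]
  \;\le\;\E_{W,\beta}\bigl[\,\|\mu_{uv\mid W,\beta}-\mu_{u\mid W,\beta}\otimes\mu_{v\mid W,\beta}\|_1\,\bigr].
\]

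The crux is then the \emph{correlation bound}: on average over the random seed $W$ and over $u,v\in I_p$, the conditional pair $(b_u,b_v)$ is $O(1/\sqrt r)$-close in $\ell_1$ to a product distribution. This is the standard information-theoretic pigeonhole argument. Let $\phi_t:=\E_{|W|=t}\,\E_{u\in I_p}\bigl[H(b_u\mid b_W)\bigr]$; it is non-increasing in $t$, lies in $[0,\log 2]$, and its total decrease over $t=0,\dots,T$ is at most $\log 2$, so for a uniformly random $t$ the expected one-step decrease is $O(1/T)$. Up to a lower-order correction (the difference between ``a fresh uniformly random coordinate'' and ``a not-yet-conditioned coordinate''), this decrease equals $\E_{W,u,v}\bigl[I(b_u;b_v\mid b_W)\bigr]$; applying Pinsker's inequality for each fixed $(W,\beta,u,v)$ and then concavity of the square root gives
\[
  \E_{u,v,W,\beta}\bigl[\|\mu_{uv\mid W,\beta}-\mu_{u\mid W,\beta}\otimes\mu_{v\mid W,\beta}\|_1\bigr]
  \;\le\;\sqrt{\,2\ln 2\cdot\E_{W,u,v}\bigl[I(b_u;b_v\mid b_W)\bigr]\,}
  \;=\;O\!\left(\tfrac{1}{\sqrt r}\right),
\]
which is the claimed $\eps_r$. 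Finally, the whole procedure only reads $y$-variables whose subscripts have total size at most $r$ and does elementary arithmetic and sampling with them; since there are $n^{O(r)}$ such variables, the running time is $n^{O(r)}$.

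The main obstacle is the correlation bound: making the telescoping-entropy argument rigorous — in particular justifying that a one-step decrement of $\phi_t$ equals, up to $o(1)$, the expected conditional mutual information between two uniformly random coordinates, and tracking constants so that the binary alphabet yields exactly $O(1/\sqrt r)$ rather than a worse rate. A secondary point requiring care is checking that the projected distributions $\mu_W$ and their conditionings are legitimate local distributions with enough residual rounds left for the pairwise quantities $\hat y_{puv}$ to be well defined, which is why $t$ is capped at $r-3$.
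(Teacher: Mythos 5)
Your proposal is correct and follows essentially the same route as the paper: the same conversion of the Sherali-Adams solution restricted to $\{p\}\cup I_p$ into binary local distributions, the same seed-and-condition rounding, and the same telescoping-entropy pigeonhole plus a Pinsker-type bound and concavity of the square root to get $O(1/\sqrt r)$. The only cosmetic differences are that the paper draws the seed coordinates $w_1,\dots,w_t$ uniformly (with repetition) so the one-step entropy decrement \emph{exactly} equals the expected conditional mutual information between two uniform coordinates — which removes the ``lower-order correction'' you worry about — and it then fixes a good seed deterministically by exhaustive search rather than averaging over a random seed.
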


\section{Analysis}
In this section, we show that Algorithm~\ref{algo:rounding} guarantees a $(\ratio + \eps_r)$-approximation.

\label{sec:analysis}
\subsection{Setup and Ideal Cases}
\label{subsec:setup}
Our high-level setup of the analysis also follows from that
of~\cite{ACN08} and~\cite{CMSY15}.  Consider the $t$-th iteration of
Algorithm~\ref{algo} with the current graph $G_t = (V_t, E_t)$.  Let
$\costr_p(u, v)$ be the probability that $(u, v)$ is violated in the
rounding algorithm when $p$ is the pivot, and $\lpr_p(u, v)$ be the LP
value of $(u, v)$ (i.e., $x_{uv}$ if $(u, v)$ is $+$ and $y_{uv}$ if
it is $-$) times the probability that $(u, v)$ disappears (i.e.,
$\Pr[S \cap \{ u, v \} \neq \emptyset]$).  The superscript $r$ stands for rounding.

We call a set of three
distinct vertices a {\em triangle}. A set of two vertices is called a
{\em degenerate triangle}.  For triangle $\{ u, v, w \}$, let
$\costr(u, v, w) = \costr_u(v, w) + \costr_v(u, w) + \costr_w(u, v)$
and $\lpr(u, v, w) = \lpr_u(v, w) + \lpr_v(u, w) + \lpr_w(u, v)$. For
degenerate triangle $\{ u, v \}$, let $\costr(u, v) = \costr_u(u, v) +
\costr_v(u, v)$ and $\lpr(u, v) = \lpr_u(u, v) + \lpr_v(u, v)$. Let
\[
ALG_t := \E_{u \in V} \sum_{(v, w) \in \binom{V_t}{2}} \costr_u(v, w)
\]
be the expected cost incurred by this iteration, and 
\[
LP_t:= \E_{u \in V} \sum_{(v, w) \in \binom{V_t}{2}} \lpr_u(v, w)
\]
be the expected amount of the LP value removed by this iteration.
If we could show that for all $t$, 
\begin{equation}
ALG_t \leq \alpha \cdot LP_t
\label{eq:ratio}
\end{equation}
then we will get an upper bound on the total cost $\mathbf{ALG}$ as
\[
\E[\mathbf{ALG}] = \E[\sum_{t=0}^R ALG_t] \leq \alpha \cdot \E[\sum_{t=0}^R LP_t] = \alpha \cdot \mathbf{LP}
\]
where $\mathbf{LP}$ denotes the total $LP$ value and $R$ is the number of the iterations.

Therefore, in order to prove Theorem~\ref{thm:main}, it suffices to consider one iteration. For the rest of the paper, let us omit the subscript $t$ denoting the iteration. We prove~\eqref{eq:ratio}, which is equivalent to upper bounding  
\begin{align*}
\frac{ALG}{LP} = 
\frac{\E_{u \in V} \sum_{(v, w) \in E} \costr_u(v, w)}{\E_{u \in V} \sum_{(v, w) \in E} \lpr_u(v, w)} = 
\frac{\sum_{(u, v, w) \in \binom{V}{3}} \costr(u,v, w) + \sum_{(u, v) \in \binom{V}{2}} \costr(u,v)}{\sum_{(u, v, w) \in \binom{V}{3}} \lpr(u,v, w) + \sum_{(u, v) \in \binom{V}{2}} \lpr(u,v)}.
\end{align*}

Recall that a triangle is $+++$ if it has three $+$edges and $++-$,
$+--$, $---$ triangles are defined similarly.  For a degenerate
triangle $\{ u, v \}$, $\costr_u(u, v)$ and $\lpr_u(u, v)$ depend only
on $x_{uv}$ and the sign of $(u, v)$.  Even for a triangle $\{ u, v, w
\}$, the values of $\costr_u(v, w)$ and $\lpr_u(v, w)$ only depend on
$x_{uv}, x_{uw}, x_{vw}$ and the signs of the edges unless both $(u,
v)$ and $(u, w)$ are medium $+$edges; $v$ and $w$ are added to $S \cup
S'$ independently with the probabilities depending on $x_{uv}$ and
$x_{uw}$ respectively. When both $(u, v)$ and $(u, w)$ are medium
$+$edges, then they are rounded with correlation and $\Pr[v, w \in S'
  | u\mbox{ is pivot}]$ must be, ideally, exactly equal to $y_{uvw}$,
but Lemma~\ref{assump:rounding} only gives an approximate guarantee
amortized over the vertices in $I_u$.

To gradually overcome the complication arising from correlated rounding, we define the following two idealized versions of $\costr(\cdot)$ and $\lpr(\cdot)$ and analyze them first. 
\begin{itemize}
\item $\costs(\cdot)$ and $\lps(\cdot)$ are defined assuming that the correlated rounding for medium $+$edges are perfect. Formally, $\costs_u(\cdot), \costs(\cdot), \lps_u(\cdot), \lps(\cdot)$ are defined identically to $\costr_u(\cdot)$, $\costr(\cdot)$, $\lpr_u(\cdot)$, $\lpr(\cdot)$ respectively, assuming that in Line~\ref{line:correlated} of Algorithm~\ref{algo}, the condition (2) is replaced by $\Pr[u, v \in S' | p\mbox{ is pivot}] = y_{puv}$ for every $p \in V$, $u, v \in I_p$. With this assumption, note that for every  triangle $\{ u, v, w \}$ both $\costs(u, v, w)$ and $\lps(u, v, w)$ depend only on the signs of the edges and the Sherali-Adams solution induced by $\{ u, v, w \}$ (i.e., $y_{uvw}, y_{u|vw}, y_{uv|w}, y_{v|uw}, y_{u|v|w}$). 

\item $\costi(\cdot)$ and $\lpi(\cdot)$ are even more idealized versions of $\costs(\cdot)$ and $\lps(\cdot)$ in the sense that all $+$edges (instead of just medium $+$edges) are rounded with correlation. Formally, $\costi_u(\cdot)$, $\costi(\cdot)$, $\lpi_u(\cdot)$, $\lpi(\cdot)$ are defined identically to $\costs_u(\cdot)$, $\costs(\cdot)$, $\lps_u(\cdot)$, $\lps(\cdot)$ respectively, additionally assuming that instead of running Line~\ref{line:shortone},~\ref{line:shorttwo},~\ref{line:longone},~\ref{line:longtwo} of Algorithm~\ref{algo}, we let $I_p \leftarrow \{ v \in V \setminus \{ p \} : (v, p)\mbox{ is +} \}$ in Line~\ref{line:Ip}. 
\end{itemize}

The superscript $i$ stands for ideal and $s$ stands for special (short and long) edges. 

\begin{remark}
The primary reason that we round short and long $+$edges separately and differentiate $\costs(\cdot)$, $\lps(\cdot)$ from $\costi(\cdot)$, $\lpi(\cdot)$ is to handle the rounding error $\eps_r$ in Lemma~\ref{assump:rounding}, because it applies to every pair $(u, v)$ participating the correlated rounding and we want the $\costi(\cdot)$ and $\lpi(\cdot)$ values for these pairs (more precisely, the triangle $(p, u, v)$) to be large enough to absorb it. For instance, if $\eps_r = 0$ for some $r$, we could have rounded every $+$edge with correlation and just used $\costi(\cdot), \lpi(\cdot)$. 
\label{remark:short}
\end{remark}

We first analyze $\costi(T) / \lpi(T)$ for all triangles. Let $\eta := \finaleta$ and $\gamma := \finalgamma$. 
Call $++-$ triangle $\{ u, v, w \}$ with $(u, v), (u, w)$ being $+$ {\em bad} if $x_{uv}, x_{uw} \in [1/2 - \eta, 1/2 + \eta]$ and $x_{vw} > 1-\eta$. 
The proof of the following lemma appears in Section~\ref{sec:ratio_ideal}.

\begin{lemma}
For any triangle $T$, $\costi(T) / \lpi(T)$ is bounded as follows.

\begin{minipage}{8cm}
\begin{tabular}{|c|c|}
\hline
Type of $T$ & Upper bound  \\
\hline
$+++$ & $\ppprat$ \\
\hline
$+--$ & $\pmmrat$ \\
\hline
$---$ & $\mmmrat$ \\
\hline
$++-$: bad & $\ppmbad$ \\
\hline
$++-$: not bad & $2 - \gamma$ \\
\hline
degenerate & $\degrat$ \\
\hline
\end{tabular}
\end{minipage}
\label{lem:triangle_ideal}
\end{lemma}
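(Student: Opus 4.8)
The plan is to compute $\costi(T)$ and $\lpi(T)$ as explicit functions of the three distances and the triangle type, and then bound the resulting ratio by a case analysis over the four sign patterns plus the degenerate case. Fix a triangle $T = \{u,v,w\}$ and recall that, in the idealized world defining $\costi,\lpi$, every $+$edge incident to a pivot is rounded with \emph{perfect} correlation (so $\Pr[a,b \in S' \mid p \text{ pivot}] = y_{pab}$ whenever $(p,a),(p,b)$ are $+$), while each $-$edge incident to the pivot $p$ still pulls its endpoint into $S$ independently with probability $1-\sqrt{x_{pv}}$. First I would write down, for each choice of pivot $x \in \{u,v,w\}$, the probability $\costi_x(\cdot,\cdot)$ that the opposite edge is violated and the quantity $\lpi_x(\cdot,\cdot)$, which is the edge's LP contribution times the disappearance probability $\Pr[S \cap \{\text{two non-pivots}\}\neq\emptyset]$. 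Summing the three pivot choices gives $\costi(T)$ and $\lpi(T)$ purely in terms of the Sherali–Adams marginals on $\{u,v,w\}$ (via $y_{uvw}, y_{u|vw},\dots$) and the three pairwise distances; for the all-independent cases ($---$, and pivots whose two incident edges are not both $+$) these reduce to the familiar closed forms from \cite{ACN08, CMSY15}.

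Next I would dispatch the type-by-type bounds. For $---$ every edge is $-$, so no correlation is ever used, and the ratio is exactly the one analyzed in \cite{ACN08} for $-$edges under the $1-\sqrt{x}$ rule; one checks it is $\le 1$. For degenerate triangles $\{u,v\}$ the two contributions are symmetric and an elementary single-variable inequality gives ratio $\le 1$ for each sign. The $+++$ and $+--$ cases combine independent rounding on some pivot choices with correlated rounding on others (for $+--$, only the pivot incident to the two $-$edges... wait — for $+--$ the unique $+$edge is incident to two pivots, but each such pivot also has one incident $-$edge, so no pivot sees two incident $+$edges and correlation is never triggered; for $+++$ every pivot sees two incident $+$edges and correlation is always triggered). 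In the $+++$ case I would use the Sherali–Adams consistency constraints to express $y_{uvw}$ and the two-way split probabilities in terms of the marginals, exploit the triangle inequality $x_{uv} \le x_{uw} + x_{vw}$ and its cyclic versions, and reduce to a bounded low-dimensional optimization showing the ratio never exceeds $\ppprat$; the correlation is what keeps this below $2$ rather than the $2.06$-type bound. The heart of the lemma is the $++-$ case: here the pivot $u$ incident to the two $+$edges triggers correlation (so $v,w$ land in $u$'s cluster with the negatively-correlated marginals summing to $y_{uv}+y_{uw}$, and the $-$edge $(v,w)$ is an intracluster violation with probability $y_{uvw} \le \max(0, y_{uv}+y_{uw}-1)$, which is $0$ near distances $(\tfrac12,\tfrac12,1)$), whereas the pivots $v$ and $w$ each see one $+$ and one $-$ incident edge and round independently. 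I would parametrize by $(x_{uv}, x_{uw}, x_{vw})$ subject to the triangle inequalities, derive the exact ratio as a rational function, and show: (i) it is globally $\le 2 = \ppmbad$; (ii) outside the bad box — i.e. unless $x_{uv}, x_{uw} \in [\tfrac12-\eta,\tfrac12+\eta]$ and $x_{vw} > 1-\eta$ — it is at most $2-\gamma$ with $\eta = \finaleta$, $\gamma = \finalgamma$.

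The main obstacle is precisely establishing the quantitative gap in the $++-$ case: near the worst point the ratio genuinely approaches $2$, so I must show that \emph{any} deviation of $(x_{uv},x_{uw},x_{vw})$ from the bad region by more than $\eta$ forces a definite drop of at least $\gamma$. I expect this to require splitting into a few subregions — one $x_i$ far from $\tfrac12$, $x_{vw}$ not close to $1$, or a distance pushed to the short/long regime where the $f^s$-type behavior of $\costi$ changes — and in each subregion bounding a one- or two-variable rational function, either by monotonicity/convexity arguments or by a careful (possibly interval-arithmetic-assisted) check that the stated constants $\eta = \finaleta$ and $\gamma = \finalgamma$ work. The other cases are comparatively routine single-variable or low-dimensional optimizations once the closed forms are in hand. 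I defer the detailed computations to Section~\ref{sec:ratio_ideal}.
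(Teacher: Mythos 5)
Your high-level plan coincides with the paper's: write $\costi_x(\cdot)$ and $\lpi_x(\cdot)$ in closed form for each pivot, note that correlation is triggered exactly at a pivot with two incident $+$edges (so always for $+++$, never for $+--$ or $---$, and only at the center for $++-$), and then bound the ratio type by type. However, as written this is a plan rather than a proof: every quantitative claim in the table is deferred ("reduce to a bounded low-dimensional optimization", "I expect this to require splitting into a few subregions"), and those computations are the entire content of the lemma. In particular, the constant $\pmmrat$ for $+--$ triangles under the $1-\sqrt{x}$ rule for $-$edges is not routine — the paper needs a chain of claims (reducing to $Z=\min\{1,X+Y\}$, then separate analyses for $Z=1$ and $Z=X+Y<1$ with convexity arguments) to get $1.5$; and for $+++$ the paper does not optimize over distances at all but works directly with the local distribution ($x=y_{ab|c}$, etc.), showing the ratio is at most $3/(2+y_{abc}+y_{a|b|c})\le\ppprat$. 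Similarly, the not-bad $++-$ bound $2-\gamma$ with $\eta=\finaleta$, $\gamma=\finalgamma$ requires the explicit reduction (Claims in the $++-$ subsection: $A=B$, then $z=0$, then endpoint analysis in $w$) before one can see where the $\eta$-perturbation buys a definite $\gamma$; none of that is supplied or even sketched concretely enough to check the constants that later feed the global charging scheme.

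There is also a genuine error in the one place where you do commit to a formula: you assert that the center-pivot violation probability satisfies $y_{uvw}\le\max(0,\,y_{uv}+y_{uw}-1)$. This inequality is backwards — local consistency gives $y_{uvw}\ge\max(0,\,y_{uv}+y_{uw}-1)$ and $y_{uvw}\le\min(y_{uv},y_{uw},y_{vw})$. The constraint that actually makes the bad configuration harmless at the center pivot is $y_{uvw}\le y_{vw}=1-x_{vw}$, i.e.\ the $-$edge variable, which is small precisely when $x_{vw}$ is close to $1$. If one used your stated inequality, the $++-$ analysis would wrongly conclude the center-pivot cost vanishes whenever $y_{uv}+y_{uw}\le 1$, which is false and would break the bound of $\ppmbad$. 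A final minor point: for $\costi/\lpi$ there is no short/long regime at all (that distinction only enters $\costs/\lps$ in Lemma~\ref{lem:triangle_short}), so the subcase you anticipate about "the $f^s$-type behavior" does not arise here; the case split is purely over sign patterns and the bad box.
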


\noindent Incorporating short and long $+$edges yields the following bounds whose proofs appear in Section~\ref{sec:ratio_short}.
\begin{lemma}
For any triangle $T$, $\costs(T) / \lps(T)$ is bounded as follows.

\begin{minipage}{8cm}
\begin{tabular}{|c|c|}
\hline
Type of $T$ & Upper bound  \\
\hline
$+++$ & $\max(2 - \delta, 1+0.5/(1 - \delta))\leq  \ppprats$ \\
\hline
$+--$ & $\pmmrat(1 + \delta) \leq \pmmrats$\\
\hline
$---$ & $1$ \\
\hline
$++-$: bad & $2$ \\
\hline
$++-$: not bad & $2 - \gamma$ \\
\hline
degenerate & $\degrat$ \\
\hline
\end{tabular}
\end{minipage}

\label{lem:triangle_short}
\end{lemma}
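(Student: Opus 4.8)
My plan is to reduce to \ref{lem:triangle_ideal} wherever possible and to dispatch the remaining cases by direct computation. First I would observe that if $T$ contains no short and no long $+$edge, then every edge of $T$ is rounded in exactly the same way in the definitions of $\costs(T),\lps(T)$ and in those of $\costi(T),\lpi(T)$: the $+$edges of $T$ are all medium and therefore enter the correlated set $I_p$ in both, while $-$edges are always rounded with probability $1-\sqrt{\cdot}$. Hence $\costs(T)=\costi(T)$ and $\lps(T)=\lpi(T)$, and \ref{lem:triangle_ideal} already gives what we want, because every entry of the table in \ref{lem:triangle_short} dominates the corresponding entry of \ref{lem:triangle_ideal} ($\ppprats\ge\ppprat$, $\pmmrats\ge\pmmrat$, and the $---$, $++-$ and degenerate rows coincide). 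This disposes of all $---$ triangles, of all degenerate triangles whose edge is a $-$edge, and, crucially, of all bad $++-$ triangles, since a bad triangle has both $+$edges of distance in $[1/2-\eta,1/2+\eta]\subseteq(\delta,1-\delta)$, hence medium.

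\textbf{Residual cases and explicit formulas.} It then remains to treat triangles with at least one short or long $+$edge. For such a $T$ and any pivot $p\in T$, the two vertices of $T\setminus\{p\}$ are added to $S$ with explicit probabilities: if both $p$-incident edges of $T$ are medium $+$edges, with the ideal correlated values $\Pr[v\in S]=y_{pv}$ and $\Pr[v,w\in S]=y_{pvw}$; otherwise the two inclusion events are mutually independent, with individual probabilities $1-\sqrt{x_{pv}}$ for a $-$edge, $1-x_{pv}^2/\delta$ for a short $+$edge, $1-x_{pv}$ for a long $+$edge, and $y_{pv}$ if $(p,v)$ is the unique medium $+$edge at $p$ inside $T$. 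Substituting these into the definitions writes $\costs(T)$ and $\lps(T)$ explicitly in terms of the three distances and the Sherali--Adams variables on $\{u,v,w\}$, over the region cut out by the triangle inequalities together with the relevant short/long constraints; I would then maximize the ratio over that region.

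\textbf{Two simplifications.} Two elementary remarks trim the casework. A long $+$edge $(p,v)$ has marginal inclusion probability $1-x_{pv}=y_{pv}$, equal to the ideal marginal, so it deviates from the ideal rounding only by losing correlation with a second medium $+$edge at $p$; in a $+--$ triangle there is no such second $+$edge, so a long $+$edge there reproduces exactly $\costi,\lpi$ and is already handled. A short $+$edge $(p,v)$ satisfies $\Pr[v\notin S]=x_{pv}^2/\delta\le x_{pv}$, and $\max_{0\le x\le\delta}\frac{1-x^2/\delta}{1-x}\le 1+\delta$, so over-including a short $+$neighbor costs at most a factor $1+\delta$ relative to the ideal marginal $1-x$; this is the source of the $1+\delta$ in the $+--$ bound and of the $2-\delta$ branch in the $+++$ bound. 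The residual work then reduces to: (i) $+--$ with its $+$edge short, a three-variable calculation giving $\pmmrat(1+\delta)$; (ii) $++-$ not bad with a short or long $p$-incident $+$edge (so no vertex of $T$ carries two medium $+$edges, hence all within-triangle rounding is independent), an explicit calculation giving $2-\gamma$, which is where $\gamma$ is calibrated; (iii) $+++$ with at least one short or long edge, giving $\max(2-\delta,\,1+0.5/(1-\delta))$; and (iv) a degenerate triangle whose edge is a short $+$edge, where the ratio is simply $x/\delta\le 1$.

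\textbf{Main obstacle.} I expect case (iii) to be the hard part: a $+++$ triangle with a mixture of short, long and possibly medium $+$edges escapes both shortcuts, because when $T$ still has two medium $+$edges at a common vertex they remain coupled through $y_{pvw}$, while a neighboring short or long $+$edge has its endpoint rounded independently. One must then track how replacing a joint probability $y_{pvw}$ by a product such as $y_{pv}(1-x_{pw})$ or $(1-x_{pv}^2/\delta)(1-x_{pw})$ trades off against the distances, and pin down the extremal configuration tightly enough to land exactly on $\max(2-\delta,\,1+0.5/(1-\delta))\le\ppprats$; the second branch should correspond to a long $+$edge, whose LP value is at least $1-\delta$ while it can still force up to roughly half a unit of extra unsatisfaction per pivot. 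Cases (i) and (ii) are also honest multivariate optimizations, but more routine since only independent rounding is involved there.
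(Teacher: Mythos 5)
Your reduction framework is exactly the paper's: triangles all of whose $+$edges are medium have $\costs=\costi$ and $\lps=\lpi$ (the within-triangle rounding is identical, and the relevant probabilities only depend on the two pivot-incident edges), so Lemma~\ref{lem:triangle_ideal} disposes of $---$, degenerate $-$edges, bad $++-$, all-medium $+++$ and $++-$, and $+--$ with a medium $+$edge; your further observations that a long $+$edge in a $+--$ triangle reproduces the ideal marginal $1-x$, that a short $+$edge costs at most a factor $1+\delta$ against the ideal (giving $\pmmrat(1+\delta)$), and that a short $+$ degenerate edge has ratio $x/\delta\le 1$ are all correct and coincide with the paper's arguments. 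However, there is a genuine gap: for the cases that carry essentially all of the technical weight of the lemma --- $+++$ triangles containing at least one short or long edge, and $++-$ triangles with a short or long $+$edge --- you only announce that ``an explicit calculation'' would yield the stated bounds, and you explicitly flag the $+++$ case as an unresolved obstacle. In the paper these are several pages of sub-case analysis (3 short; 2 short/1 medium; 1 short/2 medium; 2 medium/1 long; configurations with long edges; and the short/long $++-$ cases), each settled by convexity-in-one-variable arguments reducing to endpoint checks of explicit polynomials; nothing in your proposal substitutes for that work, so the bounds $\max(2-\delta,\,1+0.5/(1-\delta))$, $1.9$, etc.\ are asserted rather than proved.

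Two smaller inaccuracies in the sketch: the branch $1+0.5/(1-\delta)$ does not come from a long $+$edge but from the 1 short/2 medium $+++$ sub-case, where the correlated pair of medium edges is intact and the short edge's cost (split as $x/2+x/2$) is charged against the lp of the two medium edges, each worth at least $1-\delta$; and $\gamma$ is calibrated in the ideal analysis of not-bad $++-$ triangles with two medium $+$edges (the quantity $2/(1+4\eta^2)$ at $\eta=1/12$), which your reduction already covers --- the short/long $++-$ cases only need, and in the paper get, the weaker bounds $1.9$ and $1.5$, both below $2-\gamma$.
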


\subsection{Handling Bad Triangles}
\label{sec:charging}
By Lemma~\ref{lem:triangle_short}, the only triangles whose ratio is greater than $2 - \gamma$ are bad triangles; $++-$ triangles with LP distances $x,y,z$ such that $x, y \in [0.5 - \eta, 0.5 + \eta]$ and $z \in (1 - \eta, 1]$. ($x, y$ are $+$edges and $z$ is a $-$edge.) 
Each bad triangle has a unique {\em center}, which is the vertex incident on the two $+$edges. 
Let $\calT$ be the set of all non-degenerate triangles, and $\calD$ be the set of all degenerate triangles. 
In this subsection, given a parameter $\tau > 0$, we will define the charging function $h_\tau : \calT \cup \calD \to \R$ such that 
\begin{itemize}
\item $h_\tau(T) = -\tau$ for every bad triangle $T$.
\item $h_\tau(T) \leq +3\tau$ for {\em chargeable} $T$ which will be defined soon. 
\item $h_\tau(T) = 0$ for all other triangles.
  \item $\sum_{T \in \calT \cup \calD} h_\tau(T) \geq 0$. 
\end{itemize}

For any $p$, let $V_p = \{ u : (u, p) \mbox{ is + and } x_{pu} \in [0.5-\eta, 0.5+\eta] \}$. Every bad triangle $(p, u, v)$ centered at $p$ has $u, v \in V_p$. Consider a graph $G_p = (V_p, E_p)$ whose vertex set is $V_p$ and $(u, v)$ is an edge if and only if $(p,u,v)$ is a bad triangle centered at $p$; in particular, $x_{uv} > 1 - \eta$. 
We prove the following claim that if $(p, u, v)$ and $(p, v, w)$ are bad triangles, then $(p, u, w)$ cannot be bad. 

\begin{claim}
Suppose that $(u, v), (v, w) \in E_p$. Then $x_{uw} \leq 0.5 + 5\eta$.
\label{claim:badbad}
\end{claim}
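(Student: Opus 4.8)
The plan is to use the triangle inequality on the Sherali-Adams distances $x$, together with the defining constraints of bad triangles, to squeeze $x_{uw}$. Recall that if $(u,v),(v,w)\in E_p$, then both $(p,u,v)$ and $(p,v,w)$ are bad triangles centered at $p$. By the definition of a bad triangle, $x_{pu},x_{pv},x_{pw}\in[1/2-\eta,1/2+\eta]$, and moreover $x_{uv}>1-\eta$ and $x_{vw}>1-\eta$. The key point is that these ``long'' edges $(u,v)$ and $(v,w)$ both being nearly $1$ while all three edges to $p$ are nearly $1/2$ will be shown to force $(u,w)$ to be short (of distance at most $0.5+5\eta$), since otherwise we would violate the triangle inequality relating $u$, $w$, and one of the other vertices.

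First I would write down the triangle inequality $x_{uv}\le x_{uw}+x_{wv}$. Since $x_{uv}>1-\eta$ and $x_{wv}=x_{vw}\le 1$, this alone does not immediately help, so instead I would route through $p$. The cleanest route: apply $x_{uv}\le x_{up}+x_{pv}$ to get $1-\eta< x_{uv}\le (1/2+\eta)+(1/2+\eta)=1+2\eta$, which is consistent but uninformative; the useful inequalities go the other way. Instead consider $x_{up}\le x_{uw}+x_{wp}$, i.e. $x_{uw}\ge x_{up}-x_{wp}\ge (1/2-\eta)-(1/2+\eta)=-2\eta$, again only a lower bound. The genuinely useful inequality is the one that bounds $x_{uw}$ from above using the fact that $v$ is far from both $u$ and $w$ but close (distance $\approx 1/2$) to $p$. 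Concretely, $x_{uv}\le x_{up}+x_{pv}$ and $x_{vw}\le x_{vp}+x_{pw}$; adding and using $x_{uv},x_{vw}>1-\eta$ gives $2-2\eta < x_{up}+2x_{pv}+x_{pw}\le (1/2+\eta)+2(1/2+\eta)+(1/2+\eta)=2+4\eta$, consistent. So the real work is to combine a triangle inequality on $\{u,v,w\}$ with one on $\{p,\cdot,\cdot\}$.

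The approach I would actually carry out: from $x_{uv}\le x_{uw}+x_{wv}$ we get nothing since $x_{wv}$ can be close to $1$; but we also have the reverse-style bound via $p$. Use $x_{wv}\le x_{wp}+x_{pv}\le (1/2+\eta)+(1/2+\eta)=1+2\eta$ — still weak. The trick must be: since $x_{vw}>1-\eta$ and $x_{vp},x_{wp}\le 1/2+\eta$, the triangle inequality $x_{vw}\le x_{vp}+x_{pw}$ forces $x_{vp},x_{wp}$ to be \emph{large}, i.e. each is at least $x_{vw}-(1/2+\eta)>1-\eta-(1/2+\eta)=1/2-2\eta$; combined with the upper bound $1/2+\eta$ this pins $x_{vp},x_{wp}\in(1/2-2\eta,1/2+\eta]$, and likewise $x_{up},x_{vp}$ from the other bad triangle. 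Then apply $x_{uw}\le x_{up}+x_{pw}\le (1/2+\eta)+(1/2+\eta)=1+2\eta$ — still not good enough. So the decisive step has to use a triangle inequality of the form $x_{up}\le x_{uw}+x_{wp}$ \emph{combined} with a lower bound on $x_{up}$: but $x_{up}\ge 1/2-\eta$ only gives $x_{uw}\ge x_{up}-x_{wp}\ge -2\eta$. I therefore expect the intended argument routes $x_{uw}$ through $v$: namely $x_{uw}\le x_{uv}$? No. The correct chain, which I would verify carefully, is to note $x_{uw} + x_{wv} \ge x_{uv} > 1-\eta$ gives $x_{wv} > 1-\eta - x_{uw}$, and separately $x_{wv} \le x_{wp}+x_{pv} \le (1/2+\eta)+(1/2+\eta)$; these are compatible for any $x_{uw}$. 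Hence the real content must be an inequality I am missing, most likely $x_{uw} \le x_{up} + x_{pw}$ together with improved \emph{upper} bounds on $x_{up}$ and $x_{pw}$ coming from the long edges: since $x_{uv} \le x_{up} + x_{pv}$ and $x_{uv} > 1-\eta$ while $x_{pv} \le 1/2+\eta$, we get $x_{up} > 1/2 - 2\eta$ — a lower bound, not upper. The main obstacle, then, is identifying which two triangle inequalities to combine; I believe it is: $x_{vw} \le x_{vu} + x_{uw}$ is useless, but $x_{vp} \le x_{vu}? $ no. I expect the proof instead uses $x_{uw} \le x_{up} + x_{pw}$ is the wrong direction and one should use that $1 \ge x_{uv} \ge$ is tight, so: from $x_{uv} \le x_{up}+x_{pv} \le 1+2\eta$ and $x_{uv}>1-\eta$, and from the SA triangle inequality also $x_{up} \le x_{uv}+x_{vp}$, hence $x_{up} - x_{vp} \le x_{uv}$, and $x_{up}+x_{vp} \ge$ ... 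The hard part will be this bookkeeping; once the right pair of inequalities is isolated, it is a one-line substitution of the numerical bounds $1/2\pm\eta$ and $1-\eta$, yielding $x_{uw}\le 0.5+5\eta$ with room to spare, which finishes the claim.

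Since I am unsure of the exact combination, let me record the cleanest honest plan: the claim should follow from writing $x_{uw} \le x_{uv} + x_{vw}$ — no, that gives $\le 2$. It must follow from a \emph{four-point} condition. Given only the stated triangle inequalities, the argument I would pursue is: since $x_{pv} \ge x_{uv} - x_{up} > (1-\eta)-(1/2+\eta) = 1/2 - 2\eta$ and $x_{pv} \le 1/2+\eta$, we have $|x_{pv} - 1/2| \le 2\eta$; similarly $|x_{pu}-1/2|, |x_{pw}-1/2| \le 2\eta$ using the two long edges. Now the decisive inequality is $x_{uw} + x_{pv} \ge$ (nothing forces this). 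Thus the only remaining avenue, and the one I would commit to, is: \emph{$x_{uv} \le x_{uw} + x_{wv}$ and $x_{uv} > 1-\eta$ give $x_{uw} \ge 1-\eta-x_{wv} \ge 1-\eta - (1/2+\eta) - (1/2+\eta) $}, i.e.\ a lower bound again. I conclude that the intended proof must invoke an additional, possibly stronger, valid inequality of the Sherali–Adams relaxation beyond the displayed $x_{ij}\le x_{ik}+x_{jk}$ — for instance a ``reverse'' constraint implied by the 3-round consistency among $y_{uvw}, y_{u|vw}$, etc.\ that bounds $x_{uw}$ below $x_{uv}+x_{vw}-2x_{\cdot}$. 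The main obstacle, and the thing I would work out first, is writing down that extra inequality from the SA consistency constraints on $\{p,u,v,w\}$; after that, plugging in $x_{uv},x_{vw}>1-\eta$ and all $x$ to $p$ within $1/2\pm\eta$ will give $x_{uw}\le 0.5+5\eta$ by direct substitution.
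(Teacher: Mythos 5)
There is a genuine gap: your proposal is an honest record of attempts that you yourself recognize do not close the argument, and the decisive step is never supplied. You are right that no combination of the metric (triangle-inequality) constraints on $x$ can prove the claim --- indeed the standard LP integrality-gap instance (pivot at distance $1/2$ from everything, all other pairs at distance $1$) satisfies every triangle inequality while violating the conclusion, so any proof must use genuinely higher-round Sherali--Adams structure. You correctly suspect this at the end, but "there must be an extra valid inequality from the consistency constraints on $\{p,u,v,w\}$, which I would work out first" is precisely the missing content, not a proof.

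What the paper actually does is work directly with the local distribution on the four points $\{p,u,v,w\}$ guaranteed by the relaxation (this is where $r\geq 4$ rounds enter). It introduces the probabilities $q_0,q_u,q_v,q_w,q_{uv},q_{uw},q_{vw},q_{uvw}$ that $p$ is clustered with exactly the indicated subset of $\{u,v,w\}$, and records the constraints coming from badness: each $1-x_{p\cdot}=q_\cdot+\dots \in [1/2-\eta,1/2+\eta]$, while $q_{uv}+q_{uvw}\leq 1-x_{uv}\leq \eta$ and $q_{vw}+q_{uvw}\leq 1-x_{vw}\leq \eta$. Two substitutions give $q_u+q_{uw}\geq 1/2-2\eta$ and $q_w+q_{uw}\geq 1/2-2\eta$; adding these to the constraint for $x_{pv}$ and subtracting the normalization $\sum q = 1$ yields $q_{uw}\geq 1/2-5\eta$, hence $y_{uw}\geq y_{puw|v}=q_{uw}\geq 1/2-5\eta$ and $x_{uw}\leq 1/2+5\eta$. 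This is the negative-correlation phenomenon the paper exploits throughout ($p$ being with $v$ is nearly incompatible with $p$ being with $u$ or $w$, so $p$ is with both $u$ and $w$ with probability close to $1/2$); your plan, which stays at the level of pairwise distances, cannot see it, so the proposal as written does not establish the claim.
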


\begin{proof}
Consider the local distribution on $\{ p, u, v, w \}$ and let 
\begin{itemize}
\item $q_0 = y_{p|uvw} + y_{p|u|vw} + y_{p|uv|w} + y_{p|uw|v} + y_{p|u|v|w}$ (i.e., the probability that $p$ does not belong to the same cluster with any of $u, v, w$).
\item $q_u = y_{pu|vw} + y_{pu|v|w}$ (i.e., the probability that $p$ belongs to the same cluster with only $u$).
\item $q_v = y_{pv|uw} + y_{pv|u|w}$ (i.e., the probability that $p$ belongs to the same cluster with only $v$).
\item $q_w = y_{pw|uv} + y_{pw|u|v}$ (i.e., the probability that $p$ belongs to the same cluster with only $w$).
\item $q_{uv} = y_{puv|w}$.
\item $q_{uw} = y_{puw|v}$.
\item $q_{vw} = y_{pvw|u}$.
\item $q_{uvw} = y_{puvw}$. 
\end{itemize}
Then we have
\begin{align}
& q_0 + q_u + q_v + q_w + q_{uv} + q_{uw}  + q_{vw} + q_{uvw} = 1 \label{eq:bi1} \\
& 1 - x_{pu} = q_u + q_{uv} + q_{uw} + q_{uvw} \in [1/2 - \eta, 1/2 + \eta] \label{eq:biu} \\
& 1 - x_{pv} = q_v + q_{uv} + q_{vw} + q_{uvw} \in [1/2 - \eta, 1/2 + \eta] \label{eq:biv} \\
& 1 - x_{pw} = q_w + q_{uw} + q_{vw} + q_{uvw} \in [1/2 - \eta, 1/2 + \eta] \label{eq:biw} \\
& \eta \geq 1 - x_{uv} \geq q_{uv} + q_{uvw} \label{eq:biuv} \\
& \eta \geq 1 - x_{vw} \geq q_{vw} + q_{uvw}. \label{eq:bivw}
\end{align}
By~\eqref{eq:biu} and~\eqref{eq:biuv}, $q_{u} + q_{uw} \geq 1/2 - 2\eta$.
By~\eqref{eq:biw} and~\eqref{eq:bivw}, $q_{w} + q_{uw} \geq 1/2 - 2\eta$.
Adding these two inequalities and~\eqref{eq:biv} implies 
\[
q_u + q_w + 2q_{uw} + (q_v + q_{uv} + q_{vw} + q_{uvw}) \geq 3/2 - 5\eta. 
\]
Subtracting~\eqref{eq:bi1} from the above implies that $q_{uw} \geq 1/2 - 5\eta$. 
\end{proof}

As an example, note that if $x_{pu} = x_{pv} = x_{pw} = 0.5$ and $x_{uv} = x_{vw} = 1$, then both $u$ and $w$ belong to $p$'s cluster simultaneously if and only if $v$ does not belong to it, which implies that $y_{uw} \geq y_{puw} = 0.5$.

Call $(p, u, v)$ a {\em chargeable} non-degenerate triangle centered
at $p$ if $u, v \in V_p$ and $x_{uv} \leq 1/2 + 5\eta$.  (This is
irrespective of the sign of edge $(u,v)$.)  Note that the definition
$\eta = \finaleta$ ensures $1/2 + 5\eta \leq 1 - \eta$, which means
that no triangle can be both chargeable and bad.  Also, call any
$+$edge $(p, u)$ with $x_{pu} \in [1/2-\eta, 1/2+\eta]$ a chargeable
degenerate triangle or chargeable edge. It is centered at both $p$ and
$u$.  Using the fact that $(u, v), (v, w) \in E_p$ implies that $(p,
u, w)$ is a chargeable triangle, one can prove the following claim.

\begin{claim}
For any $p$, the number of bad triangles centered at $p$ is at most the number of chargeable triangles (non-degenerate and degenerate combined) centered at $p$. 
\label{claim:numbad}
\end{claim}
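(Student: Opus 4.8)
The plan is to fix $p$ and set up a combinatorial argument on the graph $G_p = (V_p, E_p)$ whose edges record the bad triangles centered at $p$. The quantity we want to bound, the number of bad triangles centered at $p$, is exactly $|E_p|$. On the other side, the chargeable triangles centered at $p$ come in two flavors: the non-degenerate ones, which are pairs $\{u,v\} \subseteq V_p$ with $x_{uv} \le 1/2 + 5\eta$ (a superset of $E_p$ by Claim~\ref{claim:badbad}, since every edge of $E_p$ actually has $x_{uv} > 1-\eta$ — wait, those are precisely the ones that are \emph{not} chargeable), and the degenerate ones, which are the vertices $u \in V_p$ with $(p,u)$ a chargeable edge. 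Note that \emph{every} vertex of $V_p$ gives a chargeable degenerate triangle centered at $p$, so we already have at least $|V_p|$ chargeable triangles for free. Thus it suffices to show $|E_p| \le |V_p| + (\text{number of chargeable non-degenerate triangles centered at } p)$.

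The key structural input is Claim~\ref{claim:badbad}: if $\{u,v\}, \{v,w\} \in E_p$ then $x_{uw} \le 1/2 + 5\eta$, i.e. the pair $\{u,w\}$ is a chargeable non-degenerate triangle centered at $p$ (and, by the $\eta$ bound, not itself in $E_p$). In other words, for every path of length two in $G_p$, the pair of endpoints is a distinct chargeable non-degenerate triangle. So the plan is: first I would reduce to bounding $|E_p|$ by $|V_p|$ plus the number of "endpoint pairs of length-two paths in $G_p$." A clean way to organize this is to work component by component of $G_p$. Within a connected component $C$ with $c = |C|$ vertices, I claim the number of edges $|E_p[C]|$ is at most $c$ plus the number of non-adjacent vertex pairs in $C$ that are connected by a length-two path. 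If $C$ is a tree, $|E_p[C]| = c - 1 \le c$ and we are done with no non-degenerate chargeables needed. If $C$ has $|E_p[C]| \ge c$, then it contains at least one cycle, hence (being connected) it has "many" length-two paths; I want to exhibit at least $|E_p[C]| - c$ distinct endpoint-pairs of length-two paths within $C$, each non-adjacent in $G_p$ (so genuinely a new chargeable non-degenerate triangle, not double-counted against a bad triangle).

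The main obstacle I expect is precisely this last counting step: making sure the length-two-path endpoint pairs I produce are (a) pairwise distinct across all components — this is automatic since components are vertex-disjoint — and (b) not accidentally edges of $G_p$ (so they really are chargeable non-degenerate rather than bad). For (b), if $\{u,w\}$ were an edge of $G_p$ and also the endpoints of a length-two path through $v$, then $\{u,v\},\{v,w\},\{u,w\}$ would be a triangle in $G_p$, meaning $(p,u,v),(p,v,w),(p,u,w)$ are all bad; but Claim~\ref{claim:badbad} applied to the path $u$–$v$–$w$ forces $x_{uw}\le 1/2+5\eta < 1-\eta$, contradicting $\{u,w\}\in E_p$. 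Hence $G_p$ is triangle-free, which simplifies everything: for a triangle-free connected graph on $c$ vertices with $m$ edges, fixing a spanning tree leaves $m - (c-1)$ non-tree edges, and each non-tree edge $\{u,w\}$ together with the tree creates a length-two path (in fact a cycle of length $\ge 4$) certifying a distinct non-adjacent pair — but I need to choose these pairs injectively, which I would do by, e.g., picking for each non-tree edge a fixed length-two path on its fundamental cycle and arguing the resulting endpoint pairs are distinct. Summing $m \le (c-1) + (m - c + 1) \le c + (\text{chargeable non-deg in } C)$ over components and adding the $|V_p|$ degenerate chargeables gives the bound. I would double-check the injectivity of the pair-selection carefully, as that is the one place a naive argument could overcount.
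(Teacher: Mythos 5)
Your setup coincides with the paper's: bad triangles centered at $p$ are the edges of $G_p$, every vertex of $V_p$ supplies a chargeable degenerate triangle, and Claim~\ref{claim:badbad} makes every non-adjacent common-neighbor pair of $G_p$ a chargeable non-degenerate triangle while forcing $G_p$ to be triangle-free; so, as you say, everything reduces to showing that in each triangle-free component with $c$ vertices and $m$ edges the number of non-adjacent pairs with a common neighbor is at least $m-c$. That reduction is fine. The gap is in how you propose to prove this inequality. Your plan --- fix a spanning tree and assign to each of the $m-c+1$ non-tree edges a distinct distance-two pair lying on its fundamental cycle --- is not merely an injectivity check to be done carefully: for some spanning trees no such assignment exists, not even approximately. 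Take a component containing $K_{n,n}$ with sides $\{a_1,\dots,a_n\}$ and $\{b_1,\dots,b_n\}$, and the double-star tree consisting of all edges $a_1b_j$ together with the edges $b_1a_i$, $i\ge 2$. Every non-tree edge $a_ib_j$ with $i,j\ge 2$ has the $4$-cycle $a_i\,b_1\,a_1\,b_j$ as its fundamental cycle, so the only endpoint pairs your rule can ever produce are $\{a_1,a_i\}$ and $\{b_1,b_j\}$ --- at most $2(n-1)$ distinct pairs in total --- while there are $(n-1)^2$ non-tree edges and you need at least $m-c=n^2-2n$ distinct pairs. (The target inequality is still true here, since $K_{n,n}$ has $n(n-1)$ distance-two pairs; it is your certificate that undercounts.) Rescuing the argument would require either proving that a suitably chosen spanning tree always admits such an assignment --- itself a nontrivial claim needing a matching/system-of-distinct-representatives argument --- or abandoning the fundamental-cycle restriction and counting common-neighbor pairs globally.

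The paper does the latter: fixing $u$ and running BFS in $G_p$, triangle-freeness gives at least $\max_{w\in N(u)}\deg(w)-1$ second-level vertices, hence that many common-neighbor pairs containing $u$, and then a Hall's-theorem matching argument shows $\sum_{u}\max_{w\in N(u)}\deg(w) \ge \sum_u \deg(u) = 2|E_p|$, which is exactly the inequality you need. Some ingredient of this strength (or another global count of distance-two pairs) is what is missing from your sketch.
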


\begin{proof}
The number of bad triangles centered at $p$ is $|E_p|$, the number of chargeable edges centered at $p$ is $|V_p|$, and the number of chargeable non-degenerate triangles centered at $p$ is the number of pairs $(u, w) \in \binom{V_p}{2}$ such that $u \neq w$ and $(u, v), (v, w) \in E_p$ for some $v \in V_p$ (which implies that $(u, w) \notin E_p$ since $1/2 + 5\eta \leq 1 - \eta$). Let $F_p$ denote the set of such pairs. Note that $E_p$ and $F_p$ are disjoint.

Fix $u \in V_p$ and consider the BFS tree on $G_p$ starting from $u$. With the root being at the zeroth level, the vertices $v$ such that $(u, v) \in F_p$ are exactly the vertices at the second level of the BFS tree. Since there is no triangle in $E_p$, the number of vertices in the second level is at least $\max_{w \in N(u)} \deg(w) - 1$, where $N(u)$ denotes the neighbors of $u$ in $G_p$. So, 
\[
|F_p| \geq \frac{1}{2} \sum_{u \in V_p} (\max_{w \in N(u)} \deg(w) - 1)
\]
and the total number of chargeable triangles is
\[
|F_p| + |V_p| \geq \frac{1}{2} \sum_{u \in V_p} \max_{w \in N(u)} \deg(w).
\]

We finally prove that 
\[
\sum_{u \in V_p} \max_{w \in N(u)} \deg(w) \geq
\sum_{u \in V_p} \deg(u) = 2|E_p|,
\]
which finishes the proof the claim. 
Let $U = \{ u \in V_p : \deg(u) > \max_{w \in N(u)} \deg(w) \}$. Note that $U$ is an independent set in $G_p$. 

We would like to show that there exists a matching between $U$ and $V \setminus U$ saturating $U$. In order to see it, for any $U' \subseteq U$, let $V' = \cup_{u \in U'} N(u)$ and $G'$ be the bipartite graph with vertex set $U' \cup V'$ and the edge set $E_p \cap (U' \times V')$. Let $\deg'(\cdot)$ denote the degree in $G'$, and note that for $(u, v) \in E'$, $\deg'(u) = \deg(u) > \deg(v) \geq \deg'(v)$ by construction. 
Without loss of generality, let $U' = \{ u_1, \dots, u_k \}$ and $V' = \{ v_1, \dots, v_{\ell} \}$ with $\deg'(u_1) \leq \dots \leq \deg(u'_k)$ and $\deg'(v_1) \leq  \dots \leq \deg'(v_{\ell})$. If $\ell < k$, since $|E'| = \sum_{i=1}^k \deg'(u_i) = \sum_{i=1}^{\ell} \deg'(v_i)$, there exists $t \in [\ell]$ such that $\deg'(u_t) \leq \deg'(v_t)$ and $\deg'(u_i) > \deg'(v_i)$ for $i = 1, \dots, t - 1$. However, note that all edges from $u_1, \dots, u_t$ go to $v_1, \dots, v_{t-1}$ while $\sum_{i=1}^t \deg'(u_i) > \sum_{i=1}^{t-1} \deg'(v_i)$, which is contradiction. Therefore, $|V'| \geq |U'|$ for all $U'$, and by Hall's condition, there exists a matching between $U$ and $V_p \setminus U$ saturating $U$. 

Let $(u_1, v_1), \dots, (u_k, v_k)$ be such a matching where $|U| = k$. Let $V' = \{ v_1, \dots, v_k \}$. Note that $\max_{w \in N(v)} \deg(w) \geq \deg(v)$ for every $v \notin U$. Therefore,
\[
\sum_{u \in V_p} \max_{w \in N(u)} \deg(w) 
\geq
\sum_{v \in V_p \setminus (U \cup V')} \deg(v) 
+ \sum_{i=1}^k \big( (\max_{w \in N(u_i)} \deg(w)) +  (\max_{w \in N(v_i)} \deg(w)) \big)
\geq
\sum_{u \in V_p} \deg(u),
\]
which finishes the proof.
\end{proof}

So, around each center $p$, we can let $h_\tau(T) = -\tau$ for every bad triangle $T$ and increase $h_\tau(T)$ by $\tau$ for every chargeable triangle $T$. Chargeable non-degenerate triangles are increased at most three times, and chargeable degenerate triangles are increased at most twice, so $h_\tau(T) \leq 3\tau$ for every $T$.

\subsection{Incorporating Error from Correlated Rounding}
Recall that $\costr(\cdot), \lpr(\cdot)$ are defined with respect to actual rounding, 
and our goal is to bound the ratio $ALG / LP$ where 
\[
n\cdot ALG = \sum_{u \in V} \sum_{(v, w) \in \binom{V}{2}} \costr_u(v, w)
= \sum_{u, v, w \in \binom{V}{3}} \costr(w, u, v) + \sum_{u, v \in \binom{V}{2}} \costr(u, v),
\]
and 
\[
n\cdot LP = \sum_{u \in V} \sum_{(v, w) \in \binom{V}{2}} \lpr_u(v, w)
= \sum_{u, v, w \in \binom{V}{3}} \lpr(w, u, v) + \sum_{u, v \in \binom{V}{2}} \lpr(u, v).
\]

Call a (non-degenerate) triangle $(a, b, c)$ {\em rounded with correlation} when one vertex was the pivot, both of the other vertices are rounded in a correlated manner. 
Let $\calR$ be the set of triangles rounded with correlation. Note that a triangle is in $\calR$ if only if it has at least two medium $+$edges. 
We prove the following claim that relates $ALG, LP$ defined using $\costr, \lpr$ to $\costs, \lps$. 

\begin{claim}
\[
\frac{ALG}{LP} 
\leq \frac{\sum_{u, v, w \in \binom{V}{3}} \costs(w, u, v) + \sum_{T \in \calR} O(\eps_r) + \sum_{u, v \in \binom{V}{2}} \costs(u, v)}{\sum_{u, v, w \in \binom{V}{3}} \lps(w, u, v) - \sum_{T \in \calR} O(\eps_r)  + \sum_{u, v \in \binom{V}{2}} \lps(u, v)}.
\]
\label{claim:realworld}
\end{claim}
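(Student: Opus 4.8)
The plan is to compare, triangle by triangle, the ``real-world'' quantities $\costr(\cdot),\lpr(\cdot)$ with the idealized $\costs(\cdot),\lps(\cdot)$, and then add up. The key observation is that $\costr_u(v,w)$ and $\costs_u(v,w)$ (and likewise the $\lpr,\lps$ versions) are defined by exactly the same rounding procedure \emph{except} when both $(u,v)$ and $(u,w)$ are medium $+$edges, i.e.\ exactly when $\{u,v,w\}$ is rounded with correlation from pivot $u$. So for every triangle $T\notin\calR$ we have $\costr(T)=\costs(T)$ and $\lpr(T)=\lps(T)$ identically, and the same holds for all degenerate triangles (which are never rounded with correlation). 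Hence the only discrepancy comes from $T\in\calR$.

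First I would bound the per-triangle discrepancy for $T=(a,b,c)\in\calR$. Fix the pivot, say $a$, with $(a,b),(a,c)$ medium $+$edges (there may be one or two or three such pivot choices among $\{a,b,c\}$ for which two incident edges are medium $+$; handle each separately and sum). Conditioned on $a$ being the pivot, both $\costr_a(b,c)$ and $\costs_a(b,c)$ are affine functions of $\Pr[b,c\in S'\mid a\text{ pivot}]$ with bounded coefficients — indeed $\costr_a(b,c)$ and $\lpr_a(b,c)$ are linear in that pairwise probability and in the (identical, marginally correct) singleton probabilities, and the coefficient on the pairwise term is at most $1$ in absolute value — so
\[
\bigl|\costr_a(b,c)-\costs_a(b,c)\bigr|\le \frac{1}{n}\,\bigl|\Pr[b,c\in S'\mid a\text{ pivot}]-y_{abc}\bigr|,
\]
and similarly for $\lpr$ vs $\lps$, the $1/n$ coming from the probability that $a$ is chosen as the pivot. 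Summing this bound over all triangles $T\in\calR$ and over the (at most three) relevant pivot choices, and pulling out the expectation over the pivot, Lemma~\ref{assump:rounding} gives
\[
\sum_{T\in\calR}\bigl|\costr(T)-\costs(T)\bigr|
\;\le\; \E_{u\in V}\sum_{\substack{v,w\in I_u}}\bigl|\Pr[v,w\in S'\mid u\text{ pivot}]-y_{uvw}\bigr|
\;\le\; |\calR|\cdot O(\eps_r),
\]
since the inner average over $v,w\in I_u$ is at most $\eps_r$ by the lemma and $|I_u|^2\le n^2$ is absorbed appropriately — more carefully, writing the inner sum as $|I_u|^2$ times the average, the total is at most $\E_u[|I_u|^2\eps_r]\le n^2\eps_r$, and since $|\calR|$ can be as small as $n$ one should instead charge $O(\eps_r)$ to each \emph{pair} and note that each pair $(v,w)$ sits in exactly one triangle $\{u,v,w\}$ per pivot, giving the stated ``$\sum_{T\in\calR}O(\eps_r)$'' form. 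The same argument bounds $\sum_{T\in\calR}|\lpr(T)-\lps(T)|$ by $\sum_{T\in\calR}O(\eps_r)$.

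Finally I would assemble the ratio. Using $n\cdot ALG=\sum_{\binom V3}\costr+\sum_{\binom V2}\costr$ and the matching identity for $n\cdot LP$, replace each $\costr(T)$ by $\costs(T)$ incurring an additive error of at most $O(\eps_r)$ for $T\in\calR$ and zero otherwise, which only increases the numerator; replace each $\lpr(T)$ by $\lps(T)$ incurring the same error, which only decreases the denominator. (Here one uses that the denominator stays positive — $\lps(T)$ for a correlated triangle has two medium $+$edges, each contributing a $\lps$ value bounded away from $0$, comfortably dominating the $O(\eps_r)$ subtraction for $r$ large, as anticipated in Remark~\ref{remark:short}.) Since $\frac{A}{B}\le\frac{A'}{B'}$ whenever $A\le A'$, $0<B'\le B$, this yields exactly the claimed inequality. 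The main obstacle is the bookkeeping in the second step: making the $1/n$ pivot-probability factors, the averaging in Lemma~\ref{assump:rounding}, and the conversion from a ``per pair'' to a ``per triangle in $\calR$'' error account line up cleanly, and confirming that the coefficient of the pairwise correlation probability in both $\costr_a(b,c)$ and $\lpr_a(b,c)$ is genuinely $O(1)$ so that the error does not blow up.
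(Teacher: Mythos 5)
Your proposal follows the paper's own proof essentially verbatim: you observe that only triangles in $\calR$ can differ, use Lemma~\ref{assump:rounding} to bound the total pairwise discrepancy for a fixed pivot $p$ by $|I_p|^2\eps_r$, amortize this as $O(\eps_r)$ per triangle $(p,u,v)$ with $u,v\in I_p$ because there are $\Omega(|I_p|^2)$ such triangles (your ``charge $O(\eps_r)$ to each pair'' step), and then increase the numerator and decrease the denominator accordingly. The only slip is the spurious $1/n$ factor in your per-triangle bound---$\costr_a(b,c)$ is already conditioned on $a$ being the pivot, so no pivot probability enters---but since the ratio is invariant under rescaling both $ALG$ and $LP$ and your final per-pair accounting coincides with the paper's, this does not affect the argument.
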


\begin{proof}
Note that $\costs(w, u, v) = \costr(w, u, v)$ and $\lps(w, u, v) = \lpr(w, u, v)$ if $(w, u, v) \notin \cal{R}$, so we only need to worry about triangles rounded with correlation.

Fix a pivot $p$ and let $S' \subseteq I_p$ be the random set actually sampled by the algorithm. Then for any $(p, u, v)$ with $u, v \in I_p$, changing from $\costr_p(u, v)$ to $\costs_p(u, v)$ increases the total $ALG$ by at most $3 |\Pr[u, v \in S' \mid p \mbox{ pivot}] - y_{puv}|$. Similarly, changing from $\lpr_p(u, v)$ to $\lps_p(u,v)$ decreases the total LP by at most $3|\Pr[u, v \in S' \mid p \mbox{ pivot}] - y_{puv}|$.

Lemma~\ref{assump:rounding} guarantees that 
\[
\sum_{u, v \in I_p} |\Pr[u, v \in S'] - y_{puv}|
\leq |I_p|^2 \cdot \eps_r,
\]
and there are $\Omega(|I_p|^2)$ triangles of the form $(p, u, v)$ with $u, v \in I_p$. Therefore, 
\[
\sum_{u, v \in I_p} \costr_p(u, v) \leq 
\sum_{u, v \in I_p} (\costs_p(u, v) + O(\eps_r))
\]
and 
\[
\sum_{u, v \in I_p} \lpr_p(u, v) \geq 
\sum_{u, v \in I_p} (\lps_p(u, v) - O(\eps_r)),
\]
so converting $\costr_p, \lpr_p$ to $\costs_p, \lps_p$ for all triangles $\{ (p, u, v) : u, v \in I_p \}$ only increases the $ALG/LP$ ratio if we increase $\costs_p(u, v)$ and decrease $\lps_p(u, v)$ for all these triangles by $O(\eps_r)$. 
Do such a conversion from $\costr,\lpr$ to $\costs,\lps$ for every pivot. 
\end{proof}

\subsection{Finishing Off}
\label{subsec:finish}
We are finally ready to bound $ALG/LP$. 
\begin{lemma}
$ALG/LP \leq \ratio + O(\eps_r)$. 
\label{lem:final}
\end{lemma}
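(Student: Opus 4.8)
The plan is to chain together the per-triangle estimates of \Cref{lem:triangle_short}, the charging function $h_\tau$ constructed in \Cref{sec:charging}, and \Cref{claim:realworld}. I would first establish the fully ``idealized'' inequality
\[
\sum_{T\in\calT}\costs(T)+\sum_{D\in\calD}\costs(D)\;\le\;\alpha_0\Big(\sum_{T\in\calT}\lps(T)+\sum_{D\in\calD}\lps(D)\Big),\qquad \alpha_0:=\ratio,
\]
and then transfer it to $\costr,\lpr$ through \Cref{claim:realworld}, absorbing the correlated-rounding error $\eps_r$.

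For the idealized inequality I would use the standard ``subtract a balanced potential'' device: since $\sum_{T\in\calT\cup\calD}h_\tau(T)\ge 0$, it is enough to find a single $\tau>0$ with $\costs(T)\le\alpha_0\lps(T)-h_\tau(T)$ for every $T\in\calT\cup\calD$. I would set $\tau:=(2-\alpha_0)M_b$, where $M_b$ is an explicit upper bound on $\lps(T)$ over all bad triangles, obtained by plugging $x_{uv},x_{uw}\in[1/2-\eta,1/2+\eta]$ and $x_{vw}>1-\eta$ into the closed forms for $\lps_u(v,w),\lps_v(u,w),\lps_w(u,v)$ from \Cref{subsec:setup}. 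Then verify the target inequality in four cases. (i) For a bad triangle, $h_\tau(T)=-\tau$ and \Cref{lem:triangle_short} gives $\costs(T)\le 2\lps(T)$, so $\costs(T)-\alpha_0\lps(T)\le(2-\alpha_0)\lps(T)\le(2-\alpha_0)M_b=\tau$. (ii) For a chargeable non-degenerate triangle $\{p,u,v\}$ centered at $p$, $h_\tau(T)\le 3\tau$; it is a $+++$ or non-bad $++-$ triangle, so $\costs(T)\le(2-\gamma)\lps(T)$ by \Cref{lem:triangle_short}, while the two medium $+$edges $(p,u),(p,v)$ at the center (distances in $[1/2-\eta,1/2+\eta]$, rounded with correlation in the idealized algorithm) force $\lps(T)\ge\lps_u(p,v)+\lps_v(p,u)\ge 2(1/2-\eta)^2=:c_0$, so it suffices that $3\tau\le(\alpha_0-2+\gamma)c_0$. (iii) For a chargeable edge $D$, $h_\tau(D)\le 2\tau$, $\costs(D)\le\lps(D)$ by the table, and $\lps(D)\ge 1-2\eta$ since $D$ disappears with probability $1$ whenever one of its endpoints is the pivot, so it suffices that $2\tau\le(\alpha_0-1)(1-2\eta)$. (iv) Every other $T$ has $h_\tau(T)=0$ and ratio at most $\max(\ppprats,\pmmrats,1,2-\gamma)=2-\gamma<\alpha_0$ by \Cref{lem:triangle_short}. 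Summing $\costs(T)\le\alpha_0\lps(T)-h_\tau(T)$ over $T\in\calT\cup\calD$ and using $\sum_T h_\tau(T)\ge 0$ proves the idealized inequality; after substituting $\alpha_0=\ratio$, $\gamma=\finalgamma$, $\eta=\finaleta$ and the explicit $M_b,c_0$, the conditions in (ii) and (iii) reduce to a finite numerical check.

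To finish, I would invoke \Cref{claim:realworld}, whose only loss is the additive $\sum_{T\in\calR}O(\eps_r)$ in its numerator and denominator. Every $T\in\calR$ contains at least two medium $+$edges; these share a vertex $p$, and with $u,v$ the other two vertices of $T$ one gets $\lps(T)\ge\lps_u(p,v)\ge x_{pv}(1-x_{pu})\ge\delta^2$, using that a medium edge has distance in $(\delta,1-\delta)$ and that in the idealized rounding $\Pr[p\in S\mid u\text{ pivot}]=1-x_{pu}$. Hence $\sum_{T\in\calR}\lps(T)\ge\delta^2|\calR|$, so $\sum_{T\in\calR}O(\eps_r)=O(\eps_r)|\calR|$ is an $O(\eps_r/\delta^2)=O(\eps_r)$ fraction of the idealized denominator (as $\delta=\threshold$ is a fixed constant). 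Plugging this and the idealized inequality into \Cref{claim:realworld} gives
\[
\frac{ALG}{LP}\;\le\;\frac{\alpha_0+O(\eps_r)}{1-O(\eps_r)}\;=\;\alpha_0+O(\eps_r)\;=\;\ratio+O(\eps_r).
\]

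I expect the main obstacle to be the idealized inequality, and inside it the numerical condition in case (ii) --- essentially $3(2-\ratio)M_b\le(\ratio-2+\gamma)c_0$ --- which is precisely what the choice of the parameters $\ratio$, $\gamma=\finalgamma$, $\eta=\finaleta$ is tuned to make true (the slack in cases (iii) and (iv) is comfortable, and the bound $\costs(T)\le 2\lps(T)$ needed in (i) is the worst-case table entry). By contrast, the passage from the idealized quantities $\costs,\lps$ back to $\costr,\lpr$ is entirely delegated to \Cref{claim:realworld} together with the elementary $\delta^2$ lower bound above, and is just bookkeeping.
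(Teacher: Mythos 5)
Your proposal is correct and follows essentially the same route as the paper's proof: it combines Lemma~\ref{lem:triangle_short}, the charging function $h_\tau$ from Section~\ref{sec:charging}, Claim~\ref{claim:realworld}, and the same upper/lower bounds on $\lps$ for bad, chargeable, and correlated-rounded triangles (the paper's Claim~\ref{claim:upperlower}, with your $M_b$ playing the role of $\zeta_u$ and $c_0=\zeta_l$). The only differences are organizational --- you fix $\tau=(2-\ratio)M_b$ and sum per-triangle inequalities instead of optimizing $\tau$ over maximum ratios, and you absorb the $O(\eps_r)$ error in aggregate over $\calR$ via $\lps(T)\geq\delta^2$ rather than per triangle --- and your numerical conditions are satisfied by the paper's parameter choices.
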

\begin{proof}
By Claim~\ref{claim:realworld}, it suffices to bound 
\[
\frac{\sum_{u, v, w \in \binom{V}{3}} \costs(w, u, v) + \sum_{T \in \calR} O(\eps_r) + \sum_{u, v \in \binom{V}{2}} \costs(u, v)}{\sum_{u, v, w \in \binom{V}{3}} \lps(w, u, v) - \sum_{T \in \calR} O(\eps_r)  + \sum_{u, v \in \binom{V}{2}} \lps(u, v)}.
\]
Recall that
the only triangles whose $\costs/\lps$ ratio is greater than $2 - \gamma$ are bad triangles (i.e., $++-$ triangle with LP value $x,y,z$ such that $x, y \in [0.5 - \eta, 0.5 + \eta]$ and $z \in (1 - \eta, 1]$). Let $\tau > 0$ to be determined, and consider $h_\tau : \calT \cup \calD \to \R$ constructed in Section~\ref{sec:charging}. Since $\sum_T h_\tau(T) \geq 0$, adding $h_\tau(T)$ to the numerator only increases the ratio. 
Let $\calB$ be the set of bad triangles, and $\calC$ be the set of chargeable triangles (non-degenerate and degenerate). Recall that $h_\tau(T) = -\tau$ for $T \in \calB$, $h_\tau(T) \leq 3\tau$ for $T \in \calC$ and $0$ for other triangles. 
Then, the final ratio can be upper bounded by $(\sum_{T \in \calT \cup \calD} \costf(T)) / (\sum_{T \in \calT \cup \calD} \lpf(T))$ where $\1$ denotes the indicator function and 
\begin{itemize}
\item $\costf(T) := \costs(T) + \1[T \in \calR] O(\eps_r) - \1[T \in \calB] \tau + \1[T \in \calC] 3\tau$.
\item $\lpf(T) := \lps(T) - \1[T \in \calR] O(\eps_r)$. 
\end{itemize}

Now we finish by analyzing each triangle individually. Recall that $\calB$ and $\calC$ are disjoint (as discussed in Section \ref{sec:charging}).  We prove the upper bound and lower bounds for $\lps(T)$ when $T$ is bad, chargeable, or rounded with correlation. 
\begin{claim}
For any $T \in \calB \cup \calC$, $\lps(T) \geq \zeta_l := 2 \cdot (1/2 - \eta)^2$. 
For any $T \in \calB$, $\lps(T) \leq \zeta_u := 2(1/2+2 \eta)(1/2 + \eta) + \eta$. 
For any $T \in \calR$, $\lps(T) \geq 2\delta^2$. 
\label{claim:upperlower}
\end{claim}

We also show the upper bound and lower bounds for $\lps(T)$ when $T$ is bad or chargeable, proving Claim~\ref{claim:upperlower}.
\begin{proof}[Proof of Claim \ref{claim:upperlower}]
For the first claim, consider $T \in \calB \cup \calC$ and assume $T$ is non-degenerate. It means that $T = \{ a, b, c \}$ with two $+$edges $(a, b)$ and $(a, c)$ with $x_{ab}, x_{ac} \in [1/2 - \eta, 1/2 + \eta]$.
When $b$ is the pivot, $a$ belongs to $b$'s cluster with probability at least $1/2 - \eta$, and in that case, $(a, c)$, who was contributing at least $1/2 - \eta$ to LP, is removed from the graph.  
One can apply the same argument when $c$ is the pivot is the pivot to ensure that $\lps(T) \geq 2 \cdot (1/2 - \eta)^2$. 
If $T = \{ a, b\}$ is degenerate, $\lps(T) \geq 2 lp(a, b) = 2 (1/2 - \eta)$. 

For the second claim, let $T = \{ a, b, c \}$ be a bad triangle with two $+$edges $(a, b)$ and $(a, c)$ with $x_{ab}, x_{ac} \in [1/2 - \eta, 1/2 + \eta]$ and $x_{bc} \in (1-\eta, 1]$. Then when $b$ is the pivot, the edge $(a, c)$ will be removed when $a$ or $c$ belongs to the same cluster with $b$, which happens with probability at most $y_{ba} + y_{bc} \leq 1/2 + 2\eta$. The case for $c$ is symmetric, so even assuming that $(b, c)$ is always removed when $a$ is the pivot, $\lps(T) \leq 2\cdot (1/2+2\eta)(1/2+\eta) + \eta$. 

For the third claim, consider $T \in \calR$. It means that $T = \{ a, b, c \}$ with two $+$edges $(a, b)$ and $(a, c)$ with $x_{ab}, x_{ac} \in [\delta, 1 - \delta]$. When $b$ is the pivot, $a$ belongs to $b$'s cluster with probability at least $\delta$, and in that case, $(a, c)$, who was contributing at least $\delta$ to LP, is removed from the graph.   One can apply the same argument when $c$ is the pivot to ensure that $\lps(T) \geq 2 \delta^2$.
\end{proof}


Note that $2\delta^2 = 0.02$ and with $\eta := \finaleta$, we have $0.8612 \geq \zeta_u \geq \zeta_l \geq 0.3472$. 
We finally compute the ratio for each type of triangle. Note than whenever additive $O(\eps_r)$ is applied, we make sure that the denominator $\lps(T)$ is at least some absolute constant. 

\begin{itemize}
\item For $T \in \calB$: Since $\lps(T) \geq \zeta_l$, $\costf(T) / \lpf(T) = (\costs(T) + O(\eps_r) - \tau) /(\lps(T) - O(\eps_r)) \leq 2 + O(\eps_r) - \tau / \zeta_u$. 

\item For $T \in \calC$: Since $\lps(T) \geq \zeta_l$ too, $\costf(T) / \lpf(T) \leq 2 - \gamma + O(\eps_r) + 3\tau / \zeta_l$. 

\item For $T \in \calR \setminus (\calC \cup \calB)$: Since $\lps(T) \geq 2\delta^2$, so $\costf(T) / \lpf(T) \leq 2 - \gamma + O(\eps_r)$.

\item For all other $T$: $\costf(T)/\lpf(T) = \costs(T) / \lps(T) \leq 2 - \gamma$. 
\end{itemize}
The maximum ratio is $\max(2 - \tau / \zeta_u, 2 - \gamma + 3\tau / \zeta_l) + O(\eps_r)$. 
Setting $\tau$ such that 
\[
2 - \tau/0.8612 = 2 - \gamma + 3\tau / 0.3472 \quad \Rightarrow \quad \tau = \frac{\gamma}{1/0.8612 + 3/0.3472} \approx 0.0055,
\]
the final approximation ratio is $2 - \tau / 0.8612 + O(\eps_r) \leq 1.994 + O(\eps_r)$. 
\end{proof}

\section{Bounds for $\costi(\cdot)/\lpi(\cdot)$}
\label{sec:ratio_ideal}

In this section, we bound $\costi(\cdot)/\lpi(\cdot)$, proving Lemma~\ref{lem:triangle_ideal}.
Throughout this section, we consider a fixed triangle $T$ with vertex set
$\{a,b,c\}$ and edge set $(ab,ac,bc)$.  For each type of triangle, we
compute the worst case ratio for $\costi(T)/\lpi(T)$. 
For the sake of brevity, in this section, let $cost(\cdot) := \costi(\cdot)$ and $lp(\cdot) := \lpi(\cdot)$. 
We assume $\frac{0}{0} = 0$.

To compute $cost(T)/lp(T)$, we have $cost(T) = cost_a(bc)
  + cost_b(ac) + cost_c(ab)$ and $lp(T) = lp_a(bc) + lp_b(ac) +
  lp_c(ab)$.  We use $cost_a(bc)$ to denote the probability that edge
  $bc$ is violated given that $a$ is chosen as a pivot (when triangle
  $abc$ is still intact).  We use $lp_a(bc)$ to denote the probability
  that edge $bc$ is decided (i.e., at least one of $b$ or $c$ is
  chosen to be in the cluster with the pivot $a$) times the
  contribution of edge $bc$ to the LP objective function.

\subsection{$+++$ Triangles}

\begin{lemma}
  For a $+++$ triangle $T$ with vertex set $\{a,b,c\}$ and edge set
  $(ab,ac,bc)$,
  $$\frac{cost(T)}{lp(T)} = \frac{cost_a(bc) + cost_b(ac) +
  cost_c(ab)}{lp_a(bc) + lp_b(ac) + lp_c(ab)} \leq \frac{3}{2 +
  y_{abc} + y_{a|b|c}}.$$
  \end{lemma}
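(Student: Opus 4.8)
The plan is to work directly with the correlated-rounding interpretation, where for a $+$edge $(a,b)$ the event that $b$ joins $a$'s cluster has probability $y_{ab} = 1 - x_{ab}$, and moreover the joint events are governed by the induced Sherali–Adams distribution on $\{a,b,c\}$. First I would set up the probabilities $q_a := y_{a|bc}$, $q_b := y_{b|ac}$, $q_c := y_{c|ab}$, $q_0 := y_{a|b|c}$, and $q_{abc} := y_{abc}$, which sum to $1$. Then each term decomposes cleanly. When $a$ is pivot, edge $(bc)$ is violated (it is a $+$edge) exactly when precisely one of $b,c$ joins $a$'s cluster, i.e.\ with probability $q_b + q_c$; and $(bc)$ is decided (at least one of $b,c$ joins $a$) with probability $q_b + q_c + q_{abc}$, contributing $\lpi_a(bc) = x_{bc}(q_b + q_c + q_{abc})$. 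Writing the analogous expressions with $b$ and $c$ as pivot, the numerator becomes $2(q_a + q_b + q_c)$ and the denominator becomes $x_{bc}(q_b+q_c+q_{abc}) + x_{ac}(q_a+q_c+q_{abc}) + x_{ab}(q_a+q_b+q_{abc})$.

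The next step is to lower-bound the denominator. I would use that $x_{bc} = y_{b|c}$ is, by the marginal consistency constraints, $x_{bc} = q_b + q_c + q_0 + (\text{joint ``}b,c\text{ separated'' mass})$; more directly, $x_{bc} \ge q_b + q_c$ is not quite what I want — instead I want something like $x_{bc}(q_b + q_c + q_{abc}) \ge q_b + q_c$ up to terms I can control. The cleanest route: observe $q_b + q_c + q_{abc} + \text{(}b,c\text{ separated mass)} = 1$, hmm; rather, note $x_{bc} \ge \max(\text{stuff})$. Actually the key inequality I expect to need is that each product $x_{bc}(q_b + q_c + q_{abc}) \ge q_b + q_c$ — this would hold if $q_{abc} \le x_{bc}\cdot(\text{something})$, but in fact $q_{abc} \le y_{bc} = 1 - x_{bc}$, so $x_{bc}(q_b+q_c+q_{abc}) \ge x_{bc}(q_b+q_c) + x_{bc}q_{abc}$ and I'd want to compare with the triangle-inequality-type bounds. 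Summing the three denominator terms and applying the LP triangle inequalities $x_{bc} \le x_{ab} + x_{ac}$ etc., together with $q_a + q_b + q_c + q_0 + q_{abc} = 1$, I would aim to show the denominator is at least $\tfrac{2}{3}(2 + q_{abc} + q_0)(q_a + q_b + q_c) / (\cdots)$ — more precisely, to show
\[
\frac{2(q_a+q_b+q_c)}{x_{bc}(q_b+q_c+q_{abc}) + x_{ac}(q_a+q_c+q_{abc}) + x_{ab}(q_a+q_b+q_{abc})} \le \frac{3}{2 + q_{abc} + q_0}.
\]
Cross-multiplying, this is a polynomial inequality in the five nonnegative variables $q_a,q_b,q_c,q_0,q_{abc}$ (with $x$'s eliminated via $x_{ab} = q_a + q_b + q_0 + q_{abc}$-type identities — wait, $x_{ab} = y_{a|b} = q_a + q_b + q_0 + q_{ac,b}+\dots$; I must be careful that with only three vertices the induced distribution has exactly the five atoms listed, so $x_{ab} = q_a + q_b + q_0$, and similarly $x_{ac} = q_a + q_c + q_0$, $x_{bc} = q_b + q_c + q_0$). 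With these substitutions everything reduces to a single inequality among $q_a,q_b,q_c,q_0,q_{abc}$ subject to their summing to $1$.

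The main obstacle I anticipate is precisely verifying that final polynomial inequality: after substituting $x_{ab}=q_a+q_b+q_0$ etc., the denominator becomes $(q_b+q_c+q_0)(q_b+q_c+q_{abc}) + (q_a+q_c+q_0)(q_a+q_c+q_{abc}) + (q_a+q_b+q_0)(q_a+q_b+q_{abc})$, and I need this to be $\ge \tfrac{2}{3}(2+q_{abc}+q_0)(q_a+q_b+q_c)$. This is a homogeneous-degree-2 inequality (using $\sum q = 1$ to homogenize), and I would attack it by symmetry — it is symmetric in $q_a,q_b,q_c$ — reducing to checking it on the boundary and at symmetric points, or by writing $s := q_a+q_b+q_c$, $p_2 := q_aq_b+q_bq_c+q_aq_c$, and expanding each squared term as $(s - q_i + q_0)(s-q_i+q_{abc})$-type expressions to get everything in terms of $s, \sum q_i^2, q_0, q_{abc}$, then using $\sum q_i^2 \ge s^2/3$. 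I expect the inequality to be tight exactly when $q_a=q_b=q_c$ and one of $q_0, q_{abc}$ vanishes, which matches the $(0.5,0.5,0.5)$-distance all-$+$ worst case, so the bound $3/(2+y_{abc}+y_{a|b|c})$ should come out exactly. The degenerate-triangle and $\tfrac00=0$ conventions handle the cases where the denominator vanishes.
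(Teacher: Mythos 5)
Your proposal is correct and follows essentially the same route as the paper: you express $\costi$ and $\lpi$ through the five atoms of the local distribution on $\{a,b,c\}$ (your $q_a,q_b,q_c,q_0,q_{abc}$ are the paper's $z,y,x,q,p$), arrive at the identical ratio $2(q_a+q_b+q_c)$ over $\sum_i (s-q_i+q_0)(s-q_i+q_{abc})$, and finish via $\sum_i q_i^2 \ge s^2/3$, which is exactly the paper's step of minimizing $x^2+y^2+z^2$ at $x=y=z$. The final inequality you state, denominator $\ge \tfrac{2}{3}(2+q_0+q_{abc})(q_a+q_b+q_c)$, does check out (with slack $3q_0q_{abc}$), matching the paper's bound $3/(2+y_{abc}+y_{a|b|c})$ and your predicted tightness condition.
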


\begin{proof}
We have
\begin{eqnarray*}
cost_a(bc) & = &
(y_{ab} - y_{abc}) + (y_{ac} - y_{abc}), \\ lp_a(bc) & = &
(1-y_{bc})((y_{ab}-y_{abc}) + (y_{ac}-y_{abc}) + (y_{abc})),\\
cost_b(ac) & = &
(y_{ab} - y_{abc}) + (y_{bc} - y_{abc}), \\ lp_b(ac) & = &
(1-y_{ac})((y_{ab}-y_{abc}) + (y_{bc}-y_{abc}) + (y_{abc})),\\
cost_c(ab) & = &
(y_{ac} - y_{abc}) + (y_{bc} - y_{abc}), \\ lp_c(ab) & = &
(1-y_{ab})((y_{ac}-y_{abc}) + (y_{bc}-y_{abc}) + (y_{abc})).
\end{eqnarray*}
We can write
the costs using the following shorthand notation.
$$x = y_{ab|c}, ~y = y_{ac|b}, ~z = y_{bc|a}, ~p = y_{abc} \text{ and } q =
y_{a|b|c}.$$  
We have the
following relations:
\begin{eqnarray*}
y_{ab} = y_{ab|c} + y_{abc} = x+p, \quad y_{ac} = y_{ac|b} + y_{abc} =
y + p, \quad y_{bc} = y_{bc|a} + y_{abc} = z+p.
\end{eqnarray*}
Notice that $x + y + z + p +q = 1$.  Then we have the following.
\begin{eqnarray*}
cost_a(bc) & = & x + y,\\ 
lp_a(bc) & = &
(1-z-p)(x + y + p) ~ = ~ (1-z-p)(1-z-q).
\end{eqnarray*}
The costs for edges $ab$ and $ac$ are analogous.
Thus,
\begin{eqnarray}
\frac{cost(T)}{lp(T)} & = &
  \frac{2(x+y+z)}{
(1-x-p)(1-x -q)+
(1-y-p)(1-y-q)+
(1-z-p)(1-z-q)
} \nonumber\\
& = &
\frac{2(1-p-q)}{
(1+p+q)(1-p-q) +3pq + x^2 + y^2 + z^2
}.
\label{ratio1}
\end{eqnarray}
For fixed $p,q$, the ratio in \eqref{ratio1} is maximized when $x^2
+y^2 +z^2$ is minimized, which occurs when $x = y = z = (1-p-q)/3$.
Therefore, for fixed $p$ and $q$, the ratio in \eqref{ratio1} is at most
\begin{eqnarray*}
\frac{2(1-p-q)}{
(1+p+q)(1-p-q) +3pq + \frac{(1-p-q)^2}{3}
} 
& = &
\frac{2(1-p-q)}{
(1+p+q + \frac{1-p-q}{3})(1-p-q) +3pq 
} \\
& = &
\frac{2(1-p-q)}{
(\frac{4}{3} + \frac{2p}{3} + \frac{2q}{3})(1-p-q) +3pq 
}  \\
& = & 
\frac{3(1-p-q)}{
(2 + p + q)(1-p-q) + \frac{9}{2}pq
}\\
& \leq & \frac{3}{2 + p + q}.
\end{eqnarray*}\end{proof}

\subsection{$---$ Triangles}

\begin{lemma}
  For a $---$ triangle $T$ with vertex set $\{a,b,c\}$ and edge set $(ab,ac,bc)$,
  $$\frac{cost(T)}{lp(T)} = \frac{cost_a(bc) + cost_b(ac) + cost_c(ab)}{lp_a(bc) + lp_b(ac) + lp_c(ab)} \leq 1.$$
  \end{lemma}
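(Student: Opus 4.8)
The plan is to make the three cost terms and the three LP terms completely explicit, pass to the natural ``join probability'' variables, and then reduce the claimed inequality $cost(T)\le lp(T)$ to a one-line factorization.

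First I would observe that a $---$ triangle has no $+$edge, so the idealized correlated rounding is vacuous on $\{a,b,c\}$: whichever vertex is the pivot, each of the other two is placed in its cluster \emph{independently}, using the $-$edge rule of Algorithm~\ref{algo}. In particular, if $a$ is the pivot then $b$ joins $a$'s cluster with probability $1-\sqrt{x_{ab}}$ and $c$ with probability $1-\sqrt{x_{ac}}$, independently. It is therefore convenient to set $r:=1-\sqrt{x_{ab}}$, $s:=1-\sqrt{x_{ac}}$, $t:=1-\sqrt{x_{bc}}$, all lying in $[0,1]$; note that $x_{ab}=(1-r)^2$, hence the LP contribution of the $-$edge $ab$ equals $1-x_{ab}=r(2-r)$, and similarly for $ac$ and $bc$.

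Next I would compute directly. The $-$edge $bc$ is violated from pivot $a$ exactly when both $b$ and $c$ join, so $cost_a(bc)=rs$, and cyclically $cost_b(ac)=rt$, $cost_c(ab)=st$, giving $cost(T)=rs+st+tr$. The edge $bc$ is decided from pivot $a$ iff at least one of $b,c$ joins, which happens with probability $1-(1-r)(1-s)=r+s-rs$, so $lp_a(bc)=t(2-t)(r+s-rs)$, and cyclically for the other two; thus $lp(T)=t(2-t)(r+s-rs)+s(2-s)(r+t-rt)+r(2-r)(s+t-st)$. Expanding both quantities in the elementary symmetric polynomials $p_1=r+s+t$, $p_2=rs+st+tr$, $p_3=rst$, I expect to obtain $cost(T)=p_2$ and $lp(T)=4p_2-3p_3-p_1p_2+p_1p_3$, and therefore the clean identity $lp(T)-cost(T)=(3-p_1)(p_2-p_3)$.

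Finally I would conclude: $3-p_1=3-(r+s+t)\ge 0$ because $r,s,t\le 1$, and $p_2-p_3=rs+st+tr-rst=rs(1-t)+st+tr\ge 0$ because $r,s,t\in[0,1]$. Hence $lp(T)\ge cost(T)$, so $cost(T)/lp(T)\le 1$; and when $lp(T)=0$ the same inequality forces $cost(T)=0$, so the bound holds with the convention $\tfrac00=0$. The only part that needs care is the symmetric-polynomial bookkeeping that yields the factorization $lp(T)-cost(T)=(3-p_1)(p_2-p_3)$: it is routine but error-prone. A reassuring sanity check is that the argument never invokes the triangle inequalities relating $x_{ab},x_{ac},x_{bc}$ — the bound in fact holds for all $r,s,t\in[0,1]$ — reflecting the fact that on $---$ triangles the LP fully pays for the rounding cost.
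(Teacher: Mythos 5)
Your proposal is correct and is essentially the paper's own proof: you compute the same expressions for $cost$ and $lp$ and establish $lp(T)-cost(T)\ge 0$, and your factorization $(3-p_1)(p_2-p_3)$ is exactly the paper's factorization $(\sqrt{X}+\sqrt{Y}+\sqrt{Z})\,(2+\sqrt{XYZ}-\sqrt{X}-\sqrt{Y}-\sqrt{Z})$ rewritten in the shifted variables $r=1-\sqrt{x_{ab}}$, $s=1-\sqrt{x_{ac}}$, $t=1-\sqrt{x_{bc}}$ (the paper performs this substitution only at the last step). I checked the symmetric-polynomial bookkeeping you flagged: $lp(T)=4p_2-3p_3-p_1p_2+p_1p_3$ and the resulting identity are correct, so the argument goes through.
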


\begin{proof}
All edges are $-$edges with costs as follows.
\begin{eqnarray*}
cost_a(bc) & = & (1-\sqrt{x_{ab}})  (1-\sqrt{x_{ac}}),\\
lp_a(bc) & = &
(1-x_{bc}) (1 - \sqrt{x_{ab}}  \sqrt{x_{ac}}),\\
cost_b(ac) & = & (1-\sqrt{x_{ab}})  (1-\sqrt{x_{bc}}),\\
lp_b(ac) & = &
(1-x_{ac}) (1 - \sqrt{x_{bc}}  \sqrt{x_{ab}}),\\
cost_c(ab) & = & (1-\sqrt{x_{ac}})  (1-\sqrt{x_{bc}}),\\
lp_c(ab) & = &
(1-x_{ab}) (1 - \sqrt{x_{bc}}  \sqrt{x_{ac}}).
\end{eqnarray*}
So we have
\begin{eqnarray*}
\frac{cost(T)}{lp(T)} & = &
\frac{(1-\sqrt{x_{ab}})(1-\sqrt{x_{ac}})+
  (1-\sqrt{x_{ab}})(1-\sqrt{x_{bc}})+
(1-\sqrt{x_{ac}})(1-\sqrt{x_{bc}})}
{(1-x_{bc}) (1 - \sqrt{x_{ab}}  \sqrt{x_{ac}})+
(1-x_{ac}) (1 - \sqrt{x_{bc}}  \sqrt{x_{ab}})+
(1-x_{ab}) (1 - \sqrt{x_{bc}}  \sqrt{x_{ac}})}.
\end{eqnarray*}
For ease of notation, let $X = x_{ab}, Y = x_{ac}$ and $Z = x_{bc}$.
Then we have
\begin{eqnarray}
\frac{cost(T)}{lp(T)} & = &
\frac{(1-\sqrt{X})(1-\sqrt{Y})+
  (1-\sqrt{X})(1-\sqrt{Z})+
(1-\sqrt{Y})(1-\sqrt{Z})}
{(1-Z) (1 - \sqrt{X}  \sqrt{Y})+
(1-Y) (1 - \sqrt{Z}  \sqrt{X})+
(1-X) (1 - \sqrt{Z}  \sqrt{Y})}. \label{mmm_ratio}
\end{eqnarray}

We will show that the expression in \eqref{mmm_ratio} is always at
most 1 by showing
that the denominator is always at least as large as the
numerator for any $X,Y,Z \in [0,1]$.  This is equivalent to the
following inequality.
\begin{eqnarray*}
  X + Y + Z + 2\sqrt{XY} + 2\sqrt{XZ} + 2\sqrt{YZ}
  & \leq &
  2\sqrt{X} +
2\sqrt{Y} + 2\sqrt{Z} + Z\sqrt{XY} + Y\sqrt{XZ} + X\sqrt{YZ},
\end{eqnarray*}
which is in turn equivalent to the following inequality.
\begin{eqnarray*}
 (\sqrt{X} + \sqrt{Y} + \sqrt{Z})(\sqrt{X} + \sqrt{Y} + \sqrt{Z}) & \leq &
  (\sqrt{X} + \sqrt{Y} + \sqrt{Z})(2 + \sqrt{XYZ}).  
\end{eqnarray*}
Now it remains to prove that when $X,Y,Z \in [0,1]$, the following
inequality holds.
\begin{eqnarray*}
\sqrt{X} + \sqrt{Y} + \sqrt{Z} & \leq & 2 + \sqrt{XYZ}.
  \end{eqnarray*}
This is true if the following inequality
holds
for all $A,B,C \in [0,1]$.
\begin{eqnarray*}
A+B+C & \leq & 2 + ABC.
  \end{eqnarray*}
To see that this last inequality is true, set $A= 1-\alpha, B =
1-\beta$ and $C = 1-\gamma$ for $\alpha, \beta, \gamma \in [0,1]$.
Then we have
\begin{eqnarray*}
A+B+C ~ = ~ 3-\alpha-\beta -\gamma 
& \leq & 2 + (1-\alpha)(1-\beta)(1-\gamma) ~ \iff\\
1 -\alpha-\beta -\gamma 
& \leq & (1-\alpha)(1-\beta)(1-\gamma)~ \iff \\
\alpha \beta \gamma & \leq &  \alpha \beta + \alpha \gamma + \beta
\gamma.
  \end{eqnarray*}
The last inequality is clearly true for $\alpha, \beta, \gamma \in
[0,1]$. 
\end{proof}

\subsection{$+--$ Triangles}

\begin{lemma}
  For a $+--$ triangle $T$ with vertex set $\{a,b,c\}$ and edge set
  $(ab,ac,bc)$,
  $$\frac{cost(T)}{lp(T)} = \frac{cost_a(bc) + cost_b(ac) + cost_c(ab)}{lp_a(bc) + lp_b(ac) + lp_c(ab)} \leq \pmmrat.$$
  \end{lemma}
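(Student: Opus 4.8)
The plan is to follow exactly the pattern established for the $+++$ and $---$ cases: write down explicit formulas for $cost_a(bc), cost_b(ac), cost_c(ab)$ and for $lp_a(bc), lp_b(ac), lp_c(ab)$ in terms of the Sherali--Adams variables on $\{a,b,c\}$, recalling the idealized correlated rounding for $+$edges (so that $\Pr[b,c \in S \mid a \text{ pivot}] = y_{abc}$ exactly for the $+$edge pair, and for $-$edges $v$ is added to the pivot's cluster with probability $1-\sqrt{x_{pv}}$ independently). Here $T$ has one $+$edge, say $(a,b)$, and two $-$edges $(a,c)$ and $(b,c)$, so the roles of the three vertices are not symmetric: when $a$ or $b$ is the pivot, the $+$edge incident to it is rounded with correlation and the $-$edge is rounded with the square-root rule; when $c$ is the pivot, both edges incident to $c$ are $-$edges and are rounded independently with the square-root rule. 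I would set shorthand like $X = x_{ab}, Y = x_{ac}, Z = x_{bc}$ together with the local-distribution masses $y_{abc}, y_{ab|c}, y_{ac|b}, y_{bc|a}, y_{a|b|c}$ (subject to $X \le Y + Z$, $Y \le X + Z$, $Z \le X + Y$ and the obvious consistency/nonnegativity constraints), so that the ratio becomes an explicit function of a bounded set of variables.

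Next I would carry out the maximization of $cost(T)/lp(T)$ over the feasible region. As in the $+++$ case I expect that many of the higher-order variables only appear in the denominator (and help it), so the worst case pushes mass onto the ``spread-out'' configurations; I would guess the extremal instance is something like $x_{ab}$ small (the $+$edge short) or the degenerate-looking configuration where $y_{abc}$ is as small as allowed and the triangle inequality on the $-$edges is tight. The goal is to show the supremum equals $\pmmrat = 1.5$, presumably attained (or approached) at a boundary point such as $X=0, Y=Z=1$ or $X=Y=Z$ with a specific value, mirroring how the $---$ bound of $1$ was attained. Concretely I would reduce, via the consistency constraints, to an inequality of the form $cost(T) \le 1.5 \cdot lp(T)$ that is polynomial (after clearing the square roots by substitutions $\sqrt{Y} = s$, $\sqrt{Z} = t$), and then verify it either by an SOS-type rearrangement in the spirit of the $A+B+C \le 2+ABC$ step used for $---$ triangles, or by a short case analysis on which of the $+$edge-incident terms dominate.

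The main obstacle I anticipate is the asymmetry combined with the square roots: unlike the $+++$ case the three vertices play genuinely different roles, so there is no clean symmetrization $x=y=z$ to invoke, and unlike the $---$ case there is a $+$edge whose contribution to $cost$ is a difference of $y$-variables (from correlated rounding) rather than a product of square roots, so the numerator and denominator are of mixed type. Handling the correlated-rounding term for the $+$edge correctly --- in particular getting the right expression for $\Pr[(a,b) \text{ violated} \mid c \text{ pivot}]$, where $(a,b)$ being a $+$edge is violated iff exactly one of $a,b$ joins $c$'s cluster, which under independent $-$edge rounding is $(1-\sqrt{Y})\sqrt{Z} + \sqrt{Y}(1-\sqrt{Z})$ --- and then combining it with the correlated terms when $a$ or $b$ is the pivot, is where the bookkeeping is most delicate. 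Once the formulas are pinned down I expect the optimization itself to be a routine (if tedious) bounded calculation, analogous to the two cases already completed, yielding the claimed bound of $\pmmrat$.
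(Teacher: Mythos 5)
Your setup coincides with the paper's: the per-pivot expressions you describe are exactly those used there (with $X=x_{ab}$, $Y=x_{ac}$, $Z=x_{bc}$ one gets $cost_a(bc)=(1-X)(1-\sqrt{Y})$, $cost_b(ac)=(1-X)(1-\sqrt{Z})$, $cost_c(ab)=\sqrt{Y}(1-\sqrt{Z})+\sqrt{Z}(1-\sqrt{Y})$, and $lp_a(bc)=(1-Z)(1-X\sqrt{Y})$, $lp_b(ac)=(1-Y)(1-X\sqrt{Z})$, $lp_c(ab)=X(1-\sqrt{Y}\sqrt{Z})$), and for this triangle type the higher-order variables $y_{abc},y_{ab|c},\dots$ in fact never enter, since each pivot is incident to at most one $+$edge of the triangle; so the problem is a three-variable optimization over $X,Y,Z$ subject to the triangle inequality. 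Up to this point your bookkeeping, including the formula for $\Pr[(a,b)\text{ violated}\mid c\text{ pivot}]$, is correct and matches the paper.

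The gap is that the entire analytic core is deferred as ``routine.'' In the paper this is the longest of the triangle analyses: after assuming $Y\le Z$, it (i) shows the ratio is at most $1$ when $\sqrt{Y}+\sqrt{Z}\le 1$, (ii) proves a monotonicity claim reducing the remaining case to the boundary $Z=\min\{1,X+Y\}$, (iii) handles $Z=1$ by a one-variable optimization (maximum about $1.118$ at $X\approx 0.645$), and (iv) handles $Z=X+Y<1$ by a further relaxation followed by a delicate derivative/convexity argument; only this last relaxed case produces the constant $1.5$. Your proposed shortcuts would not carry this load: there is no clean factorization analogous to $A+B+C\le 2+ABC$ here (the asymmetry and the terms linear in $X$ with coefficients of both signs block it), no symmetrization $X=Y=Z$ is available, and your guessed extremal point $X=0$, $Y=Z=1$ is exactly the degenerate $0/0$ configuration the paper excludes --- indeed $1.5$ is not the supremum of the true ratio but an artifact of the relaxation in case (iv). So while your formulas are right, the proof as proposed is missing the case decomposition and the calculus that actually establish the bound, and the suggested ``SOS-type rearrangement or short case analysis'' would, in effect, have to reconstruct those steps.
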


\begin{proof}
Since $ab$ is a \pedge, we have
\begin{eqnarray*}
cost_c(ab) & = &
\sqrt{x_{ac}}(1-\sqrt{x_{bc}}) + \sqrt{x_{bc}}(1-\sqrt{x_{ac}}), \\ 
lp_c(ab) & = &
x_{ab}(1 - \sqrt{x_{ac}} \sqrt{x_{bc}}).
\end{eqnarray*}
For $-$edges $ac$ and $bc$, we have
\begin{eqnarray*}
cost_b(ac) & = & (1- x_{ab})(1 - \sqrt{x_{bc}}),\\
lp_b(ac) & = &
(1-x_{ac})(1 - x_{ab} \sqrt{x_{bc}}),
\end{eqnarray*}
\begin{eqnarray*}
cost_a(bc) & = &
(1- x_{ab}) (1 - \sqrt{x_{ac}}),\\
lp_a(bc) & = &
(1-x_{bc})(1 - x_{ab} \sqrt{x_{ac}}).
\end{eqnarray*}
Then
\begin{eqnarray*}
\frac{cost(T)}{lp(T)} 
& = &
  \frac{
\sqrt{x_{ac}}(1-\sqrt{x_{bc}}) + \sqrt{x_{bc}}(1-\sqrt{x_{ac}})
+ (1- x_{ab})(2 - \sqrt{x_{bc}} - \sqrt{x_{ac}})
}{
x_{ab}(1 - \sqrt{x_{ac}} \sqrt{x_{bc}}) +
(1-x_{ac})(1 - x_{ab} \sqrt{x_{bc}}) + 
(1-x_{bc})(1 - x_{ab} \sqrt{x_{ac}})
}\end{eqnarray*}
For ease of notation, let $X = x_{ab}, Y = x_{ac}$ and $Z = x_{bc}$.
Notice that we have triangle inequality on these values (i.e., $X+Y
\geq Z, X+Z \geq Y$ and $Y+Z \geq X$).  Without loss of generality, we
assume $Y \leq Z$.
Then
\begin{eqnarray}
\frac{cost(T)}{lp(T)} 
& = &
  \frac{
\sqrt{Y}(1-\sqrt{Z}) + \sqrt{Z}(1-\sqrt{Y})
+ (1- X)(2 - \sqrt{Z} - \sqrt{Y})
}{
X(1 - \sqrt{Y} \sqrt{Z}) +
(1-Y)(1 - X \sqrt{Z}) + 
(1-Z)(1 - X \sqrt{Y})}\nonumber
  \\
& = &
  \frac{2-
2\sqrt{YZ}
- X(2 - \sqrt{Z} - \sqrt{Y})
}{2 -Y -Z + X + X( 
Y \sqrt{Z}
+
 Z \sqrt{Y}
    - \sqrt{YZ}
-  \sqrt{Z}  
-  \sqrt{Y})
  }\nonumber \\
  & = &
  \frac{2-
2\sqrt{YZ}
+ X( \sqrt{Z} + \sqrt{Y}-2)
}{2 -Y -Z + X(\sqrt{Y} + \sqrt{Z} -1)(\sqrt{YZ}-1) 
}.
\label{cost_T-2} 
\end{eqnarray}

First observe that if $Y=Z=1$, then the ratio is $0/0$.  Thus, we assume that $Y$ and $Z$ are not both equal to 1.  
We consider two cases: i) $\sqrt{Y} + \sqrt{Z} \leq 1$, and ii) $\sqrt{Y} + \sqrt{Z} > 1$.  In case i) we will show that ratio is at most 1.

\begin{claim}
  If $\sqrt{Y} + \sqrt{Z} \leq 1$, then
  \begin{eqnarray*}
  \frac{2-
2\sqrt{YZ}
+ X( \sqrt{Z} + \sqrt{Y}-2)
}{2 -Y -Z + X(\sqrt{Y} + \sqrt{Z} -1)(\sqrt{YZ}-1) 
} & \leq & 1.
\end{eqnarray*}
\end{claim}

\begin{cproof}
  The claim is equivalent to showing the following.
  \begin{eqnarray*}
  2- 2\sqrt{YZ}
+ X( \sqrt{Z} + \sqrt{Y}-2) & \leq &
2 -Y -Z + X(\sqrt{Y} + \sqrt{Z} -1)(\sqrt{YZ}-1).
  \end{eqnarray*}
We can rewrite this as
  \begin{eqnarray*}
  Y + Z
    & \leq &
2\sqrt{YZ} +
X 
 + X(\sqrt{Y} + \sqrt{Z} -1)(\sqrt{YZ}-2).
  \end{eqnarray*}
  Notice that $X(\sqrt{Y} + \sqrt{Z} -1)(\sqrt{YZ}-2) \geq 0$, since both of the last two terms are nonpositive.  Thus, it suffices to show
    \begin{eqnarray*}
  Y + Z
    & \leq &
2\sqrt{YZ} + X.
    \end{eqnarray*}
Since we assume that $Z \geq Y$, we have
    \begin{eqnarray*} 
Y + Z ~ = ~2Y + Z-Y ~ \leq ~   2\sqrt{YZ} + Z-Y \leq
2\sqrt{YZ} + X.
    \end{eqnarray*}
\end{cproof}

Now let us now consider case ii) where $\sqrt{Y} + \sqrt{Z} > 1$. 

\begin{claim}\label{clm:z_is_0}
For all $X,Y,Z \in [0,1]$ with $X,Y,Z$ obeying triangle inequality,
the following ratio
  \begin{eqnarray*}
  \frac{2-
2\sqrt{YZ}-X
+ X( \sqrt{Y} + \sqrt{Z}-1)
}{2 -Y -Z + X(\sqrt{Y} + \sqrt{Z} -1)(\sqrt{YZ}-1)}
  \end{eqnarray*}
attains its maximum value
  when $Z= \min\{1, X+Y\}$.  
\end{claim}

\begin{cproof}
Consider
$X,Y,Z \in [0,1]$ such that $Z < X+Y \leq 1$.  Then we show that we
  can increase $Z$ and decrease $Y$ without decreasing the ratio.  For
  $X,Y,Z \in [0,1]$, let $c = \sqrt{Y} + \sqrt{Z}$.  Notice that $c
  \in (1,2)$.  We can rewrite the ratio in the claim as
\begin{eqnarray*}
     \frac{2- 2(c\sqrt{Y}-Y)
+ X(c-2)}{2 -c^2 + 2(c\sqrt{Y}-Y) + X(c -1)((c\sqrt{Y}-Y)-1)}.
\end{eqnarray*}
The numerator is maximized and the denominator is minimized when $Y$
is minimized.  Thus, we can decrease $Y$ to $Y'$ and increase $Z = (c -
\sqrt{Y})^2$ to $Z' = (c- \sqrt{Y'})^2$ until $Z'=X+Y'$ or $Z' = 1$.\end{cproof}

\begin{claim}
Assuming $Z=1$, the maximum value of the ratio in \eqref{cost_T-2} for
$X, Y \in [0,1]$ and $X+Y \geq 1$ is 1.1184.
\end{claim}

\begin{cproof}
By Claim \ref{clm:z_is_0}, we can set $Z=1$.  Then \eqref{cost_T-2} becomes
\begin{eqnarray*}
  \frac{2-X}
{\sqrt{Y}(1-X) + 1}.
  \end{eqnarray*}
  Since both numerator and denominator are always nonnegative for $X,Y
  \in [0,1]$, the ratio is maximized when $Y$ is minimized, which
  occurs when $Y = 1-X$ (since $X+Y \geq Z=1$).  

Now if $Y = 1-X$, then we have:
\begin{eqnarray}
 f(X) =  \frac{2-X}
{\sqrt{1-X}(1-X) + 1}.\label{lastPMM}
\end{eqnarray}
Taking the derivative of this, we obtain
\begin{eqnarray*}
f'(X) = 0 \quad \iff \quad \sqrt{1-X}(4-X)-2 = 0.
\end{eqnarray*}
This last equation is satisfied when $X = .64470$ and the value of \eqref{lastPMM} for this value of $X$ is at most 1.1184.
\end{cproof}

So if $X+Y \geq 1$, then the lemma holds.  It remains to consider the
case in which $Z = X+Y < 1$.  Recall that $\sqrt{Y} + \sqrt{Z} =
\sqrt{Y} + \sqrt{X+Y} > 1$ also holds.
In this case, observe that the ratio in
\eqref{cost_T-2} is at most
\begin{eqnarray}
  \frac{2-
2\sqrt{Y(X+Y)}
+ X(\sqrt{Y} -1)
}{2 -2Y -X + XY\sqrt{X+Y}-X\sqrt{Y} 
  }.\label{cost_T-22}
  \end{eqnarray}

\begin{claim}
For $X+Y < 1$, $\sqrt{Y} + \sqrt{X+Y} > 1$ and
$X, Y \in (0,1]$, the maximum value of the ratio \eqref{cost_T-22} is 1.5.
\end{claim}

\begin{cproof}
  For each $X \in (0,1]$, we define the following functions.
  \begin{align*}
    U_X(Y) &:= 2- 2\sqrt{Y(X+Y)} + X(\sqrt{Y} -1),\\
    V_X(Y) &:= 2 -2Y -X + XY\sqrt{X+Y}-X\sqrt{Y}. 
  \end{align*}
    To show that $U_X(Y)/V_X(Y) \le 3/2$, we will show
  that the function $F_X(Y) := (3/2) V_X(Y) - U_X(Y)$ is decreasing on
  the relevant domain of $Y$.  Then we can evaluate $F_X(Y)$ for $Y =
  .1$; it is sufficient to show that $F_X(.1) \geq 0$.
  
  To show that $F_X(Y)$ is a decreasing function on the relevant
  interval, we argue that $F'_X(Y) := (3/2) V'_X(Y) - U'_X(Y)
  < 0$.
  We have 
  \begin{align*}
    U_X'(Y) := \frac{\partial U_X}{\partial Y} &= \frac{X}{2\sqrt{Y}} - \frac{X+2Y}{\sqrt{Y(X+Y)}}, \\
    V_X'(Y) := \frac{\partial V_X}{\partial Y} &= X\sqrt{X+Y} - \frac{X}{2\sqrt{Y}} + \frac{XY}{2\sqrt{X+Y}} - 2,
  \end{align*}
  Thus, we have
  \begin{eqnarray*}
     F'_X(Y) = \frac{3}{2} \left(X\sqrt{X+Y} - \frac{X}{2\sqrt{Y}} +
     \frac{XY}{2\sqrt{X+Y}} - 2 \right) -  \frac{X}{2\sqrt{Y}} +
     \frac{X+2Y}{\sqrt{Y(X+Y)}},
     \end{eqnarray*}
    and we want to show $F_X(Y) < 0$ for $Y \in (0,.1]$.  This is
      equivalent to showing
    \[X\left(\frac{3}{2}\sqrt{X+Y} - \frac{5}{4} \frac{1}{\sqrt{Y}} +
    \frac{3}{4}\frac{Y}{\sqrt{X+Y}}\right) + \frac{X+2Y}{\sqrt{Y(X+Y)}} <
    3,\]
    which is equivalent to showing
    \begin{eqnarray*}
X\left(\frac{6 (X+Y) \sqrt{Y}}{4\sqrt{Y(X+Y)}} -
\frac{5\sqrt{X+Y}}{4\sqrt{Y(X+Y)}} + \frac{3Y^{3/2}}{4\sqrt{Y(X+Y)}} +
\frac{4}{4\sqrt{Y(X+Y)}}\right) + \frac{2Y}{\sqrt{Y(X+Y)}} < 3.
    \end{eqnarray*}
    Using the facts that $X+Y < 1$ and $-\sqrt{X+Y} < \sqrt{Y}-1$, we have
\begin{eqnarray*}
\frac{X(6 (X+Y)\sqrt{Y}  - 5\sqrt{X+Y} + 3Y^{3/2} + 4) + 8Y}
{4\sqrt{Y(X+Y)}} < 
\frac{X(6\sqrt{Y}  +5(\sqrt{Y}-1) + 3Y^{3/2} + 4) + 8Y}
{4\sqrt{Y(X+Y)}}.
\end{eqnarray*}
Now, it suffices to show
\begin{eqnarray*}
\frac{X(6\sqrt{Y}  +5(\sqrt{Y}-1) + 3Y^{3/2} + 4) + 8Y}
{4\sqrt{Y(X+Y)}} < 3,
\end{eqnarray*}
which is equivalent to showing
\begin{eqnarray*}
X(11\sqrt{Y} -1 + 3Y^{3/2}) + 8Y
< 12 {\sqrt{Y(X+Y)}}.
\end{eqnarray*}
Equivalently, we want to show for all $X,Y \in [0,1]$,
\begin{eqnarray*}
H(X,Y) := X(11 + 3Y) - \frac{X}{\sqrt Y} + 8\sqrt{Y}
- 12 {\sqrt{X+Y}} < 0.
\end{eqnarray*}
In fact, we will show that for each fixed $Y \in [0,1]$, the
function $H_Y(X) := H(X,Y)$ is convex for $X \in[0,1]$.  Thus, we need to check if
$H(X,Y) < 0$
only for the extreme values of $X$, which are $X=0$ and $X=1-Y$.  In
these cases, we have
$H(0,Y) = 8 \sqrt{Y} - 12 \sqrt{Y} < 0,$ and $$H(1-Y,Y) = 
(1-Y)(11 + 3Y) - \frac{1-Y}{\sqrt{Y}} + 8 \sqrt{Y} -12 = 
-8Y -3Y^2 - \frac{1}{\sqrt{Y}} + 9 \sqrt{Y} -1
.$$
It can be verified that this quantity is always negative for $Y \in [0,1]$.

Now we show that for each fixed $Y \in [0,1]$, the function $H_Y(X) :=
H(Y,X)$ is convex.  
We take the derivative with respect to $X$, which is
\begin{eqnarray*}
  \frac{\partial}{\partial X}\left( X (11 + 3 Y) - \frac{X}{\sqrt{Y}} + 8 \sqrt{Y} - 12 \sqrt{X + Y}\right) =
  -\frac{6}{\sqrt{X + Y}} + 3 Y - \frac{1}{\sqrt{Y}} + 11
\end{eqnarray*}
and the second derivative which is 
\begin{eqnarray*}
  \frac{\partial^2}{\partial^2 X}\left( X (11 + 3 Y) - \frac{X}{\sqrt{Y}} + 8 \sqrt{Y} - 12 \sqrt{X + Y}\right) =
  \frac{3}{(X + Y)^{3/2}}.
\end{eqnarray*}
Thus, since the second derivative is positive for all $Y,X \in [0,1]$,
the function is thus convex with respect to $X$.
\end{cproof}
\end{proof}

\subsection{$++-$ Triangles}

\begin{lemma}
  For a $++-$ triangle $T$ with vertex set $\{a,b,c\}$ and edge set
  $(ab,ac,bc)$,
  $$\frac{cost(T)}{lp(T)} = \frac{cost_a(bc) + cost_b(ac) + cost_c(ab)}{lp_a(bc) + lp_b(ac) + lp_c(ab)} \leq 2.$$
  \end{lemma}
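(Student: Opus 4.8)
The plan is to compute $cost(T)$ and $lp(T)$ explicitly as functions of $X:=x_{ab}$, $Y:=x_{ac}$ (the two $+$edges), $Z:=x_{bc}$ (the $-$edge), and $p:=y_{abc}$, and then show $cost(T)\le 2\,lp(T)$. Since we are in the idealized setting where every $+$edge is rounded with correlation, when $a$ is the pivot both $b$ and $c$ lie in $I_a$, so the $-$edge $bc$ is an intracluster edge exactly when $\{b,c\}\subseteq S'$, which happens with probability $y_{abc}=p$; hence $cost_a(bc)=p$, and since $bc$ disappears iff $S'\cap\{b,c\}\neq\emptyset$ we get $lp_a(bc)=(1-Z)\bigl((1-X)+(1-Y)-p\bigr)$. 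When $b$ is the pivot, $a$ joins $b$'s cluster with probability $1-X$ (from the correlated step) while $c$ joins independently with probability $1-\sqrt{Z}$, so $cost_b(ac)=(1-X)\sqrt{Z}+X(1-\sqrt{Z})$ and $lp_b(ac)=Y\bigl(1-X\sqrt{Z}\bigr)$; the quantities for pivot $c$ follow by swapping $X\leftrightarrow Y$.

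The first reduction eliminates $p$. Writing $N:=cost(T)$, $D:=lp(T)$, only $cost_a(bc)$ and $lp_a(bc)$ involve $p$, and $\partial(2D-N)/\partial p=-2(1-Z)-1<0$, so $2D-N$ is decreasing in $p$; it therefore suffices to treat the largest feasible value $p=p_{\max}=\min(1-X,1-Y,1-Z)$, which is feasible because the triangle inequality $x_{bc}\le x_{ab}+x_{ac}$ (valid in any Sherali--Adams, hence LP, solution) gives $p_{\max}\ge 1-(X+Y+Z)/2=p_{\min}$. By the symmetry $b\leftrightarrow c$ (equivalently $X\leftrightarrow Y$) we may assume $X\le Y$, so $p_{\max}=1-\max(Y,Z)$, and we split into the case $Z\ge Y$ (where $p_{\max}=1-Z$) and the case $Y>Z$ (where $p_{\max}=1-Y$).

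In each case we substitute this value of $p$ together with $u:=\sqrt{Z}\in[0,1]$ to clear the square roots, obtaining a polynomial $2D-N$ in $X,Y,u$ to be shown nonnegative over the region cut out by $0\le X\le Y\le 1$, $0\le u\le 1$, the triangle inequality, and the case hypothesis. The key structural point is that $2D-N$ is, with all other variables fixed, affine or a downward parabola in each remaining free variable, so the minimum is always attained at an extreme point of the domain: in the case $Z\ge Y$ it is linear with nonpositive coefficient in $q:=XY$, so one sets $q=s^2/4$ with $s:=X+Y$ (achieved by $X=Y=s/2\le Z$), after which it is a downward parabola in $s$ over $s\in[Z,2Z]$; in the case $Y>Z$ the coefficient of $Y$ equals $2+2u(1-2X)\ge 0$, so one sets $Y$ to its minimum $\max(X,Z)$, and the result is then affine in $X$ (if $X\le Z$) or a downward parabola in $X$ (if $X\ge Z$). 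One is thus left with a handful of univariate inequalities on $[0,1]$, for instance $1-2u+2u^3-u^5=(1-u)(u^4+u^3-u^2-u+1)\ge 0$ and $1-u^2-2u+4u^3+2u^4-4u^5=(1-u)(4u^4+2u^3-2u^2-u+1)\ge 0$, each handled by factoring out $(1-u)$ and writing the remaining quartic as $1+u(u-1)(u+1)^2$ (bounded below using $u(1-u)(1+u)^2<1$ on $[0,1]$) or as a square plus a manifestly positive cubic. The extreme point $X=Y=\tfrac12$, $Z=1$ (i.e.\ $u=1$, $s=1$) is exactly where equality $cost(T)=2\,lp(T)$ holds --- the bad triangle.

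The main obstacle is bookkeeping rather than a single hard idea: one must carve the feasible region correctly (triangle inequality, the orderings used for the ``without loss of generality'', the case hypotheses and the ranges they force on $s$, $Y$, $X$), identify the right variable in which the expression is convex/affine at each step so that the optimum is pushed to the boundary, and then dispatch the resulting low-degree polynomials. The conceptual content behind the bound of $2$ (versus $2.5$ for independent rounding) is already encoded in $cost_a(bc)=y_{abc}$: correlated rounding at $a$ makes the in-cluster probability of the $-$edge as small as the Sherali--Adams solution permits, and the monotonicity in $y_{abc}$ shows this quantity is precisely what can be charged against the removed LP value $(1-Z)\bigl((1-X)+(1-Y)-y_{abc}\bigr)$.
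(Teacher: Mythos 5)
Your proposal is correct, and your formulas for $cost_u(\cdot)$ and $lp_u(\cdot)$ agree exactly with the paper's (in particular $cost_a(bc)=y_{abc}$ and $lp_a(bc)=y_{bc}(y_{ab}+y_{ac}-y_{abc})$), but the reduction you run afterwards is genuinely different from the paper's. You eliminate $p=y_{abc}$ first, observing that $2\,lp(T)-cost(T)$ is decreasing in $p$ (slope $-2(1-Z)-1$), so the worst case is the formal substitution $p=\min(y_{ab},y_{ac},y_{bc})$ (note you do not even need feasibility of that value, only $p\le p_{\max}$, so the aside about $p_{\max}\ge 1-(X+Y+Z)/2$ is dispensable); you then push to extreme points via linearity in $XY$, in $Y$, and concavity in $s$ or $X$, and finish with closed-form quintic certificates in $u=\sqrt{x_{bc}}$, e.g.\ $1-2u+2u^3-u^5=(1-u)\bigl(1+u(u-1)(u+1)^2\bigr)\ge 0$ and $1-2u-u^2+4u^3+2u^4-4u^5=(1-u)(4u^4+2u^3-2u^2-u+1)\ge 0$, both of which check out, with equality only at the bad triangle $x_{ab}=x_{ac}=\tfrac12$, $x_{bc}=1$. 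The paper instead parametrizes by the partition probabilities $(p,x,y,z,q)$, first symmetrizes the two $+$edges (Claim \ref{clm:ppm_xy2}), then shows the maximizer has $y_{bc|a}=z=0$ (Claim \ref{clm:ppm_zis0_sqrt}), then uses concavity in $w=x+y$ to reduce to the endpoints $w=0$ and $w=1-p$, the former handled by a numerically located univariate maximum ($\le 1.76$) and the latter by the monotonicity of $F(p)=1-(1-p)^{2.5}-2p\sqrt{1-p}$. What each route buys: yours yields purely algebraic certificates and avoids the numerical univariate optimization; the paper's intermediate claims are reused verbatim in the ``not bad'' $++-$ analysis to obtain the $\ppmnotbad$ bound, which your reduction (which pins $p$ at its maximum, i.e.\ lands exactly on the bad-triangle regime) would not give without reworking the case analysis with the $\eta$-perturbed constraints.
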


\begin{proof}
Edge $bc$ is the $-$edge, so we have
\begin{eqnarray*}
  cost_a(bc) & = & y_{abc}, \\
  lp_a(bc) & = &
y_{bc} (y_{ab}-y_{abc} + y_{ac}).
\end{eqnarray*}
Since $ab$ and $ac$ are both $+$edges, we have
\begin{eqnarray*}
cost_b(ac) & = &
(1-x_{ab})\sqrt{x_{bc}} + (1-\sqrt{x_{bc}})x_{ab}, \\ 
lp_b(ac) & = &
x_{ac}(1-x_{ab} \sqrt{x_{bc}}),
\end{eqnarray*}
\begin{eqnarray*}
cost_c(ab) & = &
(1-x_{ac})\sqrt{x_{bc}} + (1-\sqrt{x_{bc}})x_{ac}, \\ 
lp_c(ab) & = &
x_{ab}(1-x_{ac}\sqrt{x_{bc}}).
\end{eqnarray*}
We use the following for ease of notation.
\begin{eqnarray*}
X = y_{ab} = x+p, \quad Y = y_{ac} = y+p, \quad Z = y_{bc} = z+p, \quad A = 1-X, \quad B = 1-Y, \quad C = 1-Z.
\end{eqnarray*}

\begin{eqnarray}
  \frac{cost(T)}{lp(T)}
  & = &
\frac{y_{abc} + 
  (1-x_{ab})\sqrt{x_{bc}} + (1-\sqrt{x_{bc}})x_{ab} +
  (1-x_{ac})\sqrt{x_{bc}} + (1-\sqrt{x_{bc}})x_{ac}}{y_{bc} (y_{ab}-y_{abc} + y_{ac}) +
x_{ac}(1-x_{ab} \sqrt{x_{bc}}) + x_{ab}(1-x_{ac}\sqrt{x_{bc}})
}\nonumber\\
& = &
\frac{y_{abc} + 
  y_{ab}\sqrt{1-y_{bc}} + (1-\sqrt{1-y_{bc}})(1-y_{ab}) +
  y_{ac}\sqrt{1-y_{bc}} + (1-\sqrt{1-y_{bc}})(1-y_{ac})}{y_{bc} (y_{ab}-y_{abc} + y_{ac}) +
(1-y_{ac})(1-(1-y_{ab}) \sqrt{1-y_{bc}}) + (1-y_{ab})(1-(1-y_{ac})\sqrt{1-y_{bc}})
}\nonumber \\
& = &
\frac{p + 
  X\sqrt{1-Z} + (1-\sqrt{1-Z})(1-X) +
  Y\sqrt{1-Z} + (1-\sqrt{1-Z})(1-Y)}{Z (X-p + Y) +
(1-Y)(1-(1-X) \sqrt{1-Z}) + (1-X)(1-(1-Y)\sqrt{1-Z})
}\nonumber\\
& = &
\frac{p + 
  X\sqrt{C} + (1-\sqrt{C})A +
  Y\sqrt{C} + (1-\sqrt{C})B}{Z (X-p + Y) +
B(1-A \sqrt{C}) + A(1-B\sqrt{C})
}\nonumber \\
& = &
\frac{p + \sqrt{C}(X + Y - (1-X) - (1-Y)) + A + B}
{Z (X-p + Y) + A +B -2AB \sqrt{C}
}\nonumber\\
& = &
\frac{p + \sqrt{C}(2X + 2Y - 2) + A + B}
{Z (X-p + Y) + A +B -2AB \sqrt{C}
}.\label{ratio2_ppm}
\end{eqnarray}

\begin{claim}\label{clm:ppm_xy2}
The ratio in \eqref{ratio2_ppm} is maximized when $A=B$ (which implies
$X=Y$ and $x=y$).
  \end{claim}

\begin{cproof}
  Fix $W = X+Y$.  Then $A+B = 2-W$. 
Then the ratio in \eqref{ratio2_ppm} is equal to
  \begin{eqnarray*}
\frac{p + \sqrt{1-z-p}(2W - 2) + 2-W}
{(z+p) (W-p) + 2-W -2AB \sqrt{1-z-p}
}.
  \end{eqnarray*}
This ratio is maximized when the denominator is minimized, which
occurs when the term $2AB\sqrt{1-z-p}$ is maximized.  For fixed $A+B = 2-W$, this
occurs when $A=B = \frac{2-W}{2}$.
   \end{cproof}

Then we have
  \begin{eqnarray}
\frac{cost(T)}{lp(T)} & \leq & 
\frac{p + \sqrt{1-z-p}(2W - 2) + 2-W}
{(p+z) (W-p) + 2-W - \frac{(2-W)(2-W) \sqrt{1-z-p}}{2}} \nonumber
\\
& \leq & \frac{p + \sqrt{1-z-p}(2W - 2) + 2-W}
{(p+z) (W-p) + \sqrt{1-z-p}(2W-2 - \frac{W^2}{2}) + 2 -W}
.\label{sqrt_last}
  \end{eqnarray}
Notice that $W = x+y+2p$, $Z = z+p$ and $x+y+z+p \leq 1$.  Let $w=
x+y$.  So $w + z + p \leq 1$.

\begin{claim}\label{clm:ppm_zis0_sqrt}
\eqref{sqrt_last} is maximized when $z=0$.  In other words,
we have
\begin{eqnarray}
\frac{p + \sqrt{1-z-p}(2W - 2) + 2-W}
{(z+p) (W-p ) + 2-W -\frac{(2-W)(2-W) \sqrt{1-z-p}}{2}
} & \leq &
\frac{p + \sqrt{1-p}(2W - 2) + 2-W}
{p(W-p ) + 2-W -\frac{(2-W)(2-W) \sqrt{1-p}}{2}
} \label{sqrt_last_w}
\end{eqnarray}
  \end{claim}
\begin{cproof}
When $W=0$, then we have $w=p=0$.  In this case, both the numerator
and the denominator are zero.  So we can assume that $W > 0$.

Fix $p, w$.  Then $z \in [0,1-p-2w]$.
First consider the case in which $W \geq 1$.
Then as $z$ increases, the numerator decreases and the denominator
increases, so the ratio is maximized when $z=0$.  

Next consider the case in which $0 < W < 1$.  
We want to show that
\begin{eqnarray*}
\frac{2-W + p + \sqrt{1-z-p}(2W - 2)}
{2-W + p(W-p) + z(W-p) + \sqrt{1-z-p}(2W-2 - \frac{W^2}{2})} \leq 
\frac{2-W + p + \sqrt{1-p}(2W - 2)}
{2-W + p(W-p) + \sqrt{1-p}(2W-2 - \frac{W^2}{2})}.
\end{eqnarray*}
Let $F = 2-W+p$ and let $G = 2-W + p(W-p)$.  Notice that $G < F$.
Also, let $H = 2-2W$ and $I = 2-2W+W^2/2$.  Notice that $I > H > 0$ $F
> G > 0$.  
\begin{eqnarray*}
\frac{F + \sqrt{1-z-p}(-H)}
{G + z(W-p) + \sqrt{1-z-p}(-I)} \leq 
\frac{F + \sqrt{1-p}(-H)}
{G + \sqrt{1-p}(-I)}.
\end{eqnarray*}
So this inequality holds iff
\begin{eqnarray*}
FG + F\sqrt{1-p}(-I) + G\sqrt{1-z-p}(-H) +
\sqrt{1-z-p}\sqrt{1-p}(-H)(-I) \leq \\
FG + G\sqrt{1-p}(-H) + F\sqrt{1-z-p}(-I) + 
\sqrt{1-z-p}\sqrt{1-p}(-H)(-I) +
z(W-p)(F + \sqrt{1-p}(-H)),
\end{eqnarray*}
which holds iff
\begin{eqnarray*}
F\sqrt{1-p}(-I) + G\sqrt{1-z-p}(-H) +
\leq 
G\sqrt{1-p}(-H) + F\sqrt{1-z-p}(-I) + 
z(W-p)(F + \sqrt{1-p}(-H)).
\end{eqnarray*}
We prove this in two steps.  The second will be to show that 
$z(W-p)(F + \sqrt{1-p}(-H)) > 0$.  The first will be to show that
\begin{eqnarray*}
F\sqrt{1-p}(-I) + G\sqrt{1-z-p}(-H) +
\leq 
G\sqrt{1-p}(-H) + F\sqrt{1-z-p}(-I).
\end{eqnarray*}
This holds iff
\begin{eqnarray*}
G\sqrt{1-z-p}(-H) -G\sqrt{1-p}(-H) 
\leq 
 F\sqrt{1-z-p}(-I)-F\sqrt{1-p}(-I) \iff \\
G\sqrt{1-p}(H) - G\sqrt{1-z-p}(H)
\leq F\sqrt{1-p}(I) -  
 F\sqrt{1-z-p}(I) \iff \\
GH(\sqrt{1-p} - \sqrt{1-z-p})
\leq FI(\sqrt{1-p} -  
 \sqrt{1-z-p}),
\end{eqnarray*}
which holds because $G < F$ and $H < I$.  
Now we need to show
\begin{eqnarray*}
z(W-p)(F + \sqrt{1-p}(-H)) > 0,
\end{eqnarray*}
which holds iff
\begin{eqnarray*}
2-W+p + \sqrt{1-p}(2W-2)  \geq 0.
\end{eqnarray*}
Since $2W-2 < 0$, we have
\begin{eqnarray*}
2-W+p + \sqrt{1-p}(2W-2)  \geq 
2-W+p + (2W-2) = W+ p \geq 0.
\end{eqnarray*}
Thus, we conclude that we
can set $z=0$ to maximize the ratio. 
\end{cproof}

\begin{claim}
For each $p \in [0,1]$, the righthandside of \eqref{sqrt_last_w} is maximized
either when $w= 1-p$ or when $w=0$.
\end{claim}

\begin{cproof}
For fixed $p$, we have the following function of $w$.
\begin{eqnarray*}
f_p(w) & = & \frac{\sqrt{1-p}(2w+4p - 2) + 2-w-p}
{p (w+p) + 2-w-2p - \frac{(2-w-2p)(2-w-2p)\sqrt{1-p}}{2}}\\
& = & 
\frac{
\sqrt{1-p}(4p - 2) + 2 -p + w(2\sqrt{1-p} - 1) }
{2 -2p + p^2  - \sqrt{1-p}
(2-4p +2p^2) 
- \frac{\sqrt{1-p}}{2}
(-4w+w^2 +4pw) +w(p-1)}.
\end{eqnarray*}
We want to show that $f_p(w) \leq \alpha$ for $p, w \geq 0$ and $p+w \leq 1$.  Let $f_p = g_p/h_p$.  Then we want to show that $g_p \leq \alpha \cdot h_p$ for $p \in [0,1]$ and $w \in[0,1-p]$.  Thus, we want to evaluate if the function
$$F_p(w) := \alpha \cdot h_p(w) - g_p(w) \geq 0.$$ Notice that
$$F_p'(w) = \alpha\left(- \sqrt{1-p}(-2 + w + 2p) + p-1 \right) - 2\sqrt{1-p} -1,$$ and
$$F_p''(w) = \alpha\left(- \sqrt{1-p}\right).$$
We conclude that $F_p$ is concave and therefore to find the minimum values of $F_p(w)$ for $w \in [0,1-p]$, we need to evaluate the endpoints on the interval $w \in [0,1-p]$.  
\end{cproof}

\begin{claim}
When $w = 0$
\begin{eqnarray*}
\frac{p + \sqrt{1-p}(2W - 2) + 2-W}
{p (W-p) + 2-W - \frac{(2-W)(2-W) \sqrt{1-p}}{2}
} & \leq &  1.76.
\end{eqnarray*}
\end{claim}

\begin{cproof}
When $w= 0$, then $W = w + 2p = 2p$.  So we have
  \begin{eqnarray*}
\frac{p + \sqrt{1-p}(2W - 2) + 2-W}
{p (W-p) + 2-W - \frac{(2-W)(2-W) \sqrt{1-p}}{2}
} & = & 
\frac{\sqrt{1-p}(4p - 2) + 2 - p}
{p^2 + 2-2p - \frac{(2-2p)(2-2p) \sqrt{1-p}}{2}
}.
\end{eqnarray*}
This function of $p$ is maximized when $p = .71415$ and the ratio is at most 1.7538.
\end{cproof}

\begin{claim}
When $w = 1-p$, we have
\begin{eqnarray*}
\frac{p + \sqrt{1-p}(2W - 2) + 2-W}
{p (W-p) + 2-W - \frac{(2-W)(2-W) \sqrt{1-p}}{2}
} & \leq & 
\frac{2p\sqrt{1-p} + 1}
{1- \frac{(1-p)(1-p) \sqrt{1-p}}{2}
} ~ \leq ~ 2.
\end{eqnarray*}
\end{claim}

\begin{cproof}
We want to show that for $p \in [0,1]$, 
\begin{eqnarray}
f(p) = 
\frac{2p\sqrt{1-p} + 1}
{1- \frac{(1-p)(1-p) \sqrt{1-p}}{2}
} & \leq & 2.\label{p_ratio}
\end{eqnarray}
Let
\begin{eqnarray*}
F(p) = 
2- (1-p)(1-p) \sqrt{1-p}-(2p\sqrt{1-p} + 1) = 1- (1-p)^{2.5}-2p\sqrt{1-p}.
\end{eqnarray*}
Then we want to show that $F(p) \geq 0$ for $p \in [0,1]$.
\begin{eqnarray*}
F'(p) = \frac{2.5(1-p)^2 + 3p-2}{\sqrt{1-p}}.
\end{eqnarray*}
It can be seen that $F'(p) \geq 0$ for $p \in [0,1]$.  Thus, we can conclude that $F$ is an increasing function and we only need to check that $F(0) \geq 0$.
Indeed, we have $F(0) = 0$.
\end{cproof}
\end{proof}

\subsubsection{Ratio for $++-$ triangles that are not bad}

Recall that a $++-$ triangle is {\em bad} if the $+$edges have
distances in $[1/2-\eta, 1/2 + \eta]$ and the $-$edge has distance in
$[1-\eta, 1]$.  Thus, there are two cases in which a $++-$ is {\em not
bad}.  Either i)
at least one $+$edge, say $ab$, has $x_{ab} \in [0,1/2-\eta]$ or 
$x_{ab} \in [1/2 + \eta,1]$, or ii) the $-$edge, say $bc$, has
$x_{bc} \in [0,1-\eta]$.

\begin{lemma}
  For a $++-$ triangle $T$ that is {\em not bad}
with vertex set $\{a,b,c\}$ and edge set $(ab,ac,bc)$, 
  $$\frac{cost(T)}{lp(T)} = \frac{cost_a(bc) + cost_b(ac) + cost_c(ab)}{lp_a(bc) + lp_b(ac) + lp_c(ab)} \leq \ppmnotbad.$$
  \end{lemma}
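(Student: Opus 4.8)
The plan is to re-run the extremal analysis of the previous ($++-$) lemma under the additional hypothesis that $T$ is not bad, and to show that this hypothesis forces the ratio below $\ppmnotbad = 2-\gamma$. Retain the notation of \eqref{ratio2_ppm}: with $X=y_{ab}$, $Y=y_{ac}$, $Z=y_{bc}$, $A=1-X=x_{ab}$, $B=1-Y=x_{ac}$, $C=1-Z=x_{bc}$, $p=y_{abc}$, the ratio is
\[
\frac{p+\sqrt{C}\,(2X+2Y-2)+A+B}{Z(X-p+Y)+A+B-2AB\sqrt{C}}.
\]
Treating $\sqrt{C}$ as a constant, both numerator and denominator are affine in $X$ for fixed $(Y,Z,p)$ and affine in $Y$ for fixed $(X,Z,p)$ (they are not jointly affine, because of the $XY$ term in $-2AB\sqrt{C}$); since the denominator equals $lp(T)\ge0$, the ratio is a linear-fractional, hence monotone, function of $X$ (resp.\ of $Y$) along the relevant feasible interval once the other three variables are fixed. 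Recall that the previous lemma shows this ratio is $\le2$ always, with equality only at the bad point $A=B=\tfrac12$, $C=1$, $p=0$. A $++-$ triangle fails to be bad exactly when either (i) some $+$edge, say $ab$, has $A=x_{ab}\in[0,\tfrac12-\eta]\cup[\tfrac12+\eta,1]$, or (ii) the $-$edge has $C=x_{bc}\le1-\eta$, i.e.\ $Z\ge\eta$.

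For case (ii) the symmetrization to $A=B$ used in the previous lemma is still legitimate (it does not move $Z$), so its whole reduction chain applies, ending at $\max\bigl(1.7538,\,f(p)\bigr)$ with $f(p)=\dfrac{2p\sqrt{1-p}+1}{1-(1-p)^{2.5}/2}$ and, on the second branch, $z=0$ hence $Z=p$. Imposing $Z=p\ge\eta$, it remains only to check $f(p)\le2-\gamma$ for $p\in[\eta,1]$, which holds because $f$ is decreasing on $[\eta,1]$ and $f(\eta)<2-\gamma$. The only complication is that when $p<\eta$ the constraint $z+p=Z\ge\eta$ forbids $z=0$; one then sets $z=\eta-p$ (the ratio bound being decreasing in $z$, as in the previous proof), which pins $Z=\eta$ and $C=1-\eta$, and finishes with the same endpoint-in-$w$ reduction, now a one- or two-variable verification.

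For case (i), symmetrization is not available, so I would first exploit monotonicity in $X$: on each of the intervals $[0,\tfrac12-\eta]$ and $[\tfrac12+\eta,1]$ the ratio is maximized at an endpoint, so it suffices to treat $X\in\{0,\tfrac12-\eta,\tfrac12+\eta,1\}$. The cases $X=0$ (so $A=1$, which moreover forces $p=0$ since $y_{abc}\le y_{ab}$) and $X=1$ (so $A=0$, killing the $-2AB\sqrt{C}$ term) leave the ratio a function of at most $(Y,Z)$ that is easily checked to be well below $2-\gamma$. For $X=\tfrac12\pm\eta$, with $A=\tfrac12\mp\eta$ now fixed, I would send $Y$ to an endpoint of its feasible interval (monotonicity in $Y$), then reduce the remaining function of $(Z,p)$ either by monotonicity in $p$ or by a two-variable derivative argument of the kind used in the previous lemma, and verify it is $\le2-\gamma$; the extremal sub-configuration should be the perturbed bad point $Y\approx\tfrac12$, $Z\approx0$, where pinning $A$ at $\tfrac12\pm\eta$ supplies exactly the slack $\gamma$.

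The principal obstacle is the case bookkeeping rather than any individual estimate: enumerating the endpoints and sub-cases in case (i) and, for each, carrying the residual one- or two-variable optimization through to an explicit numerical bound. With $\eta=\tfrac1{12}$ the gap $2-\ppmnotbad=\gamma=0.054$ is small, so at each terminal sub-case the bound must be computed (e.g.\ by locating the maximizer via a derivative), not merely estimated.
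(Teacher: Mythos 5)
Your case (ii) essentially follows the paper's route (reduce via the $z=0$ claim and the $w$-endpoint analysis to the one-variable function $f(p)$, then use $y_{bc}=z+p\ge\eta$), and you even notice a wrinkle the paper passes over (when $p<\eta$ one cannot literally set $z=0$ inside the constrained region) — though your fix there, and the monotonicity of $f$ you invoke, are themselves only asserted. The genuine gap is in case (i). Your reduction ``it suffices to treat $X\in\{0,\tfrac12-\eta,\tfrac12+\eta,1\}$'' does not follow from linear-fractionality in $X$: with $(Y,Z,p)$ fixed, monotonicity only pushes $X$ to an endpoint of its \emph{feasible} interval, and that interval is cut not only by the not-bad constraint and $[0,1]$ but also by the local-distribution constraints $y_{ab|c}=X-p\ge 0$ and $y_{a|b|c}=1-X-Y-Z+2p\ge 0$; at a maximizer these can bind at values of $X$ that are none of your four candidates (in the tight configuration of the general $++-$ analysis the binding constraints are precisely $z=0$ and $q=0$), so your enumeration silently drops boundary branches that would need their own analysis. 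Moreover, the terminal subcases $X=\tfrac12\pm\eta$ — the only place where the lemma's quantitative content lives — are not carried out: you defer them to ``a two-variable derivative argument'' and merely predict the extremal configuration. That cannot be waved through here, because the margin is razor thin: the worst value in this case is $2/(1+4\eta^2)\approx 1.9459$ against the target $\ppmnotbad$, so nothing in this branch is ``well below $2-\gamma$'' and the optimization must be executed exactly.

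For contrast, the paper never abandons the symmetrization in case (i): the only step of the general $++-$ proof that used $A=B$ is the bound $2AB\sqrt{C}\le \tfrac12 (2-W)^2\sqrt{C}$, and under the hypothesis $x_{ab}\le \tfrac12-\eta$ (or $\ge\tfrac12+\eta$) this single inequality is replaced by $AB\le\bigl(\tfrac{2-W}{2}-\eta\bigr)\bigl(\tfrac{2-W}{2}+\eta\bigr)$. All the previously established reductions (concavity in $w$, hence $w\in\{0,1-p\}$, then pushing $p$ to $0$) then apply verbatim, and the worst case collapses to the closed-form value $2/(1+4\eta^2)\le\ppmnotbad$. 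This is both shorter and safer than your endpoint scheme, since it reuses the already-proved monotonicity/concavity claims rather than re-deriving a new multi-variable case analysis whose corner cases you have not pinned down.
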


\begin{proof}
We first consider the case in which $x_{ab} \in [0,1/2-\eta]$ or
$x_{ab} \in [1/2 + \eta, 1]$.  
In other words, $y_{ab} \leq 1/2 - \eta$
or $y_{ab} \geq 1/2 + \eta$.  We have the same ratio as
in \eqref{sqrt_last} with one modification.  Let us assume that
$y_{ab} \leq 1/2 - \eta$.  Then the maximum value of $AB$ is
$$AB \leq \left(\frac{2-W}{2} -\eta\right)\left(\frac{2-W}{2} + \eta\right).$$
Then we have
\begin{eqnarray*}
\frac{cost(T)}{lp(T)} & \leq & \frac{p + \sqrt{1-p}(2W - 2) + 2-W}
{p (W-p) + 2-W - \frac{(2-W-2\eta)(2-W + 2\eta) \sqrt{1-p}}{2}
}.  \end{eqnarray*} 
This does not change much from the earlier
analysis: We want to show that this ratio is at most $\ppmnotbad$, so we
only need to check the case in which $w = 1-p$.  In this case, $W = 1+
p$.  The ratio is at most

\begin{eqnarray*}
\frac{\sqrt{1-p}(2p) + 1}
{1 - \frac{(1-p-2\eta)(1-p + 2\eta) \sqrt{1-p}}{2}
}.
  \end{eqnarray*}
To show that this ratio is at most $\ppmnotbad$, we can show, as before, that it
suffices to check the condition when $p=0$.
When $p=0$, we have
\begin{eqnarray*}
\frac{2}
{2 - (1-2\eta)(1+ 2\eta)} & =  & 
\frac{2}
{1 + 4\eta^2}.
  \end{eqnarray*}
When $\eta = 1/12$, this ratio is at most $1.9459 \leq \ppmnotbad$.

The second case is when 
the $-$edge $bc$ has $x_{bc} \in [0, \eta]$.  In this case, $x_{bc}
\leq 1-\eta$.  Therefore, $1-x_{bc} = y_{bc} = z+p \geq \eta$.  
Because, as we have seen in Claim \ref{clm:ppm_zis0_sqrt}, the ratio
is maximized when $z=0$, we just need to compute the ratio \eqref{p_ratio}
for $p = \eta$.  Recall, the ratio is at most
\begin{eqnarray*}
f(p) = \frac{2p\sqrt{1-p} + 1}
{1- \frac{(1-p)(1-p) \sqrt{1-p}}{2}
}.
\end{eqnarray*}
So $f(1/12) = 1.9399 \leq \ppmnotbad$.
\end{proof}

\subsection{Degenerate Triangles}
Let $\{ u, v \}$ be a degenerate triangle. 
\begin{lemma}
$cost(u, v) / lp(u, v) \leq \degrat$. 
\end{lemma}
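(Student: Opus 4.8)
The plan is to treat the single edge $(u,v)$ of the degenerate triangle according to its sign and to evaluate the four quantities $cost_u(u,v)$, $cost_v(u,v)$, $lp_u(u,v)$, $lp_v(u,v)$ directly from the definition of the (idealized) rounding. The one structural fact I will use is that conditioned on $u$ being the pivot we have $u \in S$, so the pair $\{u,v\}$ disappears with probability $1$; hence $lp_u(u,v)$ equals exactly the LP contribution of $(u,v)$, and symmetrically with $v$ as pivot.

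First I would handle the case where $(u,v)$ is a $+$edge. In the idealized rounding $\costi$ all $+$edges are rounded with correlation, so with $u$ as the pivot the vertex $v$ joins $u$'s cluster with probability $y_{uv} = 1 - x_{uv}$; the $+$edge is violated precisely when $v$ is left out, giving $cost_u(u,v) = x_{uv}$, while $lp_u(u,v) = x_{uv}$. The contributions with $v$ as the pivot are identical, so $cost(u,v)/lp(u,v) = (2x_{uv})/(2x_{uv}) = 1$ (reading $0/0$ as $0$ when $x_{uv}=0$, per the convention fixed at the start of the section).

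Next I would handle the case where $(u,v)$ is a $-$edge. With $u$ as the pivot, $v$ joins $u$'s cluster with probability $1 - \sqrt{x_{uv}}$, which is exactly the event that the $-$edge becomes intra-cluster, so $cost_u(u,v) = 1 - \sqrt{x_{uv}}$, whereas $lp_u(u,v) = 1 - x_{uv}$. Adding the symmetric $v$-pivot terms,
\[
\frac{cost(u,v)}{lp(u,v)} = \frac{2\left(1-\sqrt{x_{uv}}\right)}{2\left(1-x_{uv}\right)} = \frac{1}{1+\sqrt{x_{uv}}} \le 1,
\]
again interpreting $0/0$ as $0$ when $x_{uv}=1$.

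I do not expect any genuine obstacle here; the only points requiring care are using the $\costi$ rounding probabilities ($1-x$ for $+$edges, rather than the $1 - x^2/\delta$ rule the actual algorithm applies to short $+$edges, and $1-\sqrt{x}$ for $-$edges) and disposing of the degenerate $0/0$ ratios via the stated convention. With these, both cases give a ratio at most $\degrat$, which proves the lemma.
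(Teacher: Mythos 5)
Your proposal is correct and follows essentially the same route as the paper: compute $\costi_u(u,v)$ and $\lpi_u(u,v)$ directly per pivot, getting ratio exactly $1$ for a $+$edge and $(1-\sqrt{x_{uv}})/(1-x_{uv}) \leq 1$ for a $-$edge (the paper phrases this last step as $1-\sqrt{x_{uv}} \leq 1 - x_{uv}$ rather than simplifying to $1/(1+\sqrt{x_{uv}})$, which is the same bound). The only cosmetic difference is your explicit handling of the $0/0$ cases via the stated convention, which the paper leaves implicit.
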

\begin{proof}
When $(u, v)$ is $+$, $lp(u,v) = cost(u,v) = 2x_{uv}$, so the ratio is $1$.  When $(u, v)$ is $-$, $lp(u, v) = 2y_{uv}$ always. $cost(u,v) = 2(1 - \sqrt{x_{uv}}) \leq 2(1 - x_{uv}) = 2y_{uv}$, so the ratio is at most $\degrat$. 
\end{proof}

\section{Bounds for $\costs(\cdot)/\lps(\cdot)$}
\label{sec:ratio_short}

In this section we bound $\costs(\cdot)/\lps(\cdot)$, proving Lemma~\ref{lem:triangle_short}. We do the case analyses for different types of triangles. 
\subsection{$+++$ Triangles.} 
Suppose that the vertices are $(a, b, c)$ and the LP values are $(x, y, z)$, where $x = x_{bc}, y = x_{ac}$, and $z = x_{ab}$. For the sake of brevity, for the rest of the proof, we let $cost := \costs(a, b, c)$ and $lp := \lps(a, b, c)$. We do the further case analyses depending on how many edges are short. 
\subsubsection{3 short edges}
Note that all edges are rounded independently for every pivot. 
\[
cost = \dy(1 - \dz) + \dz(1 - \dy) + \dx(1 - \dz) + \dz(1 - \dx) + \dx(1 - \dy) + \dy(1 - \dx).
\]
\[
lp = x(1 - \dy \dz) + y(1 - \dx \dz) + z(1 - \dx \dy).
\]
Without loss of generality, assume $0 \leq x \leq y \leq z \leq \delta$. The triangle inequalities impose additional constraints $z \leq x + y \leq 2y$. We want to show that $T(x, y, z) := cost - (2-\delta)lp \leq 0$. Note that for fixed $y$ and $z$, we have that $T$ is convex in $x$, since the coefficient of $x^2$ is 
\[
\frac{2 - 2z^2/\delta - 2y^2/\delta}{\delta} + (2 - \delta)(zy^2 / \delta^2 + yz^2 / \delta^2) > 0.
\]
Therefore, given $y$ and $z$, $T(x, y, z)$ is maximized when $x$ is smallest or largest possible. So $T(x, y, z) \leq \max(T(z - y, y, z), T(y, y, z))$. 

\begin{itemize}
\item First consider the case $x = y$. Let 
\begin{align*}
& T'(y, z) :=  T(y, y, z) = \frac{y^2}{\delta}(4 - \frac{4z^2}{\delta}) - \frac{2y^4}{\delta^2} + \frac{2z^2}{\delta}
- (2 - \delta)\bigg( 2y (1 - \frac{y^2 z^2}{\delta^2}) + z(1 - \frac{y^4}{\delta^2}) \bigg) \\
=&  -2(2 - \delta)y + \frac{4(1 - z^2/\delta)}{\delta}y^2
+ \frac{2(2 - \delta)z^2}{\delta^2} y^3 + y^4 (-2 + (2-\delta)z)/\delta^2 - (2-\delta)z + 2z^2 / \delta.
\end{align*}
Then 
\begin{align*}
\frac{\partial T'(y, z)}{\partial y} := -2(2-\delta) + \frac{8(1-z^2/\delta)}{\delta}y + \frac{6(2- \delta)z^2}{\delta^2}y^2 + y^3 \cdot 4 (-2 + (2-\delta)z)/\delta^2 ,
\end{align*}
and 
\begin{align*}
\frac{\partial^2 T'(y, z)}{\partial y^2} := \frac{8(1-z^2/\delta)}{\delta} + \frac{12(2- \delta)z^2}{\delta^2}y + y^2 \cdot 12(-2 + (2-\delta)z)/\delta^2 > 0,
\end{align*}
which implies that for fixed $z \leq \delta$, the function $T'(y, z)$ is convex for all $y \in [z/2, z]$, which means that $T'(y, z) \leq \max(T'(z/2, z), T'(z, z))$. 
For 
\[
T'(z, z) = 6z^2 / \delta - 6z^4 / \delta^2 - 3(2-\delta)(z - z^5/\delta^2),
\]
we have 
\[
\frac{\partial T'(z, z)}{\partial z^2} = 12 / \delta - 72z^2 / \delta^2 + 60(2-\delta) z^3/\delta^2,
\]
which is nonnegative for every $0 \leq z \leq \delta \leq 0.1$, so $T'(z, z)$ is convex in $z$ and we have $T'(0, 0) = 0$ and $T'(\delta, \delta) = 6\delta - 6\delta^2 - 3(2-\delta)(\delta - \delta^3) = -3\delta^2 +6\delta^3-3\delta^4< 0$. 

Also for
\[
T'(z/2, z) = \frac{3z^2}{\delta} - \frac{9z^4}{8\delta^2}
- (2-\delta)\bigg( 2z - 3z^5/(16\delta^2) \bigg), 
\]
one can similarly prove that it is convex for $z \in [0, \delta]$ and check $T'(0, 0) = 0$ and $T'(\delta/2, \delta) = -2(2-\delta) \delta + 3\delta - (9/8)\delta^2 + (2-\delta)3\delta^3/16 < 0$. 

\item We now consider the case when $x + y = z$, so that 
\[
cost = (\dx + \dy)(2 - 2\dz) - 2\dx\dy + 2\dz.
\]
\[
lp = x + y  - \frac{xy(x+y)z^2}{\delta^2} - z \dx \dy + z = 
2z  - \frac{xyz^3}{\delta^2} - z \dx \dy
\]
So, 
\[
T(z-y, y, z) = (\dx + \dy)(2 - 2\dz) - 2\dx\dy + 2\dz - (2-\delta)\bigg( 2z  - \frac{xyz^3}{\delta^2} - z \dx \dy \bigg) 
\]
For fixed $x + y = z$ with $z \leq \delta \leq 0.1$, 
\[
(\dx + \dy)(2 - \frac{2z^2}{\delta}) + \frac{(2 - \delta)xyz^3}{\delta^2} = (x + y)^2 \bigg( \frac{2 - 2z^2 / \delta}{\delta} \bigg)  + xy \bigg( \frac{(2 - \delta)z^3}{\delta^2} - \frac{2(2 - 2z^2 / \delta)}{\delta} \bigg)
\]
is maximized when $x = 0, y = z$, since the coefficient of $xy$ in the second expression is strictly negative. 
Therefore, it suffices to check 
\[
T(0, z, z) = \dz(2 - 2\dz)  + 2\dz - (2-\delta) 2z = -2(2-\delta)z + 4\frac{z^2}{\delta} - 2\frac{z^4}{\delta^2}.
\]
Again, this function is convex in the interval $[0, \delta]$ and $T(0, 0, 0) = T(0, \delta, \delta) = 0$. 
\end{itemize}

\subsubsection{2 short/1 medium}
Assume that $z$ is medium. Note that all edges are rounded independently for every pivot. Without loss of generality, assume $0 \leq x \leq y \leq \delta \leq z$.  
\[
cost = \dy(1 - z) + z(1 - \dy) + \dx(1 - z) + z(1 - \dx) + \dx(1 - \dy) + \dy(1 - \dx).
\]
\[
lp = x(1 - \dy z) + y(1 - \dx z) + z(1 - \dx \dy).
\]
The triangle inequalities impose additional constraints $z \leq x + y \leq 2y \leq 2\delta$.
We want to show that $T(x, y, z) := cost - (2-\delta)lp \leq 0$. Note that for fixed $y$ and $z$, we have that $T$ is convex in $x$, since the coefficient of $x^2$ is 
\[
\frac{2 - 2z - 2y^2/\delta}{\delta} + (2 - \delta)(zy^2 / \delta^2 + yz / \delta) 
\geq \frac{2 - 4\delta - 2\delta}{\delta} > 0, 
\]
for $\delta \leq 0.1$.  Therefore, given $y$ and $z$, $T(x, y, z)$ is maximized when $x$ is smallest or largest possible. So $T(x, y, z) \leq \max(T(z - y, y, z), T(y, y, z))$. 

\begin{itemize}
\item Let us first consider 
\begin{align*}
&T'(y, z) :=  T(y, y, z) = \frac{y^2}{\delta}(4 - 4z) - \frac{2y^4}{\delta^2} + 2z
- (2 - \delta)\bigg( 2y (1 - \frac{y^2 z}{\delta}) + z(1 - \frac{y^4}{\delta^2}) \bigg) \\
=& -2(2 - \delta)y + \frac{4(1 - z)}{\delta}y^2
+ \frac{2(2 - \delta)z}{\delta} y^3 + y^4 (-2/\delta^2 + (2-\delta)z/\delta^2) + \delta z.
\end{align*}
Then 
\begin{align*}
\frac{\partial T'(y, z)}{\partial y} := -2(2-\delta) + \frac{8(1-z)}{\delta}y + \frac{6(2- \delta)z}{\delta}y^2 + 4y^3 (-2/\delta^2 + (2-\delta)z/\delta^2),
\end{align*}
and 
\begin{align*}
\frac{\partial^2 T'(y, z)}{\partial y^2} := \frac{8(1-z)}{\delta} + \frac{12(2- \delta)z}{\delta}y + 12y^2(-2/\delta^2 + (2-\delta)z/\delta^2) > 0,
\end{align*}
which implies that for fixed $z \leq \delta$, the function $T'(y, z)$ is convex for all $y \in [z/2, \delta]$, which means that $T'(y, z) \leq \max(T'(z/2, z), T'(\delta, z))$. 
For 
\begin{align*}
& T'(\delta, z) = 2\delta(1 - z) + 2\delta(1 - \delta) + 2z(1 - \delta) - (2 - \delta)\bigg( 2\delta(1 - \delta z) + z(1 - \delta^2) \bigg) \\
=& z(2 - 4\delta - (2 - \delta)(1 - 3 \delta^2)) + 4\delta - 2\delta^2 - 2(2 - \delta)\delta \\
=& z(2 - 4\delta - 2 + 6\delta^2 + \delta - 3\delta^3) = z(-3\delta + 6\delta^2 - 3\delta^3)
\end{align*}
it is maximized when $z = \delta$, and $T'(\delta, \delta) > 0$. 

Also for
\begin{align*}
& T'(z/2, z) = 2z + \frac{z^2}{\delta} - \frac{z^3}{\delta} - \frac{z^4}{8\delta^2}
- (2-\delta)\bigg(  2z - \frac{z^4}{4\delta} - \frac{z^5}{16\delta^2} \bigg) \\
=& z(-2 + 2\delta) + z^2/\delta - z^3/\delta + z^4(-(1/8\delta^2) + (2-\delta)/4\delta) + z^5(2 - \delta)/16\delta^2,
\end{align*}
one can similarly prove that it is convex for $z \in [0, 2\delta]$ and check $T'(0, 0) = 0$ and 
\[
T'(\delta, 2\delta) = -4\delta + 4\delta^2 + 4\delta - 8\delta^2 - 2\delta^2 + 4(2 - \delta)\delta^3 + 2(2-\delta)\delta^3 < 0.
\]

\item We now consider the case when $x + y = z$, so that 
\[
cost = (\dx + \dy)(2 - 2z) - 2\dx\dy + 2z.
\]
\[
lp = x + y  - \frac{xy(x+y)z}{\delta} - z \dx \dy + z 
\]
So, 
\[
T(z-y, y, z) = (\dx + \dy)(2 - 2z) - 2\dx\dy + 2z - (2-\delta)\bigg( 2z  - \frac{xyz^2}{\delta} - z \dx \dy \bigg) 
\]
For fixed $x + y = z$ with $z \leq 2\delta \leq 0.3$, 
\[
(\dx + \dy)(2 - 2z) + \frac{(2 - \delta)xyz^2}{\delta} = (x + y)^2 \bigg( \frac{2 - 2z}{\delta} \bigg)  + xy \bigg( \frac{(2 - \delta)z^2}{\delta} - \frac{2(2 - 2z)}{\delta} \bigg)
\]
is maximized when $x = z - \delta, y = \delta$, since the coefficient of $xy$ in the second expression is strictly negative. 
Therefore, it suffices to check 
\[
T(z-\delta , \delta, z) = ((z-\delta)^2/\delta + \delta)(2 - 2z) - 2(z-\delta)^2 + 2z - (2-\delta)\bigg( 2z  - z^2(z-\delta) - z(z-\delta)^2 \bigg).
\]
It is convex in the interval $[\delta, 2\delta]$ with $T'(0, \delta, \delta) = 0$ and $T'(\delta, \delta, 2\delta) < 0$. 
\end{itemize}

\subsubsection{1 short/2 medium}

Let us say $y, z$ are medium. When $a$ is the pivot, the edges $(a, b)$ and $(a, c)$ are rounded with correlation (recall that $y = x_{ac}$ and $z = x_{ab}$). Note that $x = x_{bc} = y_{ab|c} + y_{ac|b} + y_{a|b|c}$.
\begin{align*}
cost &= y_{ab|c} + y_{ac|b} + \dx(1 - z) + z(1 - \dx) + \dx(1 - y) + y(1 - \dx) \\
& \leq x + \dx(1 - z) + z(1 - \dx) + \dx(1 - y) + y(1 - \dx).
\end{align*}
\[
lp = x(1 - y_{a|bc}) + y(1 - \dx z) + z(1 - \dx y).
\]
Then we compose $cost$ and $lp$ into three parts each and bound their ratios.
\begin{itemize}
\item If we consider $x/2 + z(1 - \dx)$ from $cost$ and $z(1 - \dx y)$ from $lp$, then $z(1 - \dx) \leq z(1 - \dx y)$ and $x/2 \leq z/2 = \frac{z(1 - \delta) }{2(1 - \delta)} 
\leq \frac{z(1 - \dx y) }{2(1 - \delta)}$. Therefore, 
\[
x/2 + z(1 - \dx) \leq (1 + \frac{1}{2(1 - \delta)}) z(1 - \dx y).
\]
\item Similarly, 
\[
x/2 + y(1 - \dx) \leq (1 + \frac{1}{2(1 - \delta)}) y(1 - \dx z).
\]
\item Finally, $\dx(1 - z)$ from $cost$ is at most $x(1 - y_{a|bc})$ from $lp$ since $x^2/\delta \leq x$ and $z = y_{a|bc} + y_{a|b|c} + y_{ac|b}$. 
\end{itemize}
Therefore, $cost \leq (1 + \frac{1}{2(1 - \delta)}) lp$. 

\subsubsection{2 long/1 medium, or 3 long}

Assume $z \geq 0.9$. We have
\[
cost = 2x + 2y + 2z - 2xy - 2yz - 2zx
\]
and 
\[
lp = x + y + z - 3xyz
\]

Let $w = x+y$.  Notice that $w \geq
z$ (since $x + y \geq z$).
Consider 
\begin{align*}
& 2 x + 2 y + 2 z - 2 xy - 2 yz - 2 zx - 
1.5(x + y + z - 3 xyz) \\
=& 2z + 2w - 2zw - 2xy - 1.5(z + w - 3zx) = 0.5z + 0.5w - 2zw + xy(-2 + 4.5z).
\end{align*}
For fixed $w$ and $z \geq .9$, it is maximized when $xy$ is maximized,
which occurs when $x$ and $y$ are equal, so we can assume that $x = y
= w/2$.  This yields the expression 
\[
0.5z + 0.5w - 2zw + (-2 + 4.5z)w^2/4
\]
This is linear in $z$, so maximized when $z = 0.9$ or $z = \min(w, 1)$. 
When $z = 0.9$, 
\[
0.45 + 0.5w - 1.8w + (-2 + 4.05)w^2/4
\]
is negative for all $w \in [0.9, 2]$. 
When $z = 1$, 
\[
0.5 + 0.5w - 2w + (-2 + 4.5)w^2/4
\]
is negative for all $w \in [1, 2]$. 
When $z = w$, 
\[
w - 2w^2 + (-2 + 4.5w)w^2/4
\]
is negative for all $w \in [0.9, 1]$.

\subsubsection{1 short/2 long}
Compared to the 1 medium/2 long case, $lp$ increases and $cost$ increases by a factor of at most $1/(1-\delta)$, so the ratio is at most $1.5 / (1 - \delta) \leq 1.6667$. 

\subsubsection{1 short/1 medium/1 long}
Similarly, compared to the 2 medium/1 long case, $lp$ increases and $cost$ increases by a factor of at most $1/(1-\delta)$, so the ratio is at most $1.5 / (1 - \delta) \leq 1.6667$.

\subsubsection{3 medium}

This case is checked in Lemma~\ref{lem:triangle_ideal} and the ratio is at most $\ppprat$. 

\subsubsection{2 medium/1 long}
Let us say $y, z$ are medium. When $a$ is the pivot, the edges $(a, b)$ and $(a, c)$ are rounded with correlation (recall that $y = x_{ac}$ and $z = x_{ab}$). Note that $x = x_{bc} = y_{ab|c} + y_{ac|b} + y_{a|b|c}$.
\begin{align*}
cost &= y_{ab|c} + y_{ac|b} + x(1 - z) + z(1 - x) + x(1 - y) + y(1 - x)
\end{align*}
\[
lp = x(1 - y_{a|bc}) + y(1 - x z) + z(1 - x y).
\]
Focus on the last four terms of $cost$ and the last two terms of $lp$ and consider 
\begin{align*}
& x(1-z)+z(1-x)+x(1-y)+y(1-x) - 1.9 [ y(1 - xz) + z(1 - xy) ] \\
=\, & 2x + (z+y) - 2x(y + z) - 1.9[ y + z - 2xyz].
\end{align*}
For fixed $x$ and $t := y + z$, it is maximized when $y = z = t / 2$, yielding 
\[
2x + t - 2xt - 1.9[ t - xt^2/2] = 
x(2 - 2t + 0.95t^2) - 0.9t.
\]
The coefficient of $x$ is strictly positive, so it is maximized when $x = 1$, so that the expression is at most 
\[
2 - 2.9t + 0.95t^2.
\]
It is negative when $t \geq 1.1$ and at most $0.17$ when $t = 0.9$. (Note that $t \geq x \geq 0.9$.)

For the remaining two terms $y_{ab|c} + y_{ac|b}$ of $cost$ and the first term $x(1 - y_{a|bc})$ of $lp$,
\begin{itemize}
\item $y_{ab|c} + y_{ac|b} \leq \frac{1}{x} \cdot x(1 - y_{a|bc}) \leq 1.9x(1 - y_{a|bc})$. So if $t \geq 1.1$, then the overall ratio is at most $1.9$.
\item If $t \leq 1.1$, then $y_{a|bc} \leq 0.55$ since it contributes to both $y$ and $z$. Therefore, the above inequality $\frac{1}{x}x(1 - y_{a|bc}) \leq 1.9x(1 - y_{a|bc})$ has an additive slack of at least $0.45x(1.9-1/x) \geq 0.3$, which covers the $0.17$ excess. Therefore, the overall ratio is at most $1.9$ in every case. 
\end{itemize}

\subsection{$++-$ Triangles.} Suppose that the vertices are $(a, b, c)$, edge $(a, b)$ is $-$, and the LP values are $(x, y, z)$, where $x = x_{bc}$, $y = x_{ac}$, and $z = x_{ab}$. 
When both + edges are medium, it is handled in Lemma~\ref{lem:triangle_ideal}. Therefore, we only need to handle when either $x$ or $y$ is short or long. Note that in this case, all edges are rounded independently. 

\subsubsection{When a $+$edge is short} 
Assume that $x \leq y$ and $x \leq \delta$.  Let $\alpha = 1.9$ be the targeted ratio. 

\begin{itemize}
\item We first handle the case $x, y \leq \delta$. 
\begin{align*}
cost = & \dy(1 - \sqrt{z}) + \sqrt{z}(1 - \dy) + \dx(1 - \sqrt{z}) + \sqrt{z}(1 - \dx) + (1 - \dx)(1 - \dy) \\
\leq & 2\sqrt{z} + \dy(- 2\sqrt{z}) + \dx(- 2\sqrt{z}) + 1 + \delta^2 =: cost'
\end{align*}
and
\begin{align*}
lp &= x(1 - \dy\sqrt{z}) + y ( 1-\dx \sqrt{z}) + (1 - z)(1 - \dx \dy) \\
&\geq x(1 - \delta \sqrt{z}) + y ( 1- \delta \sqrt{z}) + (1 - z)(1 - \delta^2) =: lp'
\end{align*}
Let $T(x, y, z) = cost' - \alpha lp'$. Since the coefficients of both $z$ and $\sqrt{z}$ are positive, it is maximized when $z = x + y$. 
Given $z = x + y$, $cost$ is maximized when $x^2 + y^2$ is minimized which is the case when $x = y = z/2$. Therefore, 
\[
T(x, y, z) \leq 2\sqrt{z} - \frac{z^{2.5}}{\delta} + 1 + \delta^2 - \alpha (z(1 - \delta \sqrt{z}) + (1-z)(1- \delta^2)),
\]
which is strictly negative for $z \in [0, 2\delta]$. 

\item Assume $x \leq \delta < y$. 
\[
cost = y(1 - \sqrt{z}) + \sqrt{z}(1 - y) + \dx(1 - \sqrt{z}) + \sqrt{z}(1 - \dx) + (1 - \dx)(1 - y)
\]
and 
\[
lp = x(1 - y\sqrt{z}) + y ( 1-\dx \sqrt{z}) + (1 - z)(1 - \dx y).
\]
Let $T(x, y, z) = cost - \alpha lp$. 
The coefficient of $y$ is $1 - \sqrt{z} - \sqrt{z} - (1 - \dx) -
\alpha( -x\sqrt{z} + 1 - \dx \sqrt{z} - \dx(1 - z)) < 0$, which means that $y$
should be minimized.
While decreasing $y$, if it becomes $y = \delta$, then the above case proves the claimed ratio. (We used the fact that $\dy = y$ when $y = \delta$.) 

Then the only other case where we cannot increase $y$ further is when $z = x + y$. For fixed $y$, consider 
$T(x, y, x+y)$ as a function of $x$. 
\begin{align*}
T(x, y, x+y) =& 
1+ \frac{x^2}{\delta}y + 2\sqrt{x+y}(
  1- y - \frac{x^2}{\delta}) \\
 &- \alpha \bigg( 
 1
- \frac{x^2}{\delta}y + (x+y)\frac{x^2}{\delta}y
- xy \sqrt{x+y} 
-y\frac{x^2}{\delta} \sqrt{x+y}
\bigg)
\end{align*}

It is an increasing function in $x$, so the maximum is attained at $x
= \delta$.  To show this, we can take the derivative.  Let $F_y(x) :=
T(x,y,x+y)$.  Recall we have $x+y < 1$.
\begin{eqnarray}
F_y(x) := 1+ \frac{x^2}{\delta}y + 
2\sqrt{x+y}(1- y - \frac{x^2}{\delta})
+ \alpha \left(
-1+ (1 +\sqrt{x+y} - x -y
)\frac{x^2}{\delta}y
+ xy \sqrt{x+y} \right).
\end{eqnarray}

\begin{eqnarray}
F_y'(x) := 
2\frac{x}{\delta}y -
4\sqrt{x+y} \frac{x}{\delta}+
\frac{1}{\sqrt{x+y}}(1- y - \frac{x^2}{\delta})
\\
+ \alpha \left(
 (\frac{1}{2\sqrt{x+y}} - 1)\frac{x^2}{\delta}y
+ (1 +\sqrt{x+y} - x -y)\frac{2x}{\delta}y
+ y \sqrt{x+y} 
+ xy \frac{1}{2\sqrt{x+y}} 
\right).
\end{eqnarray}
For each $y \in [\delta, 1]$ and all $x \in [0,\delta]$, we can show
that $F_y'(x) > 0$ ($F_y(x)$ attains its minimum value of .66 for $x=
.1$ and $y = .18$), which shows that $F_y(x)$ is increasing. 
So we can assume that $x = \delta$, then we have
\begin{eqnarray}
H(y) := 
1+ \delta y + 
2\sqrt{\delta+y}(1- y - \delta)
+ \alpha \left(
-1+ \delta y
 - \delta^2 y
 -\delta y^2
+ 2\delta y \sqrt{\delta+y} \right). 
\end{eqnarray}

\begin{eqnarray}
H'(y) := 
 \delta  -
2\sqrt{\delta+y}
+\frac{1}{\sqrt{\delta+y}}(1- y - \delta)
+ \alpha \left(
\delta 
 - \delta^2
 -2\delta y
+ 2\delta  \sqrt{\delta+y} 
+ \delta y \frac{1}{\sqrt{\delta+y}} 
\right). 
\end{eqnarray}

\begin{eqnarray}
H''(y) := 
- \frac{2}{\sqrt{\delta + y}}
-\frac{1}{2{(\delta+y)}^{3/2}}(1- y - \delta)
+ \alpha \left(
 -2\delta 
+ \frac{2 \delta}{\sqrt{\delta+y}} 
- \delta y \frac{1}{2{(\delta+y)^{3/2}}} 
\right). 
\end{eqnarray}
$H''(y) < 0$ for all $y$,
because $-2+2\delta \alpha < 0$.
So now we set $H'(y) = 0$ and solve for $y$.  
\begin{eqnarray*}
 \delta  -
2\sqrt{\delta+y}
+\frac{1}{\sqrt{\delta+y}}(1- y - \delta)
+ \alpha \left(
\delta 
 - \delta^2
 -2\delta y
+ 2\delta  \sqrt{\delta+y} 
+ \delta y \frac{1}{\sqrt{\delta+y}} 
\right) = 0.
\end{eqnarray*}

This function has one root for $y \in [\delta, 1]$ at $y \approx
.342$.  We can verify that for $x = \delta, y = .342$ and $z = y+x$, the
function $T(x,y,z) \leq 0$.
\end{itemize}

\subsubsection{When a $+$edge is long}  
Now we assume that one $+$edge is long and the other $+$edges is either long or medium. 
Recall that 
\[
cost = (1-x)(1-y) + y (1-\sqrt{z})
  + \sqrt{z} (1-y) + x (1-\sqrt{z}) + \sqrt{z} (1-x)
\]
and
\[
lp = (1-z)(1- x y) + x(1- y \sqrt{z}) +
y(1- x \sqrt{z}),
\]
so that the ratio is 
\begin{eqnarray}
\frac{1 + xy +  2 \sqrt{z}
  - 2 y \sqrt{z} - 2 x \sqrt{z}}
{(1-z)(1- x y) + x(1- y \sqrt{z}) +
y(1- x \sqrt{z}).
}
\end{eqnarray}

So we want to prove the following inequality, assuming triangle inequality on $x,y,z$ and $x
\geq .9$.  
\begin{eqnarray}
\frac{1 + xy + 2 \sqrt{z}
  - 2 y \sqrt{z} - 2 x \sqrt{z}}
{1 - z - xy + xyz + x + y
- 2xy \sqrt{z}}
~ = ~ \frac{1 + xy + 2 \sqrt{z} ( 1 
  - y -  x)}
{1 - z - xy + xyz + x + y
- 2xy \sqrt{z}.
} & \leq & \frac{3}{2}.
\label{newest_ratioppm}
\end{eqnarray}

\begin{eqnarray}
1 + xy + 2 \sqrt{z} ( 1 
  - y -  x) & \leq & \frac{3}{2} \left(1 - z - xy + xyz + x + y
- 2xy \sqrt{z}\right).
\end{eqnarray}

\begin{eqnarray}
-\frac{1}{2} +
 xy + 2 \sqrt{z} ( 1 
  - y -  x + \frac{3}{2}xy) + \frac{3}{2} z( 1 - xy) + \frac{3}{2} xy 
-\frac{3}{2} \left( x + y \right) \leq 0 .
\end{eqnarray}
For $x, y \in [0,1]$, the
coefficients of $z$ and $\sqrt{z}$ are always nonnegative.  Thus, we
can assume that $z= \min\{x+y, 1\}$. 

We consider two cases: $z=1$ and $z = x+y <1$.
Let us first assume that $z=1$. 
We rewrite the ratio as
\begin{eqnarray}
\frac{1 + xy + 2( 1 
  - y -  x)}
{- 2xy + x + y
} = \frac{3 +xy -2x - 2y}
{x+y - 2xy}.
\end{eqnarray}
To upper bound the ratio by $\frac{3}{2}$, it is thus enough to show that
\begin{eqnarray}
3 +xy -2x -2y - \frac{3}{2}(x + y
- 2xy) \le 0.
\end{eqnarray}
Which we can rearrange as
\begin{eqnarray}
3 +xy -2x -2y - \frac{3}{2}x - \frac{3}{2}y + 3xy \leq 0.
\end{eqnarray}

\begin{eqnarray}
3 + 4xy -  \frac{7}{2}x - \frac{7}{2}y \leq 0
\end{eqnarray}
The LHS is linear in $y$ and so maximized for $y=0$ or $y=1$.
It is thus enough to show:
\begin{eqnarray}
3 + 4x  - \frac{7}{2}x - \frac{7}{2} \leq 0,
\end{eqnarray}
which holds for $x \le 1$; and
\begin{eqnarray}
3-  \frac{7}{2}x \leq 0
\end{eqnarray}
which holds for $ x \ge 6/7$, so it also holds when $x \ge .9$.

\vspace{5mm}

Next, we prove the desired upper bound on the ratio
for the case $z = x+y < 1$.
We have
\begin{eqnarray}
\frac{1 + xy + 2 \sqrt{x+y} ( 1 
  - y -  x)}
{1 - xy + xy(x+y)
- 2xy \sqrt{x+y}.
} & \leq & 
\frac{1 + xy  + 2 ( 1 
  - y -  x)}
{1 - xy + xy(x+y)
- 2xy
}. 
\end{eqnarray}
We want to show that this ratio is at most $\frac{3}{2}$.
\begin{eqnarray}
1 + xy  + 2 ( 1 
  - y -  x) \leq \frac{3}{2} (1 - xy + xy(x+y)
- 2xy) \\
3 + xy   - 2y -  2x \leq 
\frac{3}{2} - \frac{3}{2}xy + \frac{3}{2}xy(x+y) - 3xy\\
\frac{3}{2} + \frac{11}{2}xy  \leq 
2y + 2x + \frac{3}{2}xy(x+y).
\end{eqnarray}
So we have
\begin{eqnarray}
F_x(y): = \frac{3}{2} + \frac{11}{2}xy  
-2y -2x - \frac{3}{2}xy(x+y),
\end{eqnarray}
and we want to show that this function is at most 0 when $x \in
[.9,1]$ and $y \in [0,1]$.
Since
$F_x''(y) = -3x,$ the function is always concave in $y$ for any $x$.  
If
\begin{eqnarray}
F_x'(y) = \frac{11}{2}x - 2 - \frac{3}{2}x^2 - 3xy = 0,
\end{eqnarray}
then $y = y^* = \frac{11}{6} - \frac{2}{3x} -\frac{x}{2}$.
However, this value of $y$ is much larger than $.1$, which is the
maximum value of $y$ allowed (i.e., the maximum is outside the
interval $[0,.1]$).  Thus, it suffices to check the extreme values of
$y = 0$ and $y = 1-x$.  When $y = 0$, we have
\begin{eqnarray}
F_x(y): = \frac{3}{2} -2x \leq 0 ,
\end{eqnarray}
when $x \geq 3/4$.  When $y = 1-x$, we have
\begin{eqnarray}
F_x(y): & = &  
\frac{3}{2} + \frac{11}{2}x(1-x)  
-2(1-x) -2x - \frac{3}{2}x(1-x)\\
& = & -\frac{1}{2} - \frac{8}{2}x^2 + \frac{8}{2}x. 
\end{eqnarray}
This is at most 0 when
\begin{eqnarray}
-1 - 8x^2 + 8x \leq 0,
\end{eqnarray}
and it can be verified that this is the case when $x \in [.9,1]$.

\subsection{$+--$ Triangles} Suppose that the vertices are $(a, b, c)$, edge $(b, c)$ is $+$, and the LP values are $(x, y, z)$, where $x = x_{bc}$, $y = x_{ac}$, and $z = x_{ab}$. 

\begin{itemize}
\item $x$ is medium or long: This case is checked in Lemma~\ref{lem:triangle_ideal} and the ratio is at most $\pmmrat$. 

\item $x$ is short: Note that all triangles are rounded independently. 
\[
cost = \sqrt{z}(1-\sqrt{y}) + \sqrt{y}(1-\sqrt{z}) + (1-\dx)(1 - \sqrt{z}) + (1- \dx)(1 - \sqrt{y}),
\]
and
\[
lp = x(1 - \sqrt{y}\sqrt{z}) + (1 - y)(1 - \dx  \sqrt{z}) + (1 - z)(1 - \dx \sqrt{y}). 
\]
Note that the expressions for $cost$ and $lp$ when $x$ is long are identical to the above, except that $\dx$ is replaced by $x$. Since $\dx \leq x \leq \delta$ and $(1 - \dx) \in [1 - x, (1 + \delta)(1 - x)]$, $cost$ for short $x$ is at most $(1+\delta)$ times $cost$ for long $x$, and $lp$ for short $x$ is at least $lp$ for long $x$. Since the ratio $cost / lp$ for medium/long $x$ is at most $\pmmrat$, the ratio for short $x$ is at most $\pmmrat(1 + \delta)$. 
\end{itemize}

\subsection{$---$ Triangles} This case is already checked in Lemma~\ref{lem:triangle_ideal}, and the ratio is at most $\mmmrat$.

\subsection{Degenerate triangles} 
Let $\{ u, v \}$ be a degenerate triangle. 
Compared to degenerate triangles in Lemma~\ref{lem:triangle_ideal}, the only change happens when $(u, v)$ is a short $+$edge, which makes $cost(u, v) = 2x_{uv}^2/\delta \leq 2x_{uv}$. Therefore, the ratio is still at most $\degrat$.

\section{Details of Correlated Rounding}
\label{sec:rt}
In this section, we prove Lemma~\ref{assump:rounding}. 
Recall that given a correlation clustering instance $G = (V, E)$ and a
solution $y$ to the $r$-rounds of Sherali-Adams, we chose a pivot $p
\in V$ and let $I_p$ be the set of vertices that have a medium $+$edge to $p$. We would like to sample a set $S' \subseteq I_p$ such that (1) for each $v \in I_p$, $\Pr[v \in S] = y_{pv}$ and (2) $\E_{u, v \in I_p}[|\Pr[u, v \in S] - y_{puv}|] \leq \eps_r$, where $\eps_r = O(1/\sqrt{r})$.

Note that sampling $S' \subseteq I_p$ is equivalent to making a binary decision for each $v \in I_p$; whether to put $v$ into $S'$ or not. In this interpretation, we can almost directly import the tools for CSPs (with binary alphabets). For sake of completeness, we show how the framework of Raghavendra and Tan~\cite{RT12} used for Max-CSPs with cardinality constraints can be used for our purpose. Similar techniques also have been used for non-constrained CSPs and graph partitioning problems~\cite{GS11, BRS11}. 

Imagine we are interested in a CSP that has $n$ variables $W = \{ v_1, \dots, v_n \}$ where each $v_i$ can have a value in $\{ 0, 1 \}$. (Predicates and objective functions are not important here.) The $r$-rounds of Sherali-Adams for the CSP have variables $x_{S, \alpha}$ for any $S \subseteq W$, $|S| \leq t$ and $\alpha \in \{ 0, 1 \}^S$ (also interpreted as a function $\alpha : S \to \{ 0, 1 \}$) where $x_{S, \alpha}$ denotes the probability that the variables in $S$ are assigned $\alpha$. The following constraints ensure that these {\em local distributions} are consistent. Given $\alpha \in \{ 0, 1\}^S$ and $T \subseteq S$, let $\alpha|_T \in \{ 0, 1 \}^T$ be the restriction of $\alpha$ to $T$. 

\begin{align}
& x_{\emptyset} = 1. \label{eq:sa_csp_one}\\
& x_{T, \beta} = \sum_{\alpha \in \{ 0, 1 \}^S : \alpha|_T = \beta} x_{S, \alpha} && T \subseteq S \subseteq W, |S| \leq r, \beta \in \{ 0, 1 \}^T. \label{eq:sa_csp_two} \\
& x \geq 0. \label{eq:sa_csp_three}
\end{align}

In our setting where $W = I_p$ and $v_i = 1$ indicates that $v_i$ is put into $S'$, it is natural to associate $x_{v_i, 1} = y_{pv_i}$. The following claim shows that such association can be formally defined for higher-level variables as well. 

\begin{claim}
Let $y$ be a solution to the $r$-rounds of Sherali-Adams for \cc, $p \in V$, $W \subseteq V \setminus \{ p \}$ and define $\{ x_{S, \alpha} \}_{S \subseteq W, |S| \leq r - 1, \alpha \in \{ 0, 1 \}^S}$ as 
\[
x_{S, \alpha} = \sum_{\substack{S_1,\ldots, S_{\ell}: \\ S = S_1 \cupdot \ldots \cupdot S_{\ell} \\ \text{and } S_1 = \alpha^{-1}(1)}} y_{S_1 \cup \{ p \}|S_2|\dots|S_{\ell}}. 
\]
Then $x$ satisfies the Sherali-Adams constraints for CSP~\eqref{eq:sa_csp_one},~\eqref{eq:sa_csp_two}, and~\eqref{eq:sa_csp_three} with $(r-1)$ rounds. 
\end{claim}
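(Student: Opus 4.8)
The plan is to give $x$ the obvious probabilistic meaning and then read off the three CSP constraints \eqref{eq:sa_csp_one}, \eqref{eq:sa_csp_two}, \eqref{eq:sa_csp_three} directly from the three Sherali--Adams constraints for \cc. First I would record the following consequence of \eqref{nonneg_y} together with the consistency constraint applied with $T=\emptyset$: for every $U\subseteq V$ with $|U|\le r$, the numbers $y_{\mathcal P}$, as $\mathcal P$ ranges over partitions of $U$ into nonempty parts, are nonnegative and sum to $y_\emptyset=1$, hence define a probability distribution $\mu_U$ over such partitions; and for $U'\subseteq U$ the consistency constraint says precisely that pushing a $\mu_U$-sample forward under the map ``restrict the partition to $U'$'' yields $\mu_{U'}$. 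With this in hand, unwinding the definition of $x_{S,\alpha}$ gives
\[
x_{S,\alpha}=\Pr_{\mathcal P\sim\mu_{S\cup\{p\}}}\big[\,\mathcal P(p)\cap S=\alpha^{-1}(1)\,\big],
\]
where $\mathcal P(p)$ denotes the part of $\mathcal P$ containing $p$: the partitions $\{S_1\cup\{p\},S_2,\dots,S_\ell\}$ appearing in the sum are exactly the partitions of $S\cup\{p\}$ whose $p$-part is $\alpha^{-1}(1)\cup\{p\}$, including the case $\alpha\equiv 0$ in which $p$ is a singleton part. This is well defined because $|S\cup\{p\}|\le r$, i.e.\ $|S|\le r-1$, which is exactly why one only gets $(r-1)$ rounds for the CSP.

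Given this reformulation, \eqref{eq:sa_csp_one} and \eqref{eq:sa_csp_three} are immediate: $x_\emptyset$ is the total mass of $\mu_{\{p\}}$, which is $1$ (the only partition of $\{p\}$ is $\{\{p\}\}$ and $y_{\{p\}}=y_\emptyset=1$), and each $x_{S,\alpha}$ is a probability. The content is \eqref{eq:sa_csp_two}. Fix $T\subseteq S\subseteq W$ with $|S|\le r-1$ and $\beta\in\{0,1\}^T$. The key observation is that for any partition $\mathcal P$ of $S\cup\{p\}$, restricting $\mathcal P$ to $T\cup\{p\}$ sends the $p$-part $\mathcal P(p)$ to $(\mathcal P(p)\cap T)\cup\{p\}$; consequently the event $\{\mathcal P(p)\cap T=\beta^{-1}(1)\}$ depends only on the restriction of $\mathcal P$ to $T\cup\{p\}$, and it is the disjoint union, over all $\alpha\in\{0,1\}^S$ with $\alpha|_T=\beta$, of the events $\{\mathcal P(p)\cap S=\alpha^{-1}(1)\}$. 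Hence
\[
\sum_{\alpha:\ \alpha|_T=\beta}x_{S,\alpha}
=\Pr_{\mathcal P\sim\mu_{S\cup\{p\}}}\big[\mathcal P(p)\cap T=\beta^{-1}(1)\big]
=\Pr_{\mathcal Q\sim\mu_{T\cup\{p\}}}\big[\mathcal Q(p)=\beta^{-1}(1)\cup\{p\}\big]
=x_{T,\beta},
\]
where the middle equality is the marginal consistency of $\mu_{S\cup\{p\}}$ and $\mu_{T\cup\{p\}}$ on the subset $T\cup\{p\}$ (legitimate since, as just noted, the event only sees the restriction to $T\cup\{p\}$).

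I expect no real obstacle: the whole argument is bookkeeping around the bijection between the index set of the sum defining $x_{S,\alpha}$ and partitions of $S\cup\{p\}$ with a prescribed $p$-part, and around the fact that restriction of partitions intertwines the two levels. The only places to be slightly careful are the re-indexing of the double sum $\sum_{\alpha|_T=\beta}\sum_{S=S_1\cupdot\cdots\cupdot S_\ell,\,S_1=\alpha^{-1}(1)}$ as a single sum over partitions of $S\cup\{p\}$ whose $p$-part meets $T$ in $\beta^{-1}(1)$, and the (harmless) edge cases $\alpha\equiv 0$ and $\alpha\equiv 1$; once the distribution-theoretic picture is set up these are routine.
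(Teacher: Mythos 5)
Your proposal is correct and is essentially the paper's argument in probabilistic clothing: interpreting the Sherali--Adams values as local distributions over partitions and identifying $x_{S,\alpha}$ with the probability that $p$'s part meets $S$ in $\alpha^{-1}(1)$ is exactly the regrouping the paper performs by explicit sum manipulation, and both hinge on the same consistency constraint applied to $T\cup\{p\}\subseteq S\cup\{p\}$ (which is why $|S|\le r-1$ suffices). Your treatment of the edge cases ($\alpha\equiv 0$, $S=\emptyset$) and of nonnegativity matches what the paper leaves implicit, so there is no gap.
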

\begin{proof}
The only nontrivial constraint is~\eqref{eq:sa_csp_two}. Fix $T \subseteq S \subseteq W$ with $|S| \leq r - 1$ and $\beta \in \{ 0, 1 \}^T$. Let $T_1 = \beta^{-1}(1)$ and $T_0 = \beta^{-1}(0)$. 
\begin{align*}
x_{T, \beta} &= \sum_{\substack{T_2,\ldots, T_{\ell}: \\ T_0 = T_2 \cupdot \ldots \cupdot T_{\ell}}} y_{T_1 \cup \{ p \}|T_2|\dots|T_{\ell}} \\
&= \sum_{\substack{T_2,\ldots, T_{\ell}: \\ T_0 = T_2 \cupdot \ldots \cupdot T_{\ell}}}
\sum_{\substack{S_1, \dots, S_{q} : \\ S = S_1 \cupdot  \ldots \cupdot S_{q} \mbox{ and} \\ T_i = S_i \cap T~ \forall i \in [\ell]}}
 y_{S_1 \cup \{ p \}|S_2|\dots|S_{q}} \\
 & = \sum_{S_1 : T_1 = S_1 \cap T}
  \bigg(
 \sum_{\substack{T_2,\ldots, T_{\ell}: \\ T_0 = T_2 \cupdot \ldots \cupdot T_{\ell}}}
\sum_{\substack{S_2, \dots, S_{q} : \\ S = S_1 \cupdot  \ldots \cupdot S_{q} \mbox{ and} \\ T_i = S_i \cap T~ \forall i \in \{ 2, \dots, \ell \} }}
 y_{S_1 \cup \{ p \}|S_2|\dots|S_{q}} \bigg) \\
  & = \sum_{S_1 : T_1 = S_1 \cap T}
  \bigg(
\sum_{\substack{S_2, \dots, S_{q} : \\ S = S_1 \cupdot  \ldots \cupdot S_{q}}}
 y_{S_1 \cup \{ p \}|S_2|\dots|S_{q}} \bigg) \\
      & = \sum_{S_1 : T_1 = S_1 \cap T}
   \sum_{\alpha \in \{ 0 , 1\}^S : \alpha^{-1}(1) = S_1} x_{S, \alpha} 
   = \sum_{\alpha \in \{ 0 , 1\}^S : \alpha|_T = \beta} x_{S, \alpha}.
 \end{align*}
\end{proof}

Therefore, $\{ x_{S, \alpha} \}$ is a valid solution to the $(r-1)$ rounds of the Sherali-Adams hierarchy (for CSPs). Then in order to finish Lemma~\ref{assump:rounding} it suffices to give a randomized rounding algorithm that outputs $0$-$1$ random variables $\{ X_v \}_{v \in W}$ such that $\E[X_v] = x_v = y_{pv}$ for each $v \in W$ and $\E_{u, v \in W} \E[X_u X_v] = \E_{u, v} [x_{uv}] \pm \eps_r  = \E_{u, v}[y_{puv}] \pm \eps_r$. At this point, Theorem 4.6 of Raghavendra and Tan~\cite{RT12} shows that such rounding a exists. For sake of completeness, we reproduce their proof here. 

Their rounding is to (1) carefully choose the seed set $S \subseteq W$ with $|S| \leq r - 2$, (2) round $\{ X_v \}_{v \in S}$ according to the joint distribution $\{ x_{S, \alpha} \}_{\alpha \in \{0,1\}^S}$, and (3) for each $u \in W \setminus S$, independently round $X_u$ from the conditional distribution given the rounded values for $\{ X_v \}_{v \in S}$. (Since $x$ is a solution for $r-1$ rounds and $|S| \leq r - 2$, the conditional rounding is possible.)

\cite{RT12} showed how to find a good seed and analyzed the performance of the rounding using entropy. Recall that for $0$-$1$ random variables $X$ and $Y$, their entropy, mutual entropy, and conditional entropy are defined as 
\begin{align*}
H(X) &:= -\sum_{i \in \{ 0, 1 \}} \Pr[X = i] \log \Pr[X = i], \\
I(X; Y) &:= \sum_{i, j \in \{ 0, 1 \}} \Pr[X = i, Y = j]\log \frac{\Pr[X = i, Y = j]}{\Pr[X = i]\Pr[X = j]}, \\
H(X|Y) &:= \sum_{i \in \{ 0, 1 \}} \Pr[Y = i] H(X | Y = i).\\
\end{align*}
The mutual information and the pairwise correlation can be related as follows. 
\begin{claim} [Fact 4.3 of~\cite{RT12}] 
For any $i, j \in \{ 0, 1 \}$, 
\[
|\Pr[X = i, Y = j] - \Pr[X = i]\Pr[Y = j]| \leq \sqrt{2I(X; Y)}.
\]
When $i = j = 1$, this implies that 
\[
|\E[XY] - \E[X]\E[Y]| \leq \sqrt{2I(X; Y)}.
\]
\end{claim}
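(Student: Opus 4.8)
The plan is to derive the inequality from Pinsker's inequality. The key observation is that the mutual information $I(X;Y)$ is exactly the Kullback--Leibler divergence $D(P_{XY} \Vert P_X \otimes P_Y)$ between the joint distribution $P_{XY}$ of the pair $(X,Y)$ on $\{0,1\}^2$ and the product $P_X \otimes P_Y$ of its two marginals: comparing the definition of $I(X;Y)$ given above with the definition of $D$ term by term shows they are literally the same sum.

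With this identification in hand, I would invoke Pinsker's inequality in the form $D(P \Vert Q) \ge 2\, d_{\mathrm{TV}}(P,Q)^2$, valid for any two probability distributions $P,Q$ on a common finite set, where $d_{\mathrm{TV}}(P,Q) = \tfrac12\sum_{\omega} |P(\omega) - Q(\omega)|$ is the total variation distance. (This holds whether the logarithm in the definition of $D$ is natural or base $2$, since passing to base $2$ only scales $D$ up by $1/\ln 2 > 1$.) Applying it with $P = P_{XY}$ and $Q = P_X \otimes P_Y$ gives $d_{\mathrm{TV}}(P_{XY}, P_X \otimes P_Y) \le \sqrt{I(X;Y)/2}$. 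For any fixed $(i,j)$, the quantity $|\Pr[X=i, Y=j] - \Pr[X=i]\Pr[Y=j]|$ is the discrepancy of these two distributions on the singleton event $\{(i,j)\}$, hence it is at most $d_{\mathrm{TV}}(P_{XY}, P_X \otimes P_Y) \le \sqrt{I(X;Y)/2} \le \sqrt{2 I(X;Y)}$, which is the first claim (in fact with a stronger constant). For the second assertion, specialize to $i = j = 1$ and use $\E[XY] = \Pr[X=1, Y=1]$, $\E[X] = \Pr[X=1]$, and $\E[Y] = \Pr[Y=1]$.

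If one prefers a self-contained argument rather than citing Pinsker, there is a short direct proof in this binary setting. Writing $a = \Pr[X=1]$, $b = \Pr[Y=1]$, $c = \Pr[X=1, Y=1]$ and $\epsilon = c - ab$ (and noting the trivial case $a \in \{0,1\}$, where $\epsilon = 0$), a quick check shows each of the four numbers $\Pr[X=i, Y=j] - \Pr[X=i]\Pr[Y=j]$ equals $\pm\epsilon$. Moreover $I(X;Y) = H(Y) - H(Y \mid X)$ is the Jensen gap of the binary entropy function $h$ evaluated at the conditional probabilities $c/a$ and $(b-c)/(1-a)$ with weights $a$ and $1-a$ (whose convex combination is $b$). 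Since $h''(t) = -1/(t(1-t)) \le -4$ in nats, the function $h$ is $4$-strongly concave, so a second-order Taylor estimate bounds this gap below by $2\epsilon^2/(a(1-a)) \ge 8\epsilon^2$, which again yields $|\epsilon| \le \sqrt{2 I(X;Y)}$ with room to spare. Either route reduces to a single routine computation once the identification with KL divergence (respectively, with the Jensen gap) is made, so I do not expect a genuine obstacle; the only real decision is how much of the information-theoretic toolkit to treat as known versus to reprove from scratch.
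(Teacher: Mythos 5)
Your proposal is correct. Note that the paper does not prove this claim at all: it is imported verbatim as Fact 4.3 of \cite{RT12}, so there is no in-paper argument to compare against. Your first route (identify $I(X;Y)$ with the KL divergence $D(P_{XY}\,\|\,P_X\otimes P_Y)$ and apply Pinsker's inequality, bounding the single-event discrepancy by the total variation distance) is exactly the standard derivation behind the cited fact, and your handling of the logarithm base is the right observation — the inequality only improves when $D$ is measured in bits. Your second, self-contained argument also checks out: the four entries of $P_{XY}-P_X\otimes P_Y$ are indeed $\pm\epsilon$ with $\epsilon = c - ab$, and writing $I(X;Y) = h(b) - \bigl(a\,h(c/a) + (1-a)\,h((b-c)/(1-a))\bigr)$ as a Jensen gap of the $4$-strongly concave binary entropy gives $I \geq 2\epsilon^2/(a(1-a)) \geq 8\epsilon^2$, since the variance of the two-point distribution is $\epsilon^2/(a(1-a))$; both routes in fact yield constants stronger than the stated $\sqrt{2I(X;Y)}$. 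The only cosmetic point is that the paper's displayed definition of $I(X;Y)$ has a typo ($\Pr[X=j]$ should be $\Pr[Y=j]$), which you implicitly and correctly read as the usual mutual information.
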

For a seed $S$, let $X_S = \{ x_v \}_{v \in S}$. We want to find a
good seed $S$ with $|S| \leq r-2$ such that $\E_{u, v \in W}
\sqrt{2I(X_u ; X_v | X_S)}$ is small.
The following lemma guarantees that there exists a good seed.
\begin{lemma}
There exists $t \leq r - 3$ such that 
\[
\E_{w_1, \dots, w_t \in W} \E_{u, v \in W} [I(X_u ; X_v | X_{w_1}, \dots, X_{w_t})] \leq 1/(r - 2).
\]
\end{lemma}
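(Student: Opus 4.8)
The plan is to run the standard ``random seed'' argument via the chain rule for entropy, exactly as in Raghavendra--Tan. For each integer $t \ge 0$, define the potential
\[
f(t) := \E_{w_1, \dots, w_t \in W}\Big[\, \tfrac{1}{|W|} \sum_{u \in W} H(X_u \mid X_{w_1}, \dots, X_{w_t}) \,\Big],
\]
where $w_1, \dots, w_t$ are drawn independently and uniformly from $W$ (so $f(0) = \tfrac{1}{|W|}\sum_{u \in W} H(X_u)$). Since each $X_u$ takes values in $\{0,1\}$, every conditional entropy appearing here lies in $[0, 1]$, so $0 \le f(t) \le 1$; and since conditioning never increases entropy, $f$ is non-increasing in $t$.

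The key point is that one more step of conditioning decreases the potential by precisely the averaged mutual information we wish to bound. Using the identity $H(X_u \mid Z) - H(X_u \mid Z, X_v) = I(X_u; X_v \mid Z)$ and averaging over a fresh variable $v$ drawn uniformly from $W$ in the role of $w_{t+1}$, one gets
\[
f(t) - f(t+1) = \E_{w_1, \dots, w_t \in W}\; \E_{u, v \in W}\big[\, I(X_u; X_v \mid X_{w_1}, \dots, X_{w_t}) \,\big].
\]
Telescoping then gives $\sum_{t=0}^{r-3} \big(f(t) - f(t+1)\big) = f(0) - f(r-2) \le 1$, a sum of $r-2$ nonnegative terms, so by averaging there is some $t \in \{0, 1, \dots, r-3\}$ with $f(t) - f(t+1) \le 1/(r-2)$. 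For that value of $t$ the displayed identity is exactly the claimed inequality, and $t \le r-3$ leaves room for the two extra indices $u, v$, so every conditional distribution referenced is supported on at most $r-1$ variables and is thus read off from the $(r-1)$-round Sherali--Adams solution.

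There is no serious obstacle here; every step is a routine application of standard information-theoretic identities. The only things to be careful about are bookkeeping details: that identifying $v$ with a uniformly random $w_{t+1}$ is legitimate (the cases $v = u$ or $v \in \{w_1,\dots,w_t\}$ are harmless, both sides of the identity remaining consistent), that the cutoff $t \le r-3$ is what keeps the subsequent conditional rounding well-defined, and that the potential is bounded by $1$ because the binary entropy of a $0$-$1$ variable never exceeds $1$.
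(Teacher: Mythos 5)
Your proposal is correct and follows essentially the same route as the paper: the chain-rule identity $H(X_u \mid X_{w_1},\dots,X_{w_t}) = H(X_u \mid X_{w_1},\dots,X_{w_{t-1}}) - I(X_u; X_{w_t} \mid X_{w_1},\dots,X_{w_{t-1}})$, telescoping over $t$, bounding the total drop by $1$ since each $X_u$ is binary, and averaging to extract one good $t \leq r-3$. The extra bookkeeping remarks (identifying $v$ with a fresh uniform $w_{t+1}$, the role of the cutoff $t \leq r-3$) match the paper's intent and introduce no gap.
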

\begin{proof}
By linearity of expectation, we have that for any $t \leq r - 2$, 
\begin{align*}
&\E_{u, w_1, \dots, w_t}[H(X_u|X_{w_1}, \dots, X_{w_t})]\\
= &
\E_{u, w_1, \dots, w_t}[H(X_u|X_{w_1}, \dots, X_{w_{t-1}})]
-
\E_{w_1, \dots, w_{t-1}} \E_{u, w_t} [I(X_u ; X_{w_t} | X_{w_1}, \dots, X_{w_{t-1}})]
\end{align*}

adding the equalities from $t = 1$ to $t = r - 2$, the lemma follows since
  \[\hspace{-2mm}
1 \geq \E_{u \in W}[H(X_u)]  - \E_{u, w_1, \ldots, w_{r - 2} \in W}[H(X_u | X_{w_1}, \dots, X_{w_{r-2}})]
= \sum_{1 \leq t \leq r - 2} \E_{u, v, w_1, \dots, w_{t-1}} [ I(X_u ; X_v | X_{w_1}, \ldots, X_{w_{t-1}})].
\]
\end{proof}
Therefore, there exists $S \subseteq W$ with $|S| \leq r - 3$ such that 
$\E_{u,v}[I(X_u ; X_v | X_S)] \leq 1 / (r-2)$. Find such an $S$ by
exhaustive search, sample $\alpha_S \in \{ 0, 1 \}^S$ from the local
distribution 
(i.e., according to the convex combination of
solutions on $S$), 
let $X_S = \alpha_S$, and for each $v \in W \setminus S$, round $X_v$ independently conditioned on $X_S = \alpha_S$. (One can also interpret that $X_v$ for $v \in S$ is also independently rounded again conditioned on $X_S = \alpha_S$, though this rounding does not change anything.) 
Note that for any $v \in W$, the marginal is exactly preserved; i.e., $\E[X_v] = \E_{\alpha_S} \E[X_v | X_S = \alpha_S] = x_v$. Finally,
\begin{align*}
& \E_{u, v \in W} |\E[X_u X_v] - x_{uv}| \\
= & \E_{u, v \in W} |\E_{\alpha_S} (\E[X_u |X_S = \alpha_S] \E[X_v | X_S = \alpha_S] -  \E[X_u X_v | X_S = \alpha_S]) | \\
\leq & \E_{u, v \in W} \E_{\alpha_S} |\E[X_u |X_S = \alpha_S] \E[X_v | X_S = \alpha_S] -  \E[X_u X_v | X_S = \alpha_S] | \\
\leq & \E_{u, v \in W} \E_{\alpha_S} \sqrt{2I(X_u ; X_v | X_S = \alpha_S)} \\
\leq &\E_{u, v \in W} \sqrt{2I(X_u ; X_v | X_S)} \\
\leq & \sqrt{\E_{u, v \in W} 2I(X_u ; X_v | X_S)} \leq O(1/\sqrt{r}). 
\end{align*}

\section{Derandomization}
\label{sec:derandomization}
In this section, we show that our algorithm can be derandomized. Fix one iteration with $G = (V, E)$. For any $p \in V$, Section~\ref{sec:rt} shows how to deterministically find of a good seed set $T_p \subseteq I_p$ with $|T_p| \leq r - 2$. Then the algorithm for one iteration can be abstractly described as follows.
\begin{enumerate}
\item Sample $p \in V$. Recall $I_p = \{ u : (p, u)\mbox{ is medium }+ \}$. Let $S \leftarrow \emptyset$. 
\item For each $u \in V \setminus (I_p \cup \{ p \})$, independently decide $S \leftarrow S \cup \{ u \}$ or not with the probability depending on $x_{pv}$. 
\item Sample $T', S' \subseteq I_p$ as follows.
\begin{itemize}
\item Sample $T' \subseteq T_p$ according to the local distribution of the Sherali-Adams solution induced by $T_p \cup \{ p \}$. 
\item For each $u \in I_p \setminus T_p$, independently decide $S' \leftarrow S' \cup \{ u \}$ or not with the probability according the local distribution of the Sherali-Adams solution induced by $T_p \cup \{ p, u\}$ (conditioned on $T'$). 
\end{itemize}
\item Make $\{ p \} \cup S \cup T' \cup S'$ as a new cluster. 
\end{enumerate}

From the description, it is clear that $\cost_p(u,  v)$ and $\lp_p(u,  v)$ can be deterministically computed in polynomial time; once $T_p$ is given, one can go over each possible $T' \subseteq T_p$ (there are at most $2^{|T_p|} \leq 2^r$ choices), and the rest of the rounding is independent for each vertex. 

Since Lemma~\ref{lem:final} prove that $ALG/LP \leq \ratio + \eps$, there exists $p \in V$ such that 
\[
\frac{\sum_{(u, v) \in E} \costr_p(u, v)}{\sum_{(u, v) \in E} \lpr_p(u, v)}
\]
is at most $\ratio + \eps$, and one can deterministically compute such $p$ since $\cost_p(u, v)$ and $\lp_p(u, v)$ are already computed.

Once $p$ is chosen, for each possible $T' \subseteq T_p$, one can compute the expected value of $\sum_{(u, v) \in E} \costr_p(u, v)$ and $\sum_{(u, v) \in E} \lpr_p(u, v)$ conditioned on $T'$. Find a $T'$ that makes the ratio still $\ratio + \eps$. Conditioned on $T'$, the rest of the rounding is independent for every vertex $V \setminus (T_p \cup \{ p \})$, and one can continue to apply the method of conditional expectations for each vertex to decide whether it belongs to $p$'s cluster or not. At the end, we deterministically compute a cluster including $p$ whose removal incurs the cost of $\alpha$ and decreases the remaining LP value by $\beta$, where $\alpha \leq (\ratio + \eps)\beta$. Iterating this method for every iteration until the end ensures that the total cost is at most $(\ratio + \eps)$ times the original LP value.

\section{Acknowledgements}
We thank Shi Li for pointing out a missing case in Claim
\ref{clm:ppm_zis0_sqrt}.
  
\bibliographystyle{alpha}
\bibliography{refs}

\end{document}